\documentclass [10pt]{amsart}
\usepackage{amsmath}
\usepackage{amssymb}
\usepackage{amscd}
\usepackage{epsfig}
\usepackage{xcolor}
\usepackage{amsbsy}
\usepackage{scalerel,stackengine}

\usepackage[
colorlinks,
citecolor=red,
urlcolor=blue,
linkcolor=red,
allcolors=black,
backref=page
]{hyperref}

\stackMath
\newcommand\reallywidehat[1]{%
\savestack{\tmpbox}{\stretchto{%
  \scaleto{%
    \scalerel*[\widthof{\ensuremath{#1}}]{\kern-.6pt\bigwedge\kern-.6pt}%
    {\rule[-\textheight/2]{1ex}{\textheight}}%WIDTH-LIMITED BIG WEDGE
  }{\textheight}%
}{0.5ex}}%
\stackon[1pt]{#1}{\tmpbox}%
}

\newcommand\reallywidecheck[1]{%
\savestack{\tmpbox}{\stretchto{%
  \scaleto{%
    \scalerel*[\widthof{\ensuremath{#1}}]{\kern-.6pt\bigwedge\kern-.6pt}%
    {\rule[-\textheight/2]{1ex}{\textheight}}%WIDTH-LIMITED BIG WEDGE
  }{\textheight}%
}{0.5ex}}%
\stackon[1pt]{#1}{\scalebox{-1}{\tmpbox}}%
}

\numberwithin{equation}{section}

\allowdisplaybreaks

\newcommand{\card}{\mbox{\rm card}}

\newcommand{\supp}{\mbox{\rm supp}}

\newcommand{\RR}{{\mathbb R}}
\newcommand{\QQ}{{\mathbb Q}}
\newcommand{\ZZ}{{\mathbb Z}}
\newcommand{\CC}{{\mathbb C}}
\newcommand{\AAA}{{\mathbb A}}
\newcommand{\TT}{\mathbb T}
\newcommand{\NN}{\mathbb N}

\newcommand{\XX}{\mathbb X}

\newcommand{\cL}{{\mathcal L}}

\newcommand{\mm}{\mathsf{m}}

\newcommand{\ee}{\mathsf{e}}

\newcommand{\dd}{\mbox{\rm d}}

 \newtheorem{theorem}{Theorem}[section]
 \newtheorem{lemma}[theorem]{Lemma}
 \newtheorem{proposition}[theorem]{Proposition}
 \newtheorem{corollary}[theorem]{Corollary}

\theoremstyle{definition}
 
 \newtheorem{conj}[theorem]{Conjecture}
 \newtheorem{quest}[theorem]{Question}
 \newtheorem{definition}[theorem]{Definition}
 
 \newtheorem{remark}[theorem]{Remark}

\begin{document}

\title{Pseudorandomness and Diffraction}

\author[D.\ Damanik]{David Damanik}

\address{Department of Mathematics, Rice University, Houston, TX~77005, USA}

\email{damanik@rice.edu}

\thanks{D.\ D.\ was supported in part by NSF grant DMS--2349919. N.\ S.\ was supported by NSERC grant 2024-04853. }

\author[N.\ Strungaru]{Nicolae Strungaru}

\address{Department of Mathematics and Statistics, MacEwan University, 10700 -- 104 Avenue, Edmonton, AB, T5J 4S2, Canada}

\email{strungarun@macewan.ca}

\keywords{pseudorandomness, diffraction, Schr\"odinger operators}

\begin{abstract}
Pseudorandom structures are structures that behave like random ones, without necessarily being random themselves. It is a prominent, and evidently very difficult, conjecture that
$$
V(n) = \lambda \cos \Big( 2 \pi \Big( x_1 + n x_2 + \frac{n(n-1)}{2} \alpha \Big) \Big)
$$
is pseudorandom from a Schr\"odinger operator perspective in that the Schr\"odinger operator in $\ell^2(\ZZ)$ with potential $V$ displays Anderson localization, that is, pure point spectrum with exponentially decaying eigenfunctions for almost all parameter values --- the same spectral features as those produced by random potentials. We show that $V$ is pseudorandom in terms of its diffraction properties, that is, the associated diffraction measure is purely absolutely continuous for all $\lambda \not= 0$, all irrational $\alpha$, and all $x_1,x_2$ --- which is the case as well for the random case. Our result gives further evidence for the conjecture in the Schr\"odinger case and it elucidates the apparent dual behavior of Schr\"odinger spectral measures and diffraction measures.
\end{abstract}

\maketitle

%\tableofcontents

\section{Introduction and Motivation}

The study of order/disorder properties of structures goes back a long time. Most relevant to the present discussion are the Nobel Prize winning works on the phenomenon of Anderson localization in 1958 by P.~Anderson \cite{A58} (1977 Nobel Prize in Physics) and the discovery of quasicrystals in 1984 by D.~Shechtman \cite{SBGC84} (2011 Nobel Prize in Chemistry).

Accordingly, we will consider two ways of studying signatures of order/disorder properties.

Anderson studied the localization of quantum states by random environments, and hence the mathematical objects to consider are Schr\"odinger operators with random potentials in suitable $\ell^2$ spaces. Localization features are reflected on a spectral level by such operators having pure point spectral measures with exponentially decaying eigenfunctions. For textbook treatments of Anderson localization, see for example \cite{AW15, CS14, DF24, DKKKR08, S01}.

Shechtman discovered structures whose diffraction is characterized by having bright spots (so-called \emph{Bragg peaks}) along with rotational symmetries that are impossible for lattices in two or three dimensions. Thus, these structures share crucial features with crystals (which display Bragg peaks in their diffraction as well), while lacking spatial periodicity (as classically required of crystals). They are therefore typically referred to as \emph{quasicrystals}. Mathematically, diffraction is modeled via the so-called \emph{diffraction measure}, and its pure point nature is reflective of the order properties of the underlying structure. For an introduction to diffraction, and more generally aperiodic order, see for example \cite{TAO}.

Thus, in these two settings, measures are associated with structures, the spectral measures in the Schr\"odinger case and the diffraction measure in the diffraction case. The key results above demonstrate that the pure point nature reflects disorder in the Schr\"odinger case and order in the diffraction case.

To contrast this, note that Schr\"odinger spectral measures are purely absolutely continuous in the periodic case, whereas the Bartlett diffraction measure (which subtracts off the trivial point mass at zero) is purely absolutely continuous in the random case.

This can be interpolated! Indeed, more generally and based on a multitude of existing results, it is by now well understood that Schr\"odinger spectral measures become more singular with increasing disorder, whereas diffraction measures become more regular with increasing disorder.

On the order-disorder spectrum, the extremal cases (periodic and random) are well understood in both settings. In the intermediate region, many cases are understood, while it is also true that many other cases are not yet understood. Let us mention some of the results. On the diffraction side, it is of course of great interest to exhibit or even characterize structures with pure point diffraction measure. This is a recent result by Lenz-Spindeler-Strungaru \cite{LSS}. Cases with singular continuous or absolutely continuous diffraction are discussed in the monograph \cite{TAO}. On the Schr\"odinger side, the survey \cite{D17} and the monographs \cite{DF22, DF24} contain a plethora of results, again showing that all spectral types occur, and do so in accordance with the general philosophy that increased disorder corresponds to more singular spectral measures. However, there is no analogous characterization of all structures that are ordered in the sense of having absolutely continuous spectral measures. There used to be a conjecture in this direction, the Kotani-Last conjecture (see the discussion in \cite{DF24}), but it has been disproved by A.~Avila \cite{A15}, Volberg-Yuditskii \cite{VY14} (see also Damanik-Yuditskii \cite{DY16}), and You-Zhou \cite{YZ15}. Nevertheless, there are many aperiodic Schr\"odinger operators with purely absolutely continuous spectral measures.

Given the existing understanding of the \emph{quasi-ordered} case, meaning structures that share the spectral type of periodic structures, it is of course equally interesting to understand the phenomenon of \emph{pseudorandomness}, meaning structures that share the spectral type of the random case. One is naturally interested in how far this can be pushed, that is, one seeks those structures with the least amount of disorder that are still pseudorandom in this sense.

To that end, let us recall the following conjecture from Bourgain \cite[Section~XV]{B05} (see also Spencer \cite{S90} and Han-Lemm-Schlag \cite{HLS20a, HLS20b}). Consider the Schr\"odinger operator
\begin{equation}\label{e.SO}
[H \psi](n) = \psi(n+1) + \psi(n-1) + V(n) \psi(n)
\end{equation}
in $\ell^2(\ZZ)$, where the potential $V : \ZZ \to \RR$ is given by
\begin{equation}\label{e.potential}
V(n) = \lambda \cos \Big( 2 \pi \Big( x_1 + n x_2 + \frac{n(n-1)}{2} \alpha \Big) \Big)
\end{equation}
with $\lambda \in \RR \setminus \{0\}$, $x_1, x_2 \in \TT = \RR / \ZZ$ and $\alpha$ irrational. The argument of the cosine is such that it is generated by iterating the \emph{skew-shift} $(x_1,x_2) \mapsto (x_1 + x_2, x_2 + \alpha)$ on $\TT^2$ and projection on the first coordinate. Indeed, for any $n \in \ZZ$, the skew-shift maps
$$
\Big( x_1 + n x_2 + \frac{n(n-1)}{2} \alpha , x_2 + n \alpha \Big) \mapsto \Big( x_1 + (n+1) x_2 + \frac{(n+1)n}{2} \alpha , x_2 + (n+1) \alpha \Big).
$$

\begin{conj}\label{conjecture}
For $\lambda \not=0$, $\alpha$ satisfying a typical Diophantine condition, and Lebesgue almost all $(x_1,x_2) \in \TT^2$, all spectral measures of the Schr\"odinger operator $H$ defined by \eqref{e.SO}--\eqref{e.potential} are pure point and all eigenfunctions decay exponentially.
\end{conj}

This conjecture is still wide open and it appears to be very difficult. Bourgain-Goldstein-Schlag \cite{BGS01} obtained partial results, which however do not fully capture the pseudorandom aspects of the model since their proofs extend arguments developed in the quasi-periodic case, which is known not to be pseudorandom. The issue is that the underlying dynamics given by the skew-shift displays too little randomness or complexity for the known methods to produce such a result to be applicable. As a result, Bourgain-Goldstein-Schlag \cite{BGS01} work in the large coupling regime, which is where similar results were established earlier in the quasi-periodic case. Further partial results, which established phenomena in line with the conjecture that do not hold in the quasi-periodic case, were obtained by Bourgain in the small coupling regime: the presence of point spectrum \cite{B02} and the positivity of the Lyapunov exponent (indicating the exponential decay of eigenvectors) \cite{B00} for many parameters. Related results in the same spirit were obtained by Kr\"uger \cite{K11, K12}.

The state of affairs makes the conjecture quite intriguing, as in this example, the apparent small amount of randomness/complexity is supposed to be sufficient to reproduce spectral properties generated by genuine random potentials. Indeed, in addition to the spectral type, the conjectures for this model described in \cite[Section~XV]{B05} also include the positivity of the Lyapunov exponent and the absence of gaps in the spectrum of $H$ -- properties shared by the Anderson model (at sufficiently small coupling).

We also want to point out that the conjecture is closely related, certainly in spirit and potentially on a technical level as well, to the work of Rudnick-Sarnak-Zaharescu \cite{RSZ01} on the distribution of spacings between the fractional parts of $n^2 \alpha$.

As the Schr\"odinger version of the conjecture is at present out of reach, it is a natural question whether the diffraction version of it can be obtained. That is, can it be shown that the diffraction of \eqref{e.potential} is pseudorandom in the sense that it is of the same type as in the genuine random case, namely absolutely continuous? We will give an affirmative answer in this paper:

\begin{theorem}\label{t.mainapplication}
For every $\lambda \not= 0$, every irrational $\alpha$, and every $x_1,x_2$, the diffraction measure associated with
$$
V(n) = \lambda \cos \Big( 2 \pi \Big( x_1 + n x_2 + \frac{n(n-1)}{2} \alpha \Big) \Big)
$$
is purely absolutely continuous.
\end{theorem}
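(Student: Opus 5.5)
We would realize $V$ as the weighted Dirac comb $\omega=\sum_{n\in\ZZ}V(n)\delta_n$ and compute its autocorrelation $\gamma=\lim_{N\to\infty}\frac{1}{2N+1}\sum_{|n|,|m|\le N}V(n)V(m)\delta_{n-m}$ explicitly. The diffraction measure is then $\widehat{\gamma}$. The strategy is to show that $\gamma=c\,\delta_0$ for some $c>0$. In that case $\widehat\gamma$ is a constant multiple of Lebesgue (Haar) measure on $\TT$, which is purely absolutely continuous. This is exactly the white-noise behaviour of the random case, and note that there is no Bragg peak at $0$ either, since the mean of $V$ vanishes.

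\emph{Computing the autocorrelation coefficients.} Write $\theta_n=x_1+nx_2+\frac{n(n-1)}{2}\alpha$ and use $\cos a\cos b=\frac12[\cos(a-b)+\cos(a+b)]$. For fixed $k\in\ZZ$ we need
$$
\eta(k)=\lim_{N\to\infty}\frac{1}{2N+1}\sum_{|n|\le N}V(n+k)V(n)=\frac{\lambda^2}{2}\lim_{N\to\infty}\frac{1}{2N+1}\sum_{|n|\le N}\bigl[\cos(2\pi(\theta_{n+k}-\theta_n))+\cos(2\pi(\theta_{n+k}+\theta_n))\bigr].
$$
The range mismatch between $|n|\le N$ and $|n+k|\le N$ contributes only $O(|k|/N)$, so this is the right quantity.

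First, $\theta_{n+k}-\theta_n=kx_2+\frac{k(k-1)}{2}\alpha+nk\alpha$, which is linear in $n$ with slope $k\alpha$. For $k\neq0$, $k\alpha$ is irrational, so Weyl equidistribution (or a geometric sum) makes its average vanish. For $k=0$ this term gives $1$.

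Second, $\theta_{n+k}+\theta_n$ is a quadratic polynomial in $n$ with leading coefficient $\alpha$, which is irrational. By Weyl's theorem on polynomial sequences (van der Corput differencing), its exponential average tends to $0$. Crucially, this convergence holds for \emph{every} $x_1,x_2$ and every shift $k$, since it depends only on the irrationality of the leading coefficient. The averages also exist for all $k$ and are uniform over centred windows.

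Together these give $\eta(0)=\lambda^2/2$ and $\eta(k)=0$ for $k\neq0$. So $\gamma$ exists, with no need to pass to subsequences or use van Hove sequences, and equals $\frac{\lambda^2}{2}\delta_0$. Hence $\widehat\gamma=\frac{\lambda^2}{2}\,\theta_\TT$ (Lebesgue measure on $\TT$, or equivalently $\frac{\lambda^2}{2}\cdot\lambda_{\RR}$ if one views $\omega$ as a measure on $\RR$, where $\widehat{\delta_{\ZZ}}$-periodisation produces Lebesgue measure). This is nonzero since $\lambda\neq0$, and purely absolutely continuous.

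The main point requiring care is the Weyl-sum step: one must verify that the quadratic averages vanish for all $x_1,x_2$ and all $k$, including when $x_2$ is rational. Weyl's criterion handles this, because only the $n^2$-coefficient $\alpha$ matters. The other point is to match the paper's formal definition of the diffraction measure (for instance, whether the Dirac comb lives on $\ZZ$ or $\RR$, and whether a Bartlett subtraction is used). Either way the conclusion is unaffected, because the mean is zero and $\gamma$ is a point mass at the origin.
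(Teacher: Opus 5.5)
Your proposal is correct and is essentially the paper's argument (Theorem~\ref{t.main}(a) via Lemma~\ref{diff-cos-wd}). As there, you expand with $\cos a\cos b=\frac12[\cos(a-b)+\cos(a+b)]$, kill the averages of $\cos(2\pi(\theta_{n+k}\pm\theta_n))$ using distribution results for polynomials with an irrational non-constant coefficient, and read off $\gamma=\frac{\lambda^2}{2}\delta_0$ and $\sigma=\frac{\lambda^2}{2}\theta_{\TT}$.

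The only difference is that the paper uses well distribution (uniform in the position of the window), so the autocorrelation is the same along every F\o lner sequence. You work along centred windows, which is exactly the case covered by the remark following Lemma~\ref{diff-cos-wd}; your van der Corput bound is uniform in the window position anyway, so it would give the stronger statement too.
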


This will arise as a special case of Theorem~\ref{t.main} below; see Corollary~\ref{cor-diff}. Our result may be viewed as a confirmation of a pseudorandomness property of this structure as well as further evidence in support of Conjecture~\ref{conjecture}. This result adds a new class to the short list of deterministic systems with absolutely continuous diffraction.

The following table summarizes results that emphasize how the seemingly innocuous change of passing from irrational circle rotations to the skew-shift on $\TT^2$ introduces pseudorandom phenomena:
\medskip
\begin{center}
\begin{tabular}{|c|c|c|}
\hline
& Schr\"odinger & diffraction \\ \hline
$V(n) = \cos(2 \pi (x + n \alpha)$ & a.c. & p.p. \\ \hline
$V(n) = \cos \big( 2 \pi \big( x_1 + n x_2 + \frac{n(n-1)}{2} \alpha \big) \big)$ & p.p. (conjectured) & a.c. \\ \hline
$V(n)$ i.i.d.\ random & p.p. & a.c. \\ \hline
\end{tabular}
\end{center}
\medskip
On a broader level, the present paper provides additional motivation for seeking a tangible way to express the apparent duality between the spectral types in the Schr\"odinger and diffraction settings.

\section{Preliminaries}

In this section we briefly review the main concepts we need for the paper.

\subsection{Uniform Distribution and Well Distribution}

Let us recall the following definition, where as usual we denote $\TT = \RR / \ZZ$.

\begin{definition} Let $x : \ZZ \to \CC$ be a bi-infinite sequence.
We say that $(x_n)_{n \in \ZZ}$ is \emph{uniformly distributed$\pmod{1}$} if for each continuous function $f: \TT \to \CC$ we have
\[
\lim_{m \to \infty} \frac{1}{2m+1} \sum_{j=-m}^m f( x_j )=\int_\TT f(x) \, \dd x \,.
\]
We say that $(x_n)_{n \in \ZZ}$ is \emph{well distributed$\pmod{1}$} if for each continuous function $f: [0,1] \to \CC$ we have
\[
\lim_{m \to \infty} \frac{1}{2m+1} \sum_{j=l-m}^{l+m} f(  x_j )=\int_\TT f(x) \, \dd x \,,
\]
uniformly in $l \in \ZZ$.
\end{definition}

It follows immediately from the definition that if $(x_n)_{n \in \ZZ}$ is uniformly distributed$\pmod{1}$, then
$$
\lim_{m \to \infty} \frac{1}{2m+1} \sum_{j=-m}^m e^{2 \pi i x_j} = \lim_{m \to \infty} \frac{1}{2m+1} \sum_{j=-m}^m \cos(2 \pi  x_j) = 0.
$$
Moreover, if $(x_n)_{n \in \ZZ}$ is well distributed$\pmod{1}$, then
$$
\lim_{m \to \infty} \frac{1}{2m+1} \sum_{j=l-m}^{l+m} e^{2 \pi i x_j} = \lim_{m \to \infty} \frac{1}{2m+1} \sum_{j=l-m}^{l+m} \cos(2 \pi  x_j) = 0,
$$
uniformly in $l$.

The concept of well distributed (bi)-sequences is closely related to the concept of amenable functions. Recall here \cite{Ebe} \cite[Def.~4.5.5]{MoSt} that $g: \ZZ \to \CC$ is called \emph{amenable} if the limit
\[
M(g):= \lim_{m \to \infty} \frac{1}{2m+1} \sum_{j=l-m}^{l+m} g(j)
\]
exists uniformly in $l$. The limit $M(g)$ then is called the (\emph{uniform}) \emph{mean of} $g$. It follows that $(x_n)_{n \in \ZZ}$ is well distributed$\pmod{1}$ if for each continuous function $f: \TT \to \CC$, the function $g(n)=f(x_n)$ is amenable with mean $M(g)=\int_\TT f(x) \, \dd x$. Recall also \cite[Remark 4.5.7]{MoSt} that a function $g$ is amenable if and only if its mean exists uniformly along one (and hence all) F\o lner sequence(s).

The following result is an immediate consequence of \cite[Theorem~2]{Law}. Note that while \cite[Theorem~2]{Law} proves well distribution, it only shows it for $(x_n)_{n \in \NN}$, while we need it for bi-infinite sequences. We leave working out the necessary changes to the proof to the reader.

\begin{theorem}\label{Thm-law} Let $P(X)=a_nX^n+\ldots +a_1X+a_0 \in \RR[X]$ be so that at least one of $a_n, \ldots, a_1$ is irrational. Then $(P(n))_{n \in \ZZ}$ is well distributed$\pmod{1}$.  In particular,
$$
\lim_{m \to \infty} \frac{1}{2m+1} \sum_{j=l-m}^{l+m} e^{2 \pi i P(j)} = \lim_{m \to \infty} \frac{1}{2m+1} \sum_{j=l-m}^{l+m} \cos(2 \pi P(j)) = 0,
$$
uniformly in $l$.
\end{theorem}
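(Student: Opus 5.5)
We will prove Theorem~\ref{Thm-law} by reducing it, via the Weyl criterion for well distribution, to uniform exponential sum estimates, and then obtaining those estimates by the van der Corput differencing method. The first step is the standard reduction. By Weierstrass approximation, trigonometric polynomials are uniformly dense in $C(\TT)$. So it suffices to show that for every integer $h \neq 0$,
\[
\lim_{m\to\infty} \sup_{l \in \ZZ} \Big| \frac{1}{2m+1}\sum_{j=l-m}^{l+m} e^{2\pi i h P(j)} \Big| = 0 .
\]
Indeed, approximating $f$ by a trigonometric polynomial within $\varepsilon$ in sup norm changes every window average by at most $\varepsilon$, uniformly in $l$, and the constant term reproduces $\int_\TT f$. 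Since $hP$ again has an irrational non-constant coefficient, we may take $h=1$. Writing $j = l + k$ with $|k|\le m$, the window sum is $\sum_{|k|\le m} e^{2\pi i P_l(k)}$, where $P_l(X)=P(l+X)$. The point is that $P_l$ has the same leading coefficient as $P$, but its lower coefficients depend on $l$. This is why uniformity in $l$ is the real content of the statement.

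The second step is induction on the degree $d$ of the highest-degree irrational coefficient, after a reduction to the case where that coefficient is the leading one. Let $a_d$ be the irrational coefficient of largest index, so $a_{d+1},\dots,a_n$ are rational with common denominator $q$. We split the window into residue classes $k \equiv r \pmod q$ and write $k = qt + r$. On each class, the rational terms of degree $>d$ contribute a phase that is constant modulo $1$: since $j \mapsto a_i j^i \bmod 1$ is $q$-periodic, this holds after accounting for the shift $l$ as well. What remains is a polynomial in $t$ of degree $d$ with leading coefficient $a_d q^d$, which is irrational. Each residue class is an arithmetic progression of length about $2m/q$, hence a window in $t$. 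So it suffices to prove uniform smallness for polynomials whose leading coefficient is irrational.

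For $d=1$ this is an explicit geometric sum, bounded by $1/|\sin(\pi a_1)|$ independently of $l$ and $m$. For $d\ge 2$ we use van der Corput's inequality: for $1\le H\le N$ and any complex $u_k$ on a window of length $N$,
\[
\Big|\sum_k u_k\Big|^2 \le \frac{N+H}{H}\sum_{|s|<H}\Big(1-\frac{|s|}{H}\Big)\Big|\sum_{k} u_{k+s}\overline{u_k}\Big| ,
\]
where the inner sum runs over a subwindow. Taking $u_k=e^{2\pi i P_l(k)}$, the differenced phase $P_l(k+s)-P_l(k)$ is a polynomial in $k$ of degree $d-1$ with leading coefficient $d\,s\,a_d$, which is irrational for $s\ne 0$. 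The inductive hypothesis therefore makes each inner sum $o(N)$ for fixed $s\neq0$, uniformly over all windows. We then divide by $N^2$, let $N\to\infty$ with $H$ fixed, and then let $H\to\infty$; this shows the normalized sum tends to $0$ uniformly in $l$.

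The main obstacle is making the uniformity bookkeeping airtight. The inductive hypothesis must be formulated as: for a fixed irrational leading coefficient $\beta$ and degree $d$, the averages go to zero uniformly over all polynomials of that degree and leading coefficient, with arbitrary lower coefficients. This is exactly what absorbs the $l$-dependence, and it is also why no restriction to $n\in\NN$ appears anywhere. With that formulation, the base case is clearly uniform, since its bound depends only on $\beta$. The van der Corput step preserves this uniformity because the differenced polynomials again have a fixed leading coefficient $d s\beta$ for each fixed $s$. The final assertion of the theorem about $e^{2\pi i P(j)}$ and $\cos(2\pi P(j))$ is then the case $f(x)=e^{2\pi i x}$ and its real part.
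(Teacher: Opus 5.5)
Your proof is correct. The paper itself gives no argument for this statement. It cites Theorem~2 of Lawton's note, which treats one-sided sequences $(P(n))_{n \in \NN}$, and explicitly leaves the extension to bi-infinite sequences to the reader.

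The cited result rests on the well-distribution form of van der Corput's difference theorem: if every $(x_{n+h}-x_n)_n$ is well distributed, then so is $(x_n)_n$. That theorem is then applied inductively to polynomials. Your argument uses the same differencing mechanism, but you run the induction directly on exponential sums. Your inductive hypothesis is uniform over all lower-order coefficients for a fixed degree and fixed irrational leading coefficient.

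That choice is what makes the bi-infinite case cost nothing. A window $[l-m,l+m]$ anywhere in $\ZZ$ is just a centered window for $P(l+X)$. So you never need to reflect $X \mapsto -X$ or split windows that straddle the origin, which is exactly the step the authors omit. What you give up is generality: your statement is specific to polynomial phases, whereas the difference theorem applies to arbitrary sequences.

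One small notational point concerns the van der Corput step. There you should work with a generic degree-$d$ polynomial $Q$ with leading coefficient $\beta$, as in your uniform hypothesis, rather than with $P_l$ and $a_d$. This matters because $P$ may still carry rational terms of degree above $d$. Your residue-class reduction has already removed those, so this is only a matter of writing.
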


\subsection{Autocorrelation and Diffraction}

Here we recall the concepts of the autocorrelation and diffraction of bounded functions on $\ZZ$. For a brief review of the autocorrelation and diffraction measure, as well as for the details we skip below, we refer the reader to Appendix~\ref{App:diff}, while for a more general review of diffraction, we recommend \cite{TAO}.

Let us start with the following result.

\begin{proposition}\label{prop-acdef} Let $V :\ZZ \to \CC$ be bounded and let $(F_m)_m$ be a F\o lner sequence. Then, there exists a subsequence $(F_{k_m})_m$ of $(F_m)_m$ such that, for all $k \in \ZZ$, the following limit exists,
\[
\gamma_{V}(k) := \lim_{m \to \infty} \ \frac{1}{\card(F_{k_m})} \sum_{j \in F_{k_m}} V(j)\cdot \overline{V(j-k)} \,.
\]
Moreover, $\gamma_{V}$ is positive definite function on $\ZZ$ and hence there exists a finite positive measure $\sigma$ on $\TT$ such that for all $k \in \ZZ$, we have
\begin{equation}\label{eq-diff}
\gamma_V(k)= \int_{\TT} e^{2 \pi i \, k \cdot y} \, \dd \sigma (y).
\end{equation}
\end{proposition}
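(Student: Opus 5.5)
The plan has two parts. First, a diagonal argument gives existence of the limits along a subsequence. Second, I verify positive definiteness and apply Herglotz's theorem (Bochner on $\ZZ$) to obtain $\sigma$.

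\emph{Existence of the limits.} Set $C:=\|V\|_\infty<\infty$. For each fixed $k\in\ZZ$ define
\[
a_m(k):=\frac{1}{\card(F_m)}\sum_{j\in F_m} V(j)\overline{V(j-k)} .
\]
Each summand has modulus at most $C^2$, so $|a_m(k)|\le C^2$ for all $m$. Enumerate $\ZZ=\{k_1,k_2,\dots\}$ and proceed in stages:
\begin{itemize}
\item By Bolzano--Weierstrass in $\CC$, extract a subsequence along which $a_m(k_1)$ converges.
\item From that subsequence, extract a further one along which $a_m(k_2)$ converges.
\item Continue in the same way for $k_3,k_4,\dots$.
\end{itemize}
The Cantor diagonal subsequence $(F_{k_m})_m$ then makes $a_{k_m}(k)$ converge for every $k\in\ZZ$. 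Call the limit $\gamma_V(k)$. It satisfies $|\gamma_V(k)|\le C^2$.

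\emph{Positive definiteness.} Fix $N$, integers $n_1,\dots,n_N$ and $c_1,\dots,c_N\in\CC$. We need
\[
\sum_{r,s} c_r\overline{c_s}\,\gamma_V(n_r-n_s)\ge 0 .
\]
For a finite set $F$, put
\[
Q_F:=\frac{1}{\card(F)}\sum_{j\in F}\Bigl|\sum_r c_r V(j-n_r)\Bigr|^2\ \ge 0 .
\]
Expanding the square gives
\[
Q_F=\sum_{r,s}c_r\overline{c_s}\,\frac{1}{\card(F)}\sum_{j\in F}V(j-n_r)\overline{V(j-n_s)} .
\]
Substitute $j'=j-n_r$ in the inner sum. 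It becomes the average of $V(j')\overline{V(j'-(n_s-n_r))}$ over the translate $F-n_r$. By the F\o lner property, $\card\bigl((F_m-n_r)\,\triangle\,F_m\bigr)/\card(F_m)\to 0$. Since the summands are bounded by $C^2$, the average over $F_m-n_r$ differs from the average over $F_m$ by at most $C^2\card\bigl((F_m-n_r)\triangle F_m\bigr)/\card(F_m)\to0$. Passing to the limit along $(F_{k_m})$ therefore gives
\[
0\le \lim_m Q_{F_{k_m}}=\sum_{r,s}c_r\overline{c_s}\,\gamma_V(n_s-n_r).
\]
Relabel $r\leftrightarrow s$ and use $\gamma_V(-k)=\overline{\gamma_V(k)}$. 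This symmetry follows from the same translation argument: substitute $j\mapsto j+k$ and conjugate. Together these yield positive definiteness in the standard convention.

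\emph{Herglotz.} Herglotz's theorem, i.e.\ Bochner's theorem on $\ZZ$ with dual group $\TT$, now gives a unique finite positive measure $\sigma$ on $\TT$ with $\gamma_V(k)=\int_\TT e^{2\pi i k y}\,\dd\sigma(y)$. Its total mass is $\sigma(\TT)=\gamma_V(0)$. The sign convention in the exponent can be matched by replacing $\sigma$ with its reflection if necessary.

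The only step needing care is the F\o lner translation estimate. It is what makes the limiting correlation depend only on the difference $n_r-n_s$, and the whole argument is routine once that is in place.
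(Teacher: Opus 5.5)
Your proof is correct and follows essentially the paper's route: a diagonal (compactness) extraction for the uniformly bounded correlations, a F\o lner boundary estimate, and Herglotz's (Bochner's) theorem on $\ZZ$. The only difference is where positive definiteness comes from. The paper takes the limit of the truncated convolutions $\frac{1}{\card(F_m)} V_m*\widetilde{V_m}$, which are exactly positive definite at every stage, and uses the F\o lner condition (Lemma~\ref{lem-ac-alt}) to identify that limit with the window averages; you work with the window averages directly and use the same F\o lner estimate to show that positive definiteness survives in the limit.
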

\begin{proof}
This follows from Lemma~\ref{lem-ac-exists} and Lemma~\ref{lem-ac-alt}.
\end{proof}

We can now introduce the following definition.

\begin{definition}\label{def-diff}
Let $V \in \ell^\infty(\ZZ)$ and let $(F_m)_{m}$ be a F\o lner sequence. We say that the \emph{autocorrelation $\gamma_V$ of $V$ exists along the sequence $(F_m)_{m}$} if for each $k \in \ZZ$, the following limit exists,
\[
\gamma_{V}(k) := \lim_{m \to \infty} \ \frac{1}{\card(F_{k_m})} \sum_{j \in F_{k_m}} V(j)\cdot \overline{V(j-k)}.
\]
In this case, the positive measure  $\sigma$ on $\TT$ satisfying Eqn.\eqref{eq-diff} is called the \emph{diffraction measure of $V$ along the sequence $(F_m)_{m}$}.

The positive measure
\[
\Gamma := \sigma- \left( \sigma(\{0\}) \delta_{0} \right)
\]
is called the \emph{Bartlett diffraction measure of $V$ along the sequence $(F_m)_{m}$}.
\end{definition}

\begin{remark}\label{rem-diff}
(a) When working over $\RR^d$, or more generally over a second countable locally compact Abelian group $G$, the autocorrelation $\gamma$ is positive definite and hence it has a positive measure $\widehat{\gamma}$ as a Fourier transform (see for example \cite[Theorem 4.11.5]{MoSt}). The positive measure $\widehat{\gamma}$ is called the \emph{diffraction measure}, and when $G=\ZZ$, $\widehat{\gamma}$ is exactly the measure $\sigma$ we defined above.

The \emph{Bartlett diffraction measure} is usually defined as
\[
\Gamma := \widehat{\gamma}- \left(\widehat{\gamma}(\{0\}) \delta_0 \right).
\]
It is important in the study of diffraction of Delone sets, as in this case there is always a trivial Bragg peak in the origin, which the measure $\Gamma$ ignores.
\\[1mm]
(b) When $\mu \, = \, \sum_{n \in \ZZ} V(n) \delta_n$ is viewed as a measure on $\RR$ (supported inside $\ZZ$), its diffraction $\widehat{\gamma}$ is a positive measure on $\RR$ which is $\ZZ$-periodic. The measure $\widehat{\gamma}$ is the only $\ZZ$-periodic measure on $\RR$ whose push-forward through the factor mapping $\pi :\RR \to \TT$ is $\sigma$.
\end{remark}

The following characterization of bounded functions with unique autocorrelation is an immediate consequence of Lemma~\ref{lem-amenab}.

\begin{proposition}\label{cor:un-ac} Let $V \in \ell^\infty(\ZZ)$. Then, the following are equivalent:
\begin{itemize}
    \item[(i)] The autocorrelation $\gamma$ of $V$ exists with respect to all F\o lner sequences.
    \item[(ii)] The autocorrelation $\gamma$ of $V$ exists and is the same with respect to all F\o lner sequences.
    \item[(iii)] For all $k \in \ZZ$, the function $j \mapsto V(j) \cdot \overline{V(j-k)}$ is amenable.
\end{itemize}
Moreover, in this case we have $\gamma(k) = M \bigl(V(j) \cdot \overline{V(j-k)}\bigr)$ for all $k \in \ZZ$.
\end{proposition}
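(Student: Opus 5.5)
The statement is Proposition~\ref{cor:un-ac}, and the plan is to reduce everything to Lemma~\ref{lem-amenab} (from the appendix), together with \cite[Remark 4.5.7]{MoSt}. That remark says a bounded function $g$ on $\ZZ$ is amenable if and only if its mean exists along every F\o lner sequence, and then all these means coincide. Fix $k \in \ZZ$ and set
\[
g_k(j) := V(j)\cdot \overline{V(j-k)}.
\]
Since $V$ is bounded, $g_k$ is bounded. By definition, $\gamma(k)$ along a F\o lner sequence $(F_m)_m$ is exactly the mean of $g_k$ along $(F_m)_m$. So the three conditions are statements about the family $(g_k)_{k\in\ZZ}$.

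First, (ii) $\Rightarrow$ (i) is trivial.

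Next, (iii) $\Rightarrow$ (ii). If each $g_k$ is amenable, then its mean along any F\o lner sequence exists and equals $M(g_k)$. Hence the autocorrelation exists along every F\o lner sequence and equals $\gamma(k)=M(g_k)$, independently of the sequence. This also gives the final formula $\gamma(k) = M(V(j)\overline{V(j-k)})$.

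The remaining implication, (i) $\Rightarrow$ (iii), is the only point needing an argument. Assume, for contradiction, that some $g_k$ is not amenable. I expect Lemma~\ref{lem-amenab} to state precisely that for bounded $g$ on $\ZZ$, the existence of the mean along every F\o lner sequence already forces amenability. If it does not, I would prove it directly by a mixing argument, as follows.

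Suppose $g_k$ has means along two F\o lner sequences $(A_m)_m$ and $(B_m)_m$ with different values $a \neq b$. Since $\ZZ$ is abelian, translates of F\o lner sequences are F\o lner, so we may assume the sets are far apart. Build a new F\o lner sequence $(C_m)_m$ by alternating $C_{2m}=A_{m}$ and $C_{2m+1}=B_{m}$. It is still F\o lner, but the means of $g_k$ along it oscillate between $a$ and $b$, so the limit does not exist. This contradicts (i).

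It remains to check that non-amenability of $g_k$ produces two such F\o lner sequences with different limiting means. Non-amenability means that the averages over intervals $[l-m,l+m]$ are not uniformly Cauchy. So there exist $\varepsilon>0$, lengths $m_i \to \infty$, and centers $l_i, l_i'$ with
\[
\left|\frac{1}{2m_i+1}\sum_{j=l_i-m_i}^{l_i+m_i} g_k(j) \;-\; \frac{1}{2m_i+1}\sum_{j=l_i'-m_i}^{l_i'+m_i} g_k(j)\right| \ge \varepsilon .
\]
The interval sequences $([l_i-m_i,l_i+m_i])_i$ and $([l_i'-m_i,l_i'+m_i])_i$ are F\o lner. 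By boundedness of $g_k$, we can pass to subsequences along which both averages converge, and the limits differ by at least $\varepsilon$. The alternation argument above then finishes the proof.

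The main technical point is this last extraction: turning a failure of uniform convergence into two genuine F\o lner sequences with distinct limits. Compactness of bounded averages and the fact that intervals of length tending to infinity form F\o lner sequences make it routine.
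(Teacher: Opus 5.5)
Your proposal is correct and is essentially the paper's argument. The paper states the proposition as an immediate consequence of Lemma~\ref{lem-amenab}, applied for each fixed $k$ to the bounded function $g_k(j)=V(j)\overline{V(j-k)}$, whose means along F\o lner sequences are exactly the values of the autocorrelation. That lemma already contains the implication you were hoping for: existence of the mean along every F\o lner sequence implies amenability. So your fallback interleaving argument is not needed. If you keep it, drop the remark about translating the sets far apart: the interleaved sequence is F\o lner without it, and translating would change the means anyway.
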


\begin{definition}
In the situation of Proposition~\ref{cor:un-ac} we say that $V$ \emph{has a unique autocorrelation}.
\end{definition}

\subsection{The Hull of a Bounded Function}

We denote by $T: \ell^\infty(\ZZ) \to \ell^\infty(\ZZ)$ the shift operator $T V(n) := V(n+1)$. For each $V \in \ell^\infty(\ZZ)$, it is immediate that the orbit $O(V)=\{ T^n V : n \in \ZZ\}$ has compact closure $\XX(V)$ in the topology of pointwise converegence. Moreover, the topology of pointwise convergence is metrizable on $\XX(V)$. Therefore $\XX(V)$ is a topological dynamical system. If $\XX(V)$ is uniquely ergodic, we will simply say that \emph{$V$ is uniquely ergodic}.

For each $n \in \ZZ$, we can define a continuous function $E_n : \XX(V) \to \CC$ via $E_n(w)=w(n)$. Note that for all $n \in \ZZ$, we have $E_0(T^n w)=w(n)$.

Next, let us recall that any shift invariant measure $\mm$ on $\XX(V)$ induces a unitary representation of $\ZZ$ onto $L^2(\XX(V), \mm)$ via the shift action $U_n(f) := T^{n} f$. Thus, via the Stone representation theorem, to each $f \in L^2(\XX(V), \mm)$ we can associate a positive measure $\sigma_f$ on $\TT$ such that for all $k \in \ZZ$,  we have
\[
\langle  f, T^{k} f \rangle \, = \, \int_{\TT} e^{2 \pi i \, k \cdot y} \, \dd \sigma_f(y).
\]
The measure $\sigma_f$ is called the \emph{spectral measure of $f$}.

Let us now recall the following result (\cite{Dwo,LMS,BL,Gou}), usually called the Dworkin Argument,  relating the diffraction and spectral measures for uniquely ergodic dynamical systems. More general results in this direction, which hold for systems that are not uniquely ergodic, can be found in \cite{BL} and \cite{LSS,LS}, but the uniquely ergodic case is the one relevant for us. Since the proof is straightforward for shift actions, we include it here for completeness.

\begin{proposition}[Dworkin Argument]\label{prop-Dwo} Let $V \in \ell^\infty(\ZZ)$ be uniquely ergodic with unique ergodic measure $\mm$. Then, for all $w \in \XX(V)$,
the autocorrelation $\gamma$ of $V$ exists along all F\o lner sequences and satisfies $\gamma(k)= \langle  E_0, T^{k} E_0 \rangle$ for all $k \in \ZZ$. In particular, each $w \in \XX(V)$ has a unique autocorrelation, and its diffraction measure $\sigma$ satisfies $\sigma=\sigma_{E_0}$.
\end{proposition}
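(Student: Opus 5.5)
The plan is to reduce everything to the uniform ergodic theorem for uniquely ergodic systems, applied to the continuous functions $E_0 \cdot \overline{E_{-k}}$ on $\XX(V)$. Fix $w \in \XX(V)$ and $k \in \ZZ$. Define $f_k : \XX(V) \to \CC$ by $f_k(u) := E_0(u)\,\overline{E_{-k}(u)} = u(0)\overline{u(-k)}$. Since the topology on $\XX(V)$ is pointwise convergence, both $E_0$ and $E_{-k}$ are continuous, so $f_k$ is continuous. Also, $T^j u (n) = u(n+j)$, so $f_k(T^j w) = w(j)\overline{w(j-k)}$. Hence the averages defining the autocorrelation of $w$ along $F_m$ are exactly the ergodic averages
\[
\frac{1}{\card(F_m)} \sum_{j \in F_m} f_k(T^j w).
\]

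The key step is the standard fact that in a uniquely ergodic system $(\XX(V),T,\mm)$, ergodic averages of a continuous function converge to $\int f\,\dd\mm$ uniformly in the starting point, and along every F\o lner sequence. I will prove this by contradiction, as follows. If the claim failed, there would be $\varepsilon>0$, points $u_m$, and F\o lner sets $F_m$ with $\bigl|\frac{1}{\card F_m}\sum_{j\in F_m} f(T^j u_m) - \int f\,\dd\mm\bigr|\ge\varepsilon$. Consider the measures $\nu_m := \frac{1}{\card F_m}\sum_{j\in F_m}\delta_{T^j u_m}$. By compactness they have a weak-$*$ limit point $\nu$. The F\o lner property gives $\|T_*\nu_m-\nu_m\| \le \card(F_m \triangle (F_m+1))/\card F_m \to 0$, so $\nu$ is $T$-invariant. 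Unique ergodicity then forces $\nu = \mm$, which contradicts the $\varepsilon$-gap.

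With this in hand, for each $w\in\XX(V)$ the limit exists along all F\o lner sequences and equals
\[
\gamma(k)=\int_{\XX(V)} u(0)\overline{u(-k)}\,\dd\mm(u).
\]
Using the shift-invariance of $\mm$ (substitute $u\mapsto T^k u$), this becomes $\int E_0(T^k u)\overline{E_0(u)}\,\dd\mm$, which is $\langle E_0, T^kE_0\rangle$ with the inner product convention matching the definition of $\sigma_f$. The one thing to check here is which slot is conjugate-linear. If the convention flips, I instead use the invariance $u\mapsto T^{-k}u$ so that the formula matches the paper's displayed identity.

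Independence of the F\o lner sequence gives the unique autocorrelation, by Proposition~\ref{cor:un-ac}. Comparing the Fourier coefficients of $\sigma$ and $\sigma_{E_0}$ then gives $\sigma=\sigma_{E_0}$, since a finite measure on $\TT$ is determined by its Fourier coefficients. The main obstacle is the uniformity argument along arbitrary F\o lner sequences. The remaining steps are bookkeeping of signs and conjugations.
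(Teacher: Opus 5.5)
Your proposal is correct and follows the paper's argument: the paper also applies unique ergodicity to the continuous function $E_0\cdot\overline{T^kE_0}$, which under the paper's convention $(T^kE_0)(u)=E_0(T^{-k}u)=u(-k)$ is exactly your $E_0\cdot\overline{E_{-k}}$, so the limit $\int u(0)\overline{u(-k)}\,\dd\mm(u)$ equals $\langle E_0,T^kE_0\rangle$ directly and no change of variables is needed. The only real difference is that you prove convergence of ergodic averages along arbitrary F\o lner sequences (uniformly in the starting point) via the weak-$*$ limit argument, whereas the paper invokes this as a standard consequence of unique ergodicity.
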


\begin{proof}
Let $k \in \ZZ$ and let $(F_m)_{m}$ be any F\o lner sequence.
Then, since $E_0 \cdot \overline{T^k E_0} \in C(\XX(V))$, by unique ergodicity  we have
\begin{align*}
 \langle E_0, T^{k} E_0 \rangle &=
% \int_{\XX(V)} E_0(\varpi) \cdot  \overline{T^{k} E_0(\varpi)} \, \dd \mm(\varpi)  \\ & =
 \lim_{m \to \infty} \frac{1}{\card(F_m)} \sum_{j \in F_m} E_0(T^j w)  \cdot \overline{T^k E_0(T^j w) } \\
% & = \lim_{m \to \infty} \frac{1}{\card(F_m)} \sum_{j \in F_m} E_0(T^j w) \cdot \overline{ E_0(T^{j-k} w) } \\
 & = \lim_{m \to \infty} \frac{1}{\card(F_m)} \sum_{j \in F_m}  w(j) \cdot   \overline{ w (j-k) }.
\end{align*}
The claim follows.
\end{proof}

We complete this section by giving the following characterization for the eigenvalues of $\XX(V)$.

\begin{proposition}\label{prop:top-eigen}
Assume that $V \in \ell^\infty(\ZZ)$ is real valued and uniquely ergodic. Let $\beta \in \TT = \RR / \ZZ$ and assume that for all not necessarily distinct $n_1, \ldots, n_r \in \ZZ$, the function
  \[
  \ZZ \ni j \to \left( \prod_{k=1}^r V(j+n_k) \right) \cdot e^{-2 \pi i  \beta  j}
  \]
is amenable. Then, the following are equivalent:
\begin{itemize}
  \item[(i)] $\beta$ is an eigenvalue.
  \item[(ii)]$\beta$ is a topological eigenvalue.
  \item[(iii)] There exist not necessarily distinct $n_1, \ldots, n_r \in \ZZ$ such that
  \[
  \lim_{m \to \infty} \frac{1}{2m+1} \sum_{j=-m}^m \left( \prod_{k=1}^r V(j+n_k) \right) \cdot e^{-2 \pi i  \beta j} \neq 0 \,.
  \]
\end{itemize}
\end{proposition}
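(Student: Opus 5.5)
We prove the cycle (i)$\Rightarrow$(iii)$\Rightarrow$(ii)$\Rightarrow$(i). The key observation is that the polynomials in the coordinate functions $E_n$ form a self-adjoint, unital subalgebra $\mathcal{A}$ of $C(\XX(V))$ which separates points, because two elements of $\XX(V)$ agreeing at every coordinate are equal. Since $V$ is real valued, every element of $\XX(V)$ is real valued, so $\overline{E_n}=E_n$ and $\mathcal{A}$ is closed under conjugation. By Stone--Weierstrass, $\mathcal{A}$ is dense in $C(\XX(V))$ and hence in $L^2(\XX(V),\mm)$.

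The link between the dynamics and condition (iii) comes from the monomials $P=\prod_{k=1}^r E_{n_k}$. These satisfy $P(T^j V)=\prod_k V(j+n_k)$. By the amenability hypothesis, together with unique ergodicity, I expect the limit in (iii) to equal the Fourier coefficient $\langle P, f_\beta\rangle$ whenever $f_\beta$ is a continuous eigenfunction. For the general measurable case I would use the standard Wiener--Wintner style fact: for $f\in L^2(\mm)$, the averages $\frac{1}{2m+1}\sum_{j=-m}^m e^{-2\pi i\beta j}\, T^j f$ converge in $L^2$ to the orthogonal projection $Q_\beta f$ onto the eigenspace for $\beta$.

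(i)$\Rightarrow$(iii). Suppose $\beta$ is an eigenvalue with eigenfunction $f\neq 0$. By density of $\mathcal{A}$, some monomial $P$ satisfies $Q_\beta P\neq 0$, since otherwise $Q_\beta$ would vanish on a dense set. We then need to identify the limit in (iii), evaluated at the point $V$, with a nonzero quantity. The amenability of $j\mapsto P(T^jV)e^{-2\pi i\beta j}$ gives convergence of the averages at the specific point $V$, and the same holds uniformly along translates. I plan to argue that the function $h(w)=\lim_m \frac{1}{2m+1}\sum_{j} P(T^j w)e^{-2\pi i\beta j}$ is defined at least on the orbit of $V$. It satisfies $h(Tw)=e^{2\pi i\beta}h(w)$, and the uniformity in $l$ should make it uniformly continuous on the orbit, so it extends to a continuous function on $\XX(V)$. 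This continuous extension is then compared with $Q_\beta P$ using unique ergodicity and $L^2$ convergence. I expect this identification to be the \textbf{main obstacle}: the natural route is to show that the uniform convergence of averages along the orbit upgrades to uniform convergence on $\XX(V)$, which yields $h=Q_\beta P$ in $L^2$. Since $Q_\beta P\neq 0$, it follows that $h\neq 0$. If $h(V)=0$, the eigen-relation forces $h$ to vanish on the dense orbit, and hence everywhere by continuity, a contradiction. Therefore $h(V)\neq0$, which is exactly (iii).

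(iii)$\Rightarrow$(ii). Given a monomial $P$ with nonzero limit, the same construction produces a continuous function $h$ on $\XX(V)$ with $h\circ T=e^{2\pi i\beta}h$ and $h(V)\neq0$, so $h\neq 0$. This makes $\beta$ a topological eigenvalue. The step again rests on showing that uniformity of the amenable limit in $l$ gives uniform convergence of the averages on all of $\XX(V)$. I would prove this by approximating any $w\in\XX(V)$ by shifts $T^{l}V$ on large finite windows, using that $P$ depends on only finitely many coordinates.

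(ii)$\Rightarrow$(i). This direction is trivial: a continuous eigenfunction is nonzero on an open set, and it is nonzero $\mm$-almost nowhere only if it vanishes on the support of $\mm$. Unique ergodicity on the orbit closure gives $\supp\,\mm$ minimal subsets, and the eigenfunction has constant modulus $|h|$ because $|h|$ is continuous and $T$-invariant, hence constant on orbit closures. So $h\neq0$ in $L^2(\mm)$, and $\beta$ is an eigenvalue.
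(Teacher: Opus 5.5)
Your proof is correct and follows essentially the same route as the paper, which obtains the statement from Proposition~\ref{propC7} with $a=0$ (real-valuedness removes the conjugates): (iii)$\Rightarrow$(ii) via the continuity lemma (Lemma~\ref{lem:eigenfunct}), (ii)$\Rightarrow$(i) trivially, and (i)$\Rightarrow$(iii) by density of $\AAA$, choice of a monomial not orthogonal to the eigenspace, and the constant modulus of the resulting continuous limit. The differences are minor: you use the mean ergodic theorem (identifying the uniform limit with $Q_\beta P$ in $L^2(\mm)$) where the paper uses the pointwise ergodic theorem, and your window-approximation sketch proves Lemma~\ref{lem:eigenfunct}, which the paper cites from the literature --- and that sketch works, since each finite average is continuous on $\XX(V)$ and amenability makes the averages uniformly Cauchy on the dense orbit, hence on all of $\XX(V)$.
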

\begin{proof}
Since $V$ is real valued, we have for all $m \in \ZZ$, $\overline{V(j+m)} =  V(j+m)$. The claim then follows from Proposition~\ref{propC7}.
\end{proof}

\subsection{$W$-polynomials}

In this section we discuss some basic properties of $W$-polynomials, defined below.  These properties will allow us to calculate the autocorrelation and diffraction of
$V(n) = \lambda \cos(2 \pi P(n))$.

\begin{definition}\label{def-w-poly}
Let $P(X)=a_nX^n + \ldots +a_1X+a_0 \in \RR[X]$ and $d \geq 1$. We say that $P(X)$ is a \emph{$W_d$-polynomial} if $d= \max \{ j : 1 \leq j \leq n , a_j \notin \QQ \}$. We will simply say that $P(X)$ is a \emph{$W$-polynomial} if $P(x)$ is a $W_d$-polynomial for some $d \geq 1$.
\end{definition}

Let us note here in passing that Theorem~\ref{Thm-law} and Proposition~\ref{prop1} below imply that a polynomial $P(X) \in \RR[X]$ is a $W$-polynomial if and only if $(P(n))_{n \in \ZZ}$ is well distributed$\pmod{1}$, which explains the terminology.

When calculating the autocorrelation, given a polynomial $P$, we will need to calculate the mean of $\cos(2 \pi S_k(P)(j))$ and $\cos(2 \pi D_k(P)(j))$, where
\begin{align*}
S_k(P)(X)&:= P(X)+P(X+k)  \qquad \forall k \in \ZZ  \, \mbox{ and } \\
D_k(P)(X)&:= P(X+k)-P(X)  \qquad \forall 0 \neq k \in \ZZ  \,.
\end{align*}
For $W$-polynomials, these means are $0$, so a natural question to ask is the following:

\begin{quest}
\begin{itemize}
  \item[(Q1)] For which polynomials $P(X) \in \RR[X]$ are all polynomials $S_k(P)$, $k \in \ZZ$, $W$-polynomials?
  \item[(Q2)] For which polynomials $P(X) \in \RR[X]$ are all polynomials $D_k(P)$, $0 \neq k \in \ZZ$, $W$-polynomials?
\end{itemize}
\end{quest}

The key observation to answer these questions is that if a polynomial $P(X)$ satisfies the condition from (Q1) or (Q2), respectively, then, adding any polynomial in $\QQ[X]$ to $P(X)$ yields another polynomial that satisfies the condition from (Q1) or (Q2), respectively.

This allows us to ignore all monomials in $P$ with rational coefficients. To make this precise, we introduce the following notation: For $P(X)=\sum_{j=0}^n a_j X^j \in \RR[X]$, let
$$
P_{\mathsf{rat}}(X) := \sum_{\substack{j=0 \\ a_j \in \QQ}}^n a_jX^j, \quad P_{\mathsf{irr}}(X) := \sum_{\substack{j=0 \\ a_j \notin \QQ}}^n a_jX^j
$$
denote the sums of monomials with rational and irrational coefficients, respectively.

\begin{lemma}\label{lem-aux}
\begin{itemize}
  \item[(a)] Let $P(X) \in \RR[X]$ and $d \geq 1$. Then, $P(X)$ is a $W_d$-polynomial if and only if $\deg(  P_{\mathsf{irr}}(X)) = d$.
  \item[(b)] For all $P(X) \in \RR(X)$ and all $k \in \ZZ$, we have
\begin{align*}
  S_k(P) &= S_k(P_{\mathsf{rat}})+ S_k(P_{\mathsf{irr}}) \,, \\
  D_k(P) &= D_k(P_{\mathsf{rat}})+ D_k(P_{\mathsf{irr}}) \,, \\
  S_k(P_{\mathsf{rat}})&;  D_k(P_{\mathsf{rat}})   \in \QQ[X] \,.
\end{align*}
\item[(c)] For all $P(X) \in \RR(X)$ and all $k \in \ZZ$, we have
\begin{align*}
  \deg(S_k(P)) &= \deg(P) \,, \\
  \mbox{leading coefficient of } S_k(P) &= 2 \cdot \mbox{leading coefficient of } P \,.
\end{align*}
\item[(d)] For all $P(X) \in \RR(X)$ with $\deg(P) \geq 1$ and all $0 \neq k \in \ZZ$, we have
\begin{align*}
  \deg(D_k(P)) &= \deg(P)-1 \,,\\
  \mbox{ leading coefficient of } D_k(P) &=k \cdot \deg(P) \cdot \mbox{leading coefficient of } P \,.
\end{align*}
\end{itemize}
\end{lemma}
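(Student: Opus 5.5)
The four parts are elementary, and I will prove them in the order (b), (a), (c), (d). Throughout I rely on two facts. First, $\QQ[X]$ is a $\QQ$-vector space that is stable under $X \mapsto X+k$ for $k \in \ZZ$. Second, the operators $S_k$ and $D_k$ are $\RR$-linear in $P$.

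For (b), linearity gives the first two identities at once, since $P = P_{\mathsf{rat}} + P_{\mathsf{irr}}$. Next, $P_{\mathsf{rat}}(X+k)$ lies in $\QQ[X]$, because binomial expansion of $a_j (X+k)^j$ with $a_j \in \QQ$ and $k \in \ZZ$ produces only rational coefficients. Hence both $S_k(P_{\mathsf{rat}})$ and $D_k(P_{\mathsf{rat}})$ lie in $\QQ[X]$.

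Part (a) follows from Definition~\ref{def-w-poly}. The indices $j \geq 1$ with $a_j \notin \QQ$ are exactly the indices $j \geq 1$ of the nonzero monomials of $P_{\mathsf{irr}}$. Therefore $d$, their maximum, equals $\deg P_{\mathsf{irr}}$. The one subtlety is the constant term: if $a_0 \notin \QQ$, then $P_{\mathsf{irr}}$ contains $a_0$. This does not affect the degree as long as some $a_j$ with $j \geq 1$ is irrational, which is exactly what $d \geq 1$ guarantees. Conversely, if $\deg P_{\mathsf{irr}} = d \geq 1$, then $a_d \notin \QQ$, and no $a_j$ with $j > d$ is irrational.

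For (c), write $P = a_n X^n + \dots$ with $a_n \neq 0$. Expanding $(X+k)^n = X^n + nkX^{n-1} + \dots$ shows that $P(X+k) = a_n X^n + (\text{lower order terms})$. Adding $P(X)$ gives $S_k(P) = 2a_n X^n + (\text{lower order terms})$, so the degree and the leading coefficient are as stated. The case $P = 0$ is trivial or excluded by convention.

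For (d), assume $n = \deg P \geq 1$. The coefficient of $X^n$ in $P(X+k) - P(X)$ cancels. The coefficient of $X^{n-1}$ is
\[
a_n n k + a_{n-1} - a_{n-1} = nk a_n ,
\]
where the $a_{n-1}$ terms come from the $X^{n-1}$ monomial of $P$ shifted and unshifted. Since $nka_n \neq 0$ for $k \neq 0$, we get $\deg D_k(P) = n-1$ with leading coefficient $k \cdot \deg(P) \cdot a_n$.

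I expect no step to be a real obstacle. The only care needed is the bookkeeping in (a) around the irrational constant term, and checking that the $a_{n-1}$ contributions cancel in (d).
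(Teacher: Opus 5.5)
Your proposal is correct and follows the paper's route. The paper calls (a)--(c) obvious, and your arguments supply those details: linearity of $S_k$ and $D_k$, stability of $\QQ[X]$ under integer shifts, and the bookkeeping for an irrational constant term in (a). For (d) you give the paper's computation, cancelling the $X^n$ terms and reading off the nonzero coefficient $a_n n k$ of $X^{n-1}$.
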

\begin{proof}
\textbf{(a), (b), (c)} are obvious.

\noindent \textbf{(d)} Let $P(X)=a_nX^n+\ldots+a_1X+a_0 \in \RR[X]$ with $a_n \neq 0$. Then,
\begin{align*}
 D_k(P) &=\left(a_{n}(X+k)^n+a_{n-1}(X+k)^{n-1}+ \ldots + a_1(X+k)+a_0\right) \\
 &- \left(a_{n}X^n+a_{n-1}X^{n-1}+ \ldots + a_1X+a_0\right) \,.
\end{align*}
It follows that $D_k(P)$ has degree at most $n$. The only two terms containing $X^n$ are $a_{n}X^n$ in the first bracket, and $-a_{n}X^n$ coming from the second bracket. Those cancel out. Next, we get the following three terms containing $X^{n-1}$: $a_nnkX^{n-1}+a_{n-1}X^{n-1}- a_{n-1}X^{n-1}=a_nnkX^{n-1}$. It follows that the coefficient of $X^{n-1}$ is $a_n \cdot n \cdot k$. Since $a_n, n, k$ are all non-zero, the claim follows.
\end{proof}

We can now answer (Q1):

\begin{lemma} Let $P \in \RR(X)$ and $d \geq 1$. Then, the following are equivalent:
\begin{itemize}
  \item[(i)] $P$ is a $W_d$-polynomial.
  \item[(ii)] $\deg(P_{\mathsf{irr}}) =d$.
  \item[(iii)] $S_k(P)$ is a $W_d$-polynomial for all $k \in \ZZ$.
  \item[(iv)] There exists some $k \in \ZZ$ so that $S_k(P)$ is a $W_d$-polynomial.
\end{itemize}
\end{lemma}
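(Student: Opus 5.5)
The plan is to prove the cycle (i)$\Leftrightarrow$(ii)$\Rightarrow$(iii)$\Rightarrow$(iv)$\Rightarrow$(ii). The equivalence (i)$\Leftrightarrow$(ii) is exactly Lemma~\ref{lem-aux}(a), and (iii)$\Rightarrow$(iv) is trivial. So the real work is to show that, for every $k \in \ZZ$, $S_k(P)$ is a $W_d$-polynomial if and only if $\deg(P_{\mathsf{irr}})=d$. This single statement gives both (ii)$\Rightarrow$(iii) and (iv)$\Rightarrow$(ii).

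Fix $k \in \ZZ$. By Lemma~\ref{lem-aux}(b), $S_k(P)=S_k(P_{\mathsf{rat}})+S_k(P_{\mathsf{irr}})$ with $S_k(P_{\mathsf{rat}})\in\QQ[X]$. Adding a rational polynomial does not change which coefficients are irrational. Hence $S_k(P)$ is a $W_d$-polynomial if and only if $S_k(P_{\mathsf{irr}})$ is, and by Lemma~\ref{lem-aux}(a) this means $\deg\bigl((S_k(P_{\mathsf{irr}}))_{\mathsf{irr}}\bigr)=d$. The constant term is irrelevant, since $W_d$ only looks at indices $j\ge 1$. To handle this cleanly I will treat the degree-$0$ part separately, or simply note that $d \geq 1$ throughout.

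Write $P_{\mathsf{irr}}=\sum_{j\in J} a_jX^j$, where all $a_j$ are irrational and $e:=\max J$ is the degree. Expanding,
\[
S_k(P_{\mathsf{irr}})(X)=\sum_{j\in J} a_j\bigl(X^j+(X+k)^j\bigr).
\]
The coefficient of $X^i$ in this sum is $2a_i\cdot[i\in J]+\sum_{j\in J,\,j>i}a_j\binom{j}{i}k^{j-i}$. The key step, and the main obstacle, is that for $i<e$ these coefficients mix several irrational numbers and may well be rational. This does not matter. What matters is the top coefficient: by Lemma~\ref{lem-aux}(c), the coefficient of $X^e$ is $2a_e$, which is irrational, and all coefficients of $X^i$ with $i>e$ vanish. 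Therefore the largest index $i\ge1$ with irrational coefficient in $S_k(P)$ is exactly $e$, provided $e\ge1$.

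If $e \geq 1$, then $S_k(P)$ is a $W_e$-polynomial for every $k$. Thus $S_k(P)$ is $W_d$ if and only if $e=d$, which gives both directions. If $e\le 0$, that is, $P_{\mathsf{irr}}$ is a constant or zero, then $S_k(P_{\mathsf{irr}})$ is also a constant. In that case $S_k(P)$ has only rational coefficients in degrees $\ge1$, so it is not a $W$-polynomial at all, and neither is $P$. This makes (ii), (iii) and (iv) all false for every $d\ge1$, so the equivalence still holds. This completes the cycle.
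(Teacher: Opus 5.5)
Your proof is correct and follows essentially the same route as the paper. You split $P=P_{\mathsf{rat}}+P_{\mathsf{irr}}$, use Lemma~\ref{lem-aux}(b) to see that the rational part cannot affect the $W_d$ property, and use Lemma~\ref{lem-aux}(c) (the leading coefficient of $S_k(P_{\mathsf{irr}})$ is $2a_e$, hence irrational) to get both (ii)$\Rightarrow$(iii) and (iv)$\Rightarrow$(ii); your explicit treatment of the degenerate case $\deg(P_{\mathsf{irr}})\le 0$, which the paper leaves implicit, is a welcome bit of extra care.
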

\begin{proof}
The equivalence {\bf (i) $\Leftrightarrow$ (ii)} follows from Lemma~\ref{lem-aux}.

\smallskip
\noindent{\bf (ii) $\Longrightarrow$ (iii):} Let $m \in \ZZ$ be fixed but arbitrary. By Lemma~\ref{lem-aux}(c), $S_m(P_{\mathsf{irr}})$ is a $W_d$-polynomial. As $S_m(P_{\mathsf{rat}}) \in \QQ[X]$ we get that $S_m(P)$ is a $W_d$-polynomial.

\smallskip
\noindent{\bf (iii) $\Longrightarrow$ (iv):} is obvious.

\medskip
\noindent {\bf (iv) $\Longrightarrow$ (ii):}  Since $S_k(P)$ is a $W_d$-polynomial, and $S_k(P_{\mathsf{rat}}) \in \QQ(X)$ we get that $S_k(P_{\mathsf{irr}})= S_k(P)-S_k(P_{\mathsf{rat}})$ is a $W_d$-polynomial. Thus, Lemma~\ref{lem-aux} (c) implies that $\deg(P_{\mathsf{irr}}) =d$.
\end{proof}
This immediately yields:

\begin{corollary}\label{cor-SkWpol}  Let $P \in \RR(X)$. Then, the following are equivalent:
\begin{itemize}
  \item[(i)] $P$ is a $W$-polynomial.
  \item[(ii)] $\deg(P_{\mathsf{irr}}) \geq 1$.
  \item[(iii)] $S_k(P)$ is a $W$-polynomial for all $k \in \ZZ$.
  \item[(iv)] There exists some $ k \in \ZZ$ so that $S_k(P)$ is a $W$-polynomial.
\end{itemize}
\end{corollary}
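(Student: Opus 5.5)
The corollary follows by quantifying the preceding lemma over $d \geq 1$. By definition, $P$ is a $W$-polynomial exactly when it is a $W_d$-polynomial for some $d \geq 1$. I will therefore reduce each of the four conditions to the corresponding condition of the lemma for a suitable $d$, and then pass back.

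\textbf{(i) $\Leftrightarrow$ (ii).} If $P$ is a $W_d$-polynomial, then by Lemma~\ref{lem-aux}(a) we have $\deg(P_{\mathsf{irr}}) = d \geq 1$. Conversely, suppose $\deg(P_{\mathsf{irr}}) = d \geq 1$. Then Lemma~\ref{lem-aux}(a) shows that $P$ is a $W_d$-polynomial, hence a $W$-polynomial.

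\textbf{(ii) $\Rightarrow$ (iii).} Set $d = \deg(P_{\mathsf{irr}}) \geq 1$. The implication (ii) $\Rightarrow$ (iii) of the preceding lemma gives that $S_k(P)$ is a $W_d$-polynomial, and hence a $W$-polynomial, for every $k \in \ZZ$.

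\textbf{(iii) $\Rightarrow$ (iv).} This is trivial.

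\textbf{(iv) $\Rightarrow$ (ii).} Suppose $S_k(P)$ is a $W_d$-polynomial for some $k$ and some $d \geq 1$. The implication (iv) $\Rightarrow$ (ii) of the lemma, applied with this $d$, yields $\deg(P_{\mathsf{irr}}) = d \geq 1$.

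There is one delicate point. The constant term of $P$ may be irrational, in which case $P_{\mathsf{irr}}$ can be a nonzero constant with degree $0$. The lemma's characterization relies on the coefficients of degree at least $1$, so the hypothesis $d \geq 1$ in each step excludes this degenerate case. In particular, when passing from $S_k(P)$ back to $P$ in the last step, we must track $d$ explicitly rather than argue from ``some irrational coefficient''. With $d$ fixed beforehand in each step, the argument goes through without further work.
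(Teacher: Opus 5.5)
Your proof is correct and takes the paper's route: the paper presents this corollary as an immediate consequence of the preceding lemma, and you make that explicit by applying each implication of the lemma with $d$ fixed and $d \geq 1$. Your remark about an irrational constant term, where $\deg(P_{\mathsf{irr}})=0$, correctly explains why $d$ must be tracked, though it requires no additional argument.
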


Now, we give the answer to (Q2).

\begin{lemma} Let $P(X) \in \RR(X)$ and let $d \geq 2$.
Then, the following are equivalent:
\begin{itemize}
  \item[(i)] $P$ is a $W_d$-polynomial.
  \item[(ii)] $\deg(P_{\mathsf{irr}}) =d$.
  \item[(iii)] $D_k(P)$ is a $W_{d-1}$-polynomial for all $0 \neq k \in \ZZ$.
  \item[(iv)] There exists some $0\neq k \in \ZZ$ so that $D_k(P)$ is a $W_{d-1}$-polynomial.
\end{itemize}
\end{lemma}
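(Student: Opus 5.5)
The plan mirrors the proof of the (Q1) lemma, with Lemma~\ref{lem-aux}(d) taking the role of Lemma~\ref{lem-aux}(c). The equivalence (i) $\Leftrightarrow$ (ii) is Lemma~\ref{lem-aux}(a), and (iii) $\Longrightarrow$ (iv) is trivial. So the content lies in (ii) $\Longrightarrow$ (iii) and (iv) $\Longrightarrow$ (ii). Both use the splitting $D_k(P)=D_k(P_{\mathsf{rat}})+D_k(P_{\mathsf{irr}})$ with $D_k(P_{\mathsf{rat}})\in\QQ[X]$ from Lemma~\ref{lem-aux}(b).

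For (ii) $\Longrightarrow$ (iii), fix $0\neq k\in\ZZ$. Adding a rational polynomial does not change the irrational part of a polynomial, up to which coefficients are irrational, so $D_k(P)$ is $W_{d-1}$ if and only if $D_k(P_{\mathsf{irr}})$ is. The key computation is for a single monomial $aX^j$ with $a\notin\QQ$ and $j\le d$:
\[
D_k(aX^j)=\sum_{i=0}^{j-1}\binom{j}{i}k^{j-i}a\,X^i .
\]
Every coefficient is a nonzero rational multiple of $a$, hence irrational. Summing over the monomials of $P_{\mathsf{irr}}$, the coefficient of $X^{d-1}$ comes only from the top monomial $a_dX^d$. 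By Lemma~\ref{lem-aux}(d) it equals $k\,d\,a_d$, which is irrational. No coefficient of degree $\ge d$ occurs, so $\deg D_k(P_{\mathsf{irr}})=d-1$. To conclude that $D_k(P_{\mathsf{irr}})$ is $W_{d-1}$, we only need its top nonconstant irrational coefficient to sit in degree $d-1\ge 1$; here the hypothesis $d\geq 2$ is used. Lower coefficients may become rational through cancellation between different monomials, but this is irrelevant to the $W_{d-1}$ property, which only asks about the largest index with irrational coefficient.

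For (iv) $\Longrightarrow$ (ii), suppose $D_k(P)$ is $W_{d-1}$ for some $k\neq0$. Then $D_k(P_{\mathsf{irr}})=D_k(P)-D_k(P_{\mathsf{rat}})$ is also $W_{d-1}$. This is the step I expect to be the main obstacle: the cancellation issue is now in the wrong direction. Let $e=\deg P_{\mathsf{irr}}$. If $e\le 1$, then $D_k(P_{\mathsf{irr}})$ is constant and so is not a $W$-polynomial, a contradiction. If $e\ge2$, the argument above applied with $d$ replaced by $e$ shows that $D_k(P_{\mathsf{irr}})$ has degree $e-1$ with irrational leading coefficient $k\,e\,a_e$, so it is $W_{e-1}$. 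Hence $e-1=d-1$, that is, $e=d$, which is (ii).

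The only care needed is in that degree-comparison step. One must avoid claiming that all coefficients of $D_k(P_{\mathsf{irr}})$ are irrational. It suffices to track the single top coefficient via Lemma~\ref{lem-aux}(d), since the $W$-index is determined by the highest-degree irrational coefficient.
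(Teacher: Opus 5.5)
Your proof is correct and follows the paper's own argument. Like the paper, you split $P=P_{\mathsf{rat}}+P_{\mathsf{irr}}$, note that adding $D_k(P_{\mathsf{rat}})\in\QQ[X]$ does not change which coefficients are irrational, and read off the top irrational coefficient $k\,d\,a_d$ of $D_k(P_{\mathsf{irr}})$ from Lemma~\ref{lem-aux}(d). Your explicit treatment of the case $\deg(P_{\mathsf{irr}})\le 1$ in (iv) $\Longrightarrow$ (ii) spells out a step the paper compresses into ``Lemma~\ref{lem-aux}(d) implies'': there $D_k(P_{\mathsf{irr}})$ is constant and hence not a $W$-polynomial.
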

\begin{proof}
The equivalence {\bf (i) $\Leftrightarrow$ (ii)} has already been established.

\smallskip
{\bf (ii) $\Longrightarrow$ (iii):} Let $m \in \ZZ$ be fixed but arbitrary.

By Lemma~\ref{lem-aux}(d), $D_m(P_{\mathsf{irr}})$ is a $W_{d-1}$-polynomial. As $D_m(P_{\mathsf{rat}}) \in \QQ[X]$ we get that $D_m(P)$ is a $W_{d-1}$-polynomial.

\smallskip
{\bf (iii) $\Longrightarrow$ (iv):} is obvious.

\medskip
{\bf (iv) $\Longrightarrow$ (ii):}  Since $D_k(P)$ is a $W_{d-1}$-polynomial, and $D_k(P_{\mathsf{rat}}) \in \QQ(X)$ we get that $D_k(P_{\mathsf{irr}})= D_k(P)-D_k(P_{\mathsf{rat}})$ is a $W_{d-1}$-polynomial. Thus, Lemma~\ref{lem-aux} (d) implies that $\deg(P_{\mathsf{irr}}) =d$.
\end{proof}

\begin{corollary}\label{cor-DkWpol} Let $P \in \RR(X)$. Then, the following are equivalent:
\begin{itemize}
  \item[(i)] $P$ is a $W_d$-polynomial for some $d \geq 2$.
  \item[(ii)] $\deg(P_{\mathsf{irr}}) \geq 2$.
  \item[(iii)] $D_k(P)$ is a $W$-polynomial for all $0\neq k \in \ZZ$.
  \item[(iv)] There exists some $0 \neq m \in \ZZ$ so that $D_k(P)$ is a $W$-polynomial.
\end{itemize}
\end{corollary}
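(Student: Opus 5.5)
The plan is to prove the cycle (i)$\Rightarrow$(ii)$\Rightarrow$(iii)$\Rightarrow$(iv)$\Rightarrow$(ii)$\Rightarrow$(i). Each step follows from Lemma~\ref{lem-aux} and from the preceding lemma, which answers (Q2) for a fixed $d \geq 2$. The equivalence (i)$\Leftrightarrow$(ii) is immediate from Lemma~\ref{lem-aux}(a). By that lemma, $P$ is a $W_d$-polynomial exactly when $\deg(P_{\mathsf{irr}})=d$, so $P$ is a $W_d$-polynomial for some $d\ge 2$ exactly when $\deg(P_{\mathsf{irr}})\ge 2$.

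For (ii)$\Rightarrow$(iii), set $d=\deg(P_{\mathsf{irr}})\geq 2$. The previous lemma then gives that $D_k(P)$ is a $W_{d-1}$-polynomial, hence a $W$-polynomial, for every $k\neq 0$. The implication (iii)$\Rightarrow$(iv) is trivial; I read (iv) as asserting that some $k \neq 0$ has $D_k(P)$ a $W$-polynomial.

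The only step needing care is (iv)$\Rightarrow$(ii), and I will argue it by contraposition. Assume $\deg(P_{\mathsf{irr}})\le 1$, so that $P_{\mathsf{irr}}(X)=a_1X+a_0$ with $a_1,a_0$ each either $0$ or irrational. Then $D_k(P_{\mathsf{irr}})=a_1k$ is a constant polynomial. By Lemma~\ref{lem-aux}(b), $D_k(P)=D_k(P_{\mathsf{rat}})+a_1k$, where $D_k(P_{\mathsf{rat}})\in\QQ[X]$. Every coefficient of $D_k(P)$ of degree $\geq 1$ is therefore rational, and only the constant term can be irrational. Since the definition of a $W_d$-polynomial uses only the indices $1\le j\le n$, the set $\{j\ge1 : a_j\notin\QQ\}$ is empty for $D_k(P)$. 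Hence $D_k(P)$ is not a $W$-polynomial for any $k\neq0$, which contradicts (iv).

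The only subtlety is this last bookkeeping point: irrationality of the constant term does not count, which is exactly why degree-$\le 1$ irrational parts fail. The edge case $\deg(P)=0$, where Lemma~\ref{lem-aux}(d) does not apply, is harmless, because then $D_k(P)=0$.
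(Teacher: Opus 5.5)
Your proof is correct and follows essentially the paper's route, since the corollary is meant to be read off from Lemma~\ref{lem-aux}(a) and the preceding lemma answering (Q2), applied for each $d\ge 2$. The only deviation is in (iv)$\Rightarrow$(ii): the paper would apply that lemma with $d=e+1$ to a $W_e$-polynomial $D_k(P)$, while you prove the contrapositive directly from $D_k(P_{\mathsf{irr}})=a_1k$ being constant when $\deg(P_{\mathsf{irr}})\le 1$. Both work, and yours makes explicit the low-degree case that the lemma's own proof of (iv)$\Rightarrow$(ii) passes over quickly.
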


% Combining all results in this section we get the following:

% \begin{theorem}\label{T1} Let $P(X) =a_dX^d+\ldots +a_1X+a_0 \in \RR[X]$ be a $W_d$-polynomial for some $d \geq 2$. Then,
% %
% \begin{align*}
%   \lim_{m \to \infty} \frac{1}{2m+1} \sum_{j=l-m}^{l+m} \cos( 2\pi P(j)) \cos(2 \pi P(j+k)) &=0  \qquad \forall k \neq 0  \\
%  \lim_{m \to \infty} \frac{1}{2m+1} \sum_{j=l-m}^{l+m} \cos( 2\pi P(j)) \cos(2 \pi P(j)) &= \frac{1}{2} \,,
% \end{align*}
% %
% uniformly in $l$.
% \end{theorem}
% \begin{proof} For all $k \in \ZZ$ we have
% %
% \[
% \cos( 2\pi P(j)) \cos(2 \pi P(j+k)) = \frac{1}{2} \left( \cos(2\pi S_k(P)(j)) + \cos(2\pi D_k(P)(j)) \right) \,.
% \]
% %
% Now, the previous results together with $D_0(P)=0$ and Theorem~\ref{Thm-law}   proves the claim.
% \end{proof}

\subsection{Non $W$-polynomials.}

Let $P(X) = a_nX^n + \ldots +a_1X+a_0 \in \RR[X]$ be so that $a_n, \ldots, a_1 \in \QQ$. Then, setting $N$ to be the common denominator of $a_n, \ldots, a_1$, we can write $P(X)=\frac{Q(X)}{N}+c$ for some $Q \in \ZZ[X]$ and $c \in \RR$. Now, recalling that for $a,b \in \ZZ$ and $Q \in \ZZ[X]$, we have $(b-a)|(Q(b)-Q(a))$, we get that $P(j+N)-P(j) \in \ZZ$ for every $j \in \ZZ$. Therefore, we obtain

\begin{proposition}\label{prop1}
Let $P(X)=a_nX^n+\ldots+a_1X+a_0 \in \RR[X]$ be so that $a_n, \ldots, a_1 \in \QQ$, and let $N$ to be the common denominator of $a_n, \ldots, a_1$.
Then, $f(j) := e^{2 \pi i P(j)}$ and $g(j) := \cos(2 \pi  P(j))$ are periodic with period $N$. In particular, $f,g$ are amenable and
$$
M(f) = \frac{1}{N} \sum_{j=1}^N e^{2 \pi i P(j)}, \quad M(g) = \frac{1}{N}\sum_{j=1}^N \cos(2 \pi i P(j)).
$$
\end{proposition}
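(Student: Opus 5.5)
The plan is to first establish that $P(j+N)-P(j)\in\ZZ$ for every $j\in\ZZ$, and then read off periodicity and the means. Write $P(X)=\frac{Q(X)}{N}+c$ with $c=a_0\in\RR$ and $Q\in\ZZ[X]$. This is possible because $N a_k\in\ZZ$ for $1\le k\le n$, by the choice of $N$ as the common denominator.

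For integers $a,b$ and $Q\in\ZZ[X]$ we have $(b-a)\mid(Q(b)-Q(a))$, since $b^k-a^k=(b-a)(b^{k-1}+\dots+a^{k-1})$ for each monomial. Taking $b=j+N$ and $a=j$ gives $N\mid Q(j+N)-Q(j)$. Hence
\[
P(j+N)-P(j)=\frac{Q(j+N)-Q(j)}{N}\in\ZZ .
\]
It follows that $e^{2\pi i P(j+N)}=e^{2\pi i P(j)}$ and $\cos(2\pi P(j+N))=\cos(2\pi P(j))$, so $f$ and $g$ are $N$-periodic.

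Amenability then follows from a standard fact: an $N$-periodic bounded function $h$ is amenable with mean $\frac1N\sum_{j=1}^N h(j)$. To see this, split any window $\{l-m,\dots,l+m\}$ into $\lfloor (2m+1)/N\rfloor$ complete periods plus at most $N-1$ leftover terms. Every complete period of consecutive integers contributes exactly $\sum_{j=1}^N h(j)$, by periodicity. The leftover terms contribute at most $(N-1)\|h\|_\infty$ in absolute value. Therefore
\[
\Bigl|\frac{1}{2m+1}\sum_{j=l-m}^{l+m}h(j)-\frac1N\sum_{j=1}^N h(j)\Bigr|\le \frac{C N\|h\|_\infty}{2m+1},
\]
and this bound is independent of $l$, so the convergence is uniform in $l$.

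Applying this fact to $h=f$ and $h=g$ gives the stated formulas for $M(f)$ and $M(g)$; the expression $\cos(2\pi i P(j))$ in the statement should read $\cos(2\pi P(j))$. No step presents a genuine obstacle. The only point requiring care is that the constant term $c$ may be irrational, and this is harmless because it cancels in the difference $P(j+N)-P(j)$.
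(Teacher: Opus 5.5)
Your proof is correct and follows the paper's argument exactly. You write $P = Q/N + c$ with $Q \in \ZZ[X]$, use $(b-a) \mid (Q(b)-Q(a))$ to get $P(j+N)-P(j) \in \ZZ$, and conclude $N$-periodicity; the only addition is that you spell out why an $N$-periodic bounded function is amenable with mean equal to its average over one period (and correctly flag the typo $\cos(2\pi i P(j))$), which the paper treats as evident.
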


The following is an easy exercise and we skip the proof.

\begin{lemma}\label{lem1}
Let $h : \ZZ \to \CC$ be $N$ periodic. Then there exist $c_1, \ldots, c_N \in \CC$ such that
\[
h(j) = \sum_{l=1}^N c_l e^{2 \pi i \frac{j \cdot l}{N}},
\]
that is, $h$ agrees with a trigonometric polynomial.
\end{lemma}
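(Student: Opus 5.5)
The statement is Lemma~\ref{lem1}: every $N$-periodic $h:\ZZ\to\CC$ is a trigonometric polynomial in the characters $j\mapsto e^{2\pi i jl/N}$, $l=1,\ldots,N$. My plan is to treat this as finite-dimensional linear algebra on $\ZZ/N\ZZ$, i.e.\ the discrete Fourier transform.

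First I note that an $N$-periodic function is determined by its values at $j=1,\ldots,N$. Likewise, each character $e_l(j):=e^{2\pi i jl/N}$ is $N$-periodic. So it suffices to find $c_1,\ldots,c_N$ with $h(j)=\sum_l c_l e_l(j)$ for $j=1,\ldots,N$; periodicity of both sides then extends the identity to all $j\in\ZZ$.

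Next I define the coefficients explicitly:
$$
c_l := \frac{1}{N}\sum_{k=1}^N h(k)\, e^{-2\pi i kl/N}.
$$
Substituting, I get
$$
\sum_{l=1}^N c_l e_l(j) = \sum_{k=1}^N h(k)\,\frac{1}{N}\sum_{l=1}^N e^{2\pi i (j-k)l/N}.
$$
The step that carries the argument is the orthogonality relation: the inner sum equals $N$ if $N \mid (j-k)$ and $0$ otherwise. When $N\nmid(j-k)$, $q:=e^{2\pi i(j-k)/N}$ satisfies $q\neq 1$ and $q^N=1$, so the geometric sum is $q(q^N-1)/(q-1)=0$. For $j\in\{1,\ldots,N\}$ the only surviving term is $k=j$, which gives $h(j)$.

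Alternatively, the $N$ vectors $(e_l(j))_{j=1}^N$ are pairwise orthogonal in $\CC^N$, by the same geometric-sum computation. They are therefore a basis of $\CC^N$, and $h|_{\{1,\ldots,N\}}$ expands in this basis. There is no real obstacle; the only point needing care is the geometric-sum evaluation when $q\neq1$.
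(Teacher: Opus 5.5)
Your proof is correct. The paper gives no proof of this lemma (it calls it an easy exercise), and your discrete Fourier inversion on $\ZZ/N\ZZ$ is the standard argument that fills the gap: explicit coefficients, geometric-sum orthogonality, then extension to all of $\ZZ$ by periodicity.
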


\begin{remark}\label{rem-W1} Let $P(X) = a_nX^n + \ldots +a_1X+a_0 \in \RR[X]$ be a $W_1$-polynomial. Then $P(X)-a_1X$ is not a $W$-polynomial. Therefore, by Proposition~\ref{prop1} and Lemma~\ref{lem1} there exist some $N$ and $c_1, \ldots, c_N \in \CC$ such that
\begin{align*}
 e^{2 \pi i P(j)}&=  e^{2 \pi i P(j)-a_1 j} e^{2 \pi i a_1 j}=  \sum_{j=1}^N c_j e^{2 \pi i \left(\frac{l}{N}+a_1 \right)j}\\
 \cos(2 \pi i P(j))&=  \frac{1}{2}   \sum_{j=1}^N c_j e^{2 \pi i \left(\frac{l}{N}+a_1 \right)j}+ \frac{1}{2}e^{-2 \pi i \left(\frac{l}{N}+a_1 \right)j} \,.
\end{align*}
\end{remark}

\section{Diffraction of $V(n)  = a+b  \cos(2 \pi P(n))$ }

This paper deals with the diffraction measure and the dynamical system  of the pseudorandom function $V : \ZZ \to \RR$ given by
$$
V(n) = a + b \cos \Big( 2 \pi \Big( x_1 + n x_2 + \frac{n(n-1)}{2} \alpha \Big) \Big),
$$
where $a, b \in \CC$, $x_, x_2 \in \RR$, and $\alpha \in \RR \backslash \QQ$. In fact, we can study more generally the diffraction of $V(n) :=  a+b\cos( 2 \pi P(n))$ for a polynomial $P$.

\smallskip

Let us start with the following preliminary result, which gives the interesting case Theorem~\ref{t.main}(a) below, and can be seen as a generalisation of it.

\begin{lemma}\label{diff-cos-wd}
Let $(y_n)_{n \in \ZZ}$ be a sequence such that
\begin{itemize}
\item{} The sequence $(y_n)_{n \in \ZZ}$ is well distributed.
\item{} For each $k \in \ZZ$ the sequence $(y_n+y_{n+k})_{n \in \ZZ}$ is well distributed.
\item{} For each $0\neq k \in \ZZ$ the sequence $(y_n-y_{n+k})_{n \in \ZZ}$ is well distributed.
\end{itemize}
Let $a,b \in \CC$ and let
\[
V(n) \, := \, a+b \cos(2 \pi y_n) \,.
\]
Then, $V$ has a unique autocorrelation and
$$
\gamma = |a|^2+\frac{|b|^2}{2} \delta_0, \quad \sigma = |a|^2\delta_{0+\ZZ}+\frac{|b|^2}{2} \theta_{\TT},
$$
where $\theta_{\TT}$ is the Haar probability measure on $\TT$.
\end{lemma}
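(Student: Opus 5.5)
We will verify condition (iii) of Proposition~\ref{cor:un-ac} directly by expanding the product $V(j)\overline{V(j-k)}$ into terms whose uniform means are supplied by the three well-distribution hypotheses. Fix $k\in\ZZ$ and write $c_j=\cos(2\pi y_j)$, which is real. Then
\[
V(j)\overline{V(j-k)} = |a|^2 + a\bar b\, c_{j-k} + b\bar a\, c_j + |b|^2 c_j c_{j-k},
\]
and the product-to-sum identity gives
\[
c_j c_{j-k} = \tfrac12\cos\bigl(2\pi(y_j+y_{j-k})\bigr) + \tfrac12 \cos\bigl(2\pi(y_j-y_{j-k})\bigr).
\]
Each summand is then a constant, or a continuous function on $\TT$ evaluated along one of the sequences in the hypotheses, possibly after reindexing $j\mapsto j+k$. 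A shift does not affect uniformity in $l$ of the averages over $\{l-m,\dots,l+m\}$, so each summand is amenable.

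We now compute the means term by term. Since $(y_n)$ is well distributed, $M(c_j)=M(c_{j-k})=\int_\TT\cos(2\pi x)\,\dd x=0$. Setting $n=j-k$, the sequence $y_{n+k}+y_n$ is well distributed by the second hypothesis, so $M\bigl(\cos(2\pi(y_j+y_{j-k}))\bigr)=0$ for every $k$, including $k=0$, where the sequence is $2y_n$. For $k\neq0$, the third hypothesis (using that $\cos$ is even, so the sign of the difference is irrelevant) gives $M\bigl(\cos(2\pi(y_j-y_{j-k}))\bigr)=0$. For $k=0$ the difference vanishes and the term equals $\tfrac12$ identically. Since finite linear combinations of amenable functions are amenable with additive means, we obtain
\[
\gamma(k)=M\bigl(V(j)\overline{V(j-k)}\bigr)=|a|^2+\tfrac{|b|^2}{2}\,\delta_0(k),
\]
and Proposition~\ref{cor:un-ac} shows that $V$ has a unique autocorrelation.

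Finally, we identify $\sigma$ from \eqref{eq-diff}. The Fourier coefficients of $\delta_{0+\ZZ}$ on $\TT$ are identically $1$, while those of Haar measure $\theta_\TT$ are $\int_\TT e^{2\pi i k y}\,\dd y=\delta_0(k)$. Hence $|a|^2\delta_{0+\ZZ}+\tfrac{|b|^2}{2}\theta_\TT$ has Fourier coefficients $\gamma(k)$. A finite measure on $\TT$ is determined by its Fourier coefficients, so this measure equals $\sigma$.

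No step is a genuine obstacle. The only care needed is the bookkeeping: the $k=0$ case has to be separated in the difference term, and we must confirm that reindexing by $k$ preserves uniform amenability.
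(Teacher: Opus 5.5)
Your proposal is correct and follows essentially the same route as the paper: the same expansion of $V(j)\overline{V(j-k)}$ via the product-to-sum identity, uniform means read off from the three well-distribution hypotheses, and the separate treatment of $k=0$ in the difference term. Your explicit check of the reindexing $n=j-k$ and of the Fourier coefficients identifying $\sigma$ fills in details the paper leaves implicit.
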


\begin{proof} Let $j,k \in \ZZ$. Then
\begin{align*}
V(j) \cdot \overline{V(j-k)}&= \left( a+b \cos(2 \pi y_{j}) \right) \cdot \left( \overline{a}+\overline{b} \cos(2 \pi y_{j-k}) \right) \\
&= |a|^2+a \cdot \bar{b} \cos(2 \pi y_{j-k}) +\bar{a} \cdot b \cos(2 \pi y_{j})+ |b|^2 \cos(2 \pi y_{j})\cos(2 \pi y_{j-k})  \\
&= |a|^2+a \cdot \bar{b} \cos(2 \pi y_{j-k}) +\bar{a} \cdot b \cos(2 \pi y_{j})\\
&+ \frac{|b|^2}{2}  \cos(2 \pi (y_{j}+ y_{j-k})+ \frac{|b|^2}{2}  \cos(2 \pi (y_{j}- y_{j-k}).
\end{align*}
Now, for each fixed $k \neq 0$, by the well-distribution of $y_j, y_{j-k}, y_{j}+y_{j-k}$ and $y_{j}-y_{j-k}$, the following mean exists uniformly and obeys $M(V(j) \cdot \overline{V(j-k)})=|a|^2$. On the other hand, when $k=0$, the sequences $y_j, y_{j-k}, y_{j}+y_{j-k}$ are still well distributed, but $y_{j}-y_{j-k}=0$. Therefore, in this case,  the following mean exists uniformly and obeys $M(V(j) \cdot \overline{V(j-k)})=|a|^2 +\frac{|b|^2}{2}$. The claim follows.
\end{proof}

\begin{remark}
If we replace ``well distributed'' by ``uniformly distributed'' in Lemma~\ref{diff-cos-wd}, then we can conclude that the autocorrelation and diffraction exist with respect to $F_m=\{ -m, \ldots, m\}$ and are given by the same formulas.
\end{remark}

\begin{theorem}\label{t.main}
Let $P(X)=a_dX^d+\ldots+a_1X+a_0 \in \RR[X]$, and consider for $a,b \in \CC$,
\[
V(n) \, : =\, a+b  \cos(2 \pi P(n))  \,.
\]
Then, the autocorrelation $\gamma$ of $V$ exists and is the same with respect to all F\o lner sequences. Moreover,
\begin{itemize}
  \item[(a)] If there exists some $j \geq 2$ such that $a_j \notin \QQ$, then
$$
\gamma = |a|^2+\frac{|b|^2}{2} \delta_0, \quad \sigma = |a|^2\delta_{0+\ZZ}+\frac{|b|^2}{2} \theta_{\TT} \,.
$$
  \item[(b)] If $a_n,\ldots, a_2 \in \QQ$ and $a_1 \notin \QQ$, then $V$ is a trigonometric polynomial on $\ZZ$ and there exists some $N$ such that
\[
\supp(\sigma) \subseteq  \Big\{ 0 , \pm a_1 , \pm a_1+ \frac{1}{N}, \ldots , \pm a_1+ \frac{N-1}{N} \Big\} .
\]
    \item[(c)] If $a_n \ldots, a_1 \in \QQ$, then there exists some $N$ such that $V$ is a $N$-periodic function on $\ZZ$ and
    \[
  \supp(\sigma) \, \subseteq \, \Big\{ 0, \frac{1}{N}, \ldots , \frac{N-1}{N} \Big\} .
  \]
\end{itemize}
\end{theorem}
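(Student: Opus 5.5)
The proof splits into the three cases of the statement. Existence and uniqueness of the autocorrelation is settled case by case, either through Lemma~\ref{diff-cos-wd} or through Proposition~\ref{cor:un-ac}(iii) by checking that $j\mapsto V(j)\overline{V(j-k)}$ is amenable.

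\emph{Case (a).} We apply Lemma~\ref{diff-cos-wd} with $y_n=P(n)$. Since some $a_j$ with $j\ge 2$ is irrational, $\deg(P_{\mathsf{irr}})\ge 2$, so $P$ is a $W$-polynomial.
\begin{itemize}
\item By Corollary~\ref{cor-SkWpol}, $S_k(P)$ is a $W$-polynomial for every $k$.
\item By Corollary~\ref{cor-DkWpol}, $D_k(P)$ is a $W$-polynomial for every $k\neq 0$.
\end{itemize}
The sequences in the lemma are $y_n+y_{n+k}=S_k(P)(n)$ and $y_n-y_{n+k}=-D_k(P)(n)$; the sign is harmless since $-Q$ is a $W$-polynomial whenever $Q$ is. Theorem~\ref{Thm-law} then gives well distribution of all three families, so Lemma~\ref{diff-cos-wd} yields the stated $\gamma$ and $\sigma$. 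Note that $\gamma(k)=|a|^2+\frac{|b|^2}{2}\delta_0(k)$ is indeed the Fourier transform of $|a|^2\delta_0+\frac{|b|^2}{2}\theta_\TT$.

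\emph{Case (c).} By Proposition~\ref{prop1}, $\cos(2\pi P(j))$ is $N$-periodic, hence so is $V$. Then $j\mapsto V(j)\overline{V(j-k)}$ is $N$-periodic and therefore amenable, so Proposition~\ref{cor:un-ac} gives a unique autocorrelation, and $\gamma$ is $N$-periodic. By Lemma~\ref{lem1}, $\gamma(k)=\sum_{l}c_l e^{2\pi i kl/N}$. By uniqueness of Fourier coefficients of measures on $\TT$, $\sigma=\sum_l c_l\delta_{l/N}$, which gives the support claim.

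\emph{Case (b).} By Remark~\ref{rem-W1}, $V(j)=\sum_{\beta\in B}c_\beta e^{2\pi i\beta j}$ is a trigonometric polynomial with frequency set
\[
B\subseteq\{0\}\cup\{\pm(a_1+l/N):\ l=0,\ldots,N-1\},
\]
where the constant $a$ contributes the frequency $0$. Expanding,
\[
V(j)\overline{V(j-k)}=\sum_{\beta,\beta'}c_\beta\overline{c_{\beta'}}\,e^{2\pi i\beta' k}\,e^{2\pi i(\beta-\beta')j}.
\]
Each term is amenable: its mean is $1$ if $\beta-\beta'\in\ZZ$ and $0$ otherwise, and the latter follows from Theorem~\ref{Thm-law} or a direct geometric-sum bound, which is uniform in the window. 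Hence $\gamma(k)=\sum_{\beta}\bigl|\sum_{\beta'\equiv\beta}c_{\beta'}\bigr|^2e^{2\pi i\beta k}$, after grouping frequencies modulo $\ZZ$. So $\sigma$ is a finite sum of point masses at the classes of the $\beta$, which lie in the stated set modulo $1$.

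The main obstacle is bookkeeping in (b): frequencies that coincide modulo $1$ must be grouped, and one has to confirm that $\pm a_1+l/N$ really exhausts the support, including the sign from the $e^{-2\pi i(\cdot)}$ half of the cosine. Everything else reduces to results already established.
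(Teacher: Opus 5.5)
Your proof is correct and essentially follows the paper. Part (a) is the same argument (Corollaries~\ref{cor-SkWpol} and \ref{cor-DkWpol}, then Theorem~\ref{Thm-law} and Lemma~\ref{diff-cos-wd}), and in (b) and (c) you likewise reduce to $V$ being a trigonometric polynomial, resp.\ $N$-periodic. The one difference is that where the paper simply quotes Corollary~\ref{cor:trig-poly}, you compute $\gamma$ directly: via amenability and Proposition~\ref{cor:un-ac}, and in (c) via periodicity of $\gamma$ plus uniqueness of Fourier coefficients. This is self-contained and does not need the frequencies to be distinct modulo $\ZZ$, an assumption implicit in the formula of Corollary~\ref{cor:trig-poly}.
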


\begin{remark}
The uniqueness of the autocorrelation also follows from Proposition~\ref{prop-Dwo} and Theorem~\ref{theorem1}(a) below.
\end{remark}

\begin{proof}
\noindent \textbf{(a)} Since $P(X)$ is a $W_j$-polynomial with $j\geq 2$, by Corollary~\ref{cor-SkWpol} and Corollary~\ref{cor-DkWpol}, $S_k(P)$ is a $W$-polynomial for all $k \in \ZZ$ and $D_k(P)$ is a $W$-polynomial for all $0 \neq k \in \ZZ$.

Therefore, by Theorem~\ref{Thm-law}, the sequence $(P(n))_{n \in \ZZ}$ is well distributed, the sequences $(P(n)+P(n+k))_{n \in \ZZ}$ are well distributed for all $k \in \ZZ$ and the sequences $(P(n)-P(n+k))_{n \in \ZZ}$ are well distributed for all $0 \neq k \in \ZZ$. The claim follows from Lemma~\ref{diff-cos-wd}.

\smallskip \noindent \textbf{(b)} Follows from Remark~\ref{rem-W1} and Corollary~\ref{cor:trig-poly}.

\smallskip \noindent   \textbf{(c)}
This follows from Proposition~\ref{prop1}, Lemma~\ref{lem1} and Corollary~\ref{cor:trig-poly}.
%Let $Q(X) \,:=\, P(X)-cX$, where $c=a_1$ in (b) and $c=0$ in (c). Then, $Q(X) \in \QQ[X]$ and $P(X)=Q(X)+c X$. Therefore,
%%
%\begin{align*}
%V(n) \, &=\,  a+b  \cos(2 \pi P(n)) = a+\frac{b}{2} e^{2 \pi i P(n)}+ \frac{b}{2} e^{-2 \pi i P(n)} \\
%\, &= \,  a+\frac{b}{2} e^{2 \pi i Q(n)} e^{2 \pi i c n}+ \frac{b}{2} e^{-2 \pi i Q(n)} e^{-2 \pi i c n} \,.
%\end{align*}
%%
%Now, by Proposition~\ref{prop1} and Lemma~\ref{lem1} there exist $N \in \NN$ and $c_1, \ldots, c_N \in \CC$ such that
%%
%\[
%e^{2 \pi i Q(n)} = \sum_{j=1}^N c_j e^{2 \pi i \frac{j}{N}} \,.
%\]
%%
%Conjugating we get
%%
%\[
%e^{-2 \pi i Q(n)} = \sum_{j=1}^N \overline{c_j} e^{-2 \pi i \frac{j}{N}} \,.
%\]
%%
%This gives
%%
%\[
%V(n)= a+\sum_{j=1}^N \frac{b}{2} c_j e^{2 \pi i (c+\frac{j}{N})}+\sum_{j=1}^N \frac{b}{2} \overline{c_j} e^{-2 \pi i (c+\frac{j}{N})} \,.
%\]
%%
%Therefore, $V$ is a trigonometric polynomial on $\ZZ$ and by Corollary~\ref{cor:trig-poly} we have
%%
%\[
%\supp(\sigma) \subseteq  \{0 , \pm c , \pm c+ \frac{1}{N}, \ldots , \pm c+ \frac{N-1}{N} \}+ \ZZ \,.
%\]
%%
%(b) and (c) now follow.
% \smallskip \noindent (c)  By Proposition~\ref{prop1} and Lemma~\ref{lem1} there exists some $N$ and $c_1, \ldots, c_N \in \CC$ such that
% %
% \[
% \cos(2 \pi P(n)) = \sum_{j=1}^N c_j e^{2 \pi i \frac{j}{N}} \,,
% \]
% %
% that is $V$ is a trigonometric polynomial on $\ZZ$ and and by Corollary~\ref{cor:trig-poly} we have
% %
% \[
% \supp(\sigma) = \{0 ,  \frac{1}{N}, \ldots , \frac{N-1}{N} \}+ \ZZ \,.
% \]
% %
\end{proof}

As immediate consequences we get:

\begin{corollary}\label{cor-diff}
Let $\lambda, x_1, x_2 \in \RR$ and $\alpha \in \RR \backslash \QQ$. Then,
\[
V_0(n) \, =\,  \lambda \cos \Big( 2 \pi \Big( x_1 + n x_2 + \frac{n(n-1)}{2} \alpha \Big) \Big)
\]
and
\[
V_1(n) \, =\,  \lambda \cos \Big( 2 \pi \alpha n^2 \Big).
\]
both have unique autocorrelation $\gamma$ and diffraction $\sigma$ given by
 \[
    \gamma =\frac{|\lambda|^2}{2} \delta_0  \qquad \mbox{ and } \qquad
    \sigma =  \frac{|\lambda|^2}{2}\theta_{\TT} \,.
\]
\end{corollary}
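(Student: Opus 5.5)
Both statements follow from Theorem~\ref{t.main}(a) with $a=0$ and $b=\lambda$. The plan is to write each potential as $V(n)=b\cos(2\pi P(n))$ for a real polynomial $P$ of degree $2$ whose leading coefficient is irrational. Once that is done, the conclusion is read off from Theorem~\ref{t.main}(a).

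For $V_0$, I would expand the argument of the cosine:
\[
P_0(X)=x_1+x_2X+\frac{\alpha}{2}(X^2-X)=\frac{\alpha}{2}X^2+\Big(x_2-\frac{\alpha}{2}\Big)X+x_1 .
\]
Here $a_2=\alpha/2$, which is irrational because $\alpha\notin\QQ$. For $V_1$, take $P_1(X)=\alpha X^2$, so $a_2=\alpha\notin\QQ$. In both cases there is an index $j=2\geq 2$ with $a_j\notin\QQ$, so the hypothesis of Theorem~\ref{t.main}(a) holds. The values of $x_1,x_2$ play no role, since only one coefficient of degree at least $2$ needs to be irrational.

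Theorem~\ref{t.main} then gives that the autocorrelation of $V_0$ (resp.\ $V_1$) exists and is the same for all F\o lner sequences, that is, the autocorrelation is unique. Part (a) gives
\[
\gamma=|a|^2+\frac{|b|^2}{2}\delta_0 \quad\text{and}\quad \sigma=|a|^2\delta_{0+\ZZ}+\frac{|b|^2}{2}\theta_\TT .
\]
With $a=0$ and $b=\lambda$, this becomes $\gamma=\frac{|\lambda|^2}{2}\delta_0$ and $\sigma=\frac{|\lambda|^2}{2}\theta_\TT$.

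The only point needing care is the term $|a|^2$ in these formulas, which denotes the constant function $|a|^2$ in $\gamma$ and the Dirac mass at $0$ in $\sigma$. Because $a=0$ here, both contributions vanish. There is no real obstacle: the argument is a direct specialization of the theorem.
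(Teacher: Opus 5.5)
Your proposal is correct and matches the paper, which presents this corollary as an immediate consequence of Theorem~\ref{t.main}(a). The specialization is $a=0$, $b=\lambda$, and it applies because the quadratic coefficients $\alpha/2$ (for $V_0$) and $\alpha$ (for $V_1$) are irrational.
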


In particular, this establishes Theorem~\ref{t.mainapplication} from the introduction.

\section{Dynamics of $V(n)  = a+b  \cos(2 \pi P(n))$ }

In this section we study the dynamical spectrum of $V(n)  = a+b  \cos(2 \pi P(n))$.

\begin{theorem}\label{theorem1} Let $a,b \in \RR$ and let $P(X) \in \RR[X]$. Let
\[
V(n)  \, =\,  a+b  \cos(2 \pi P(n)) \,.
\]
Then,
\begin{itemize}
  \item[(a)] $V$ is uniquely ergodic.
  \item[(b)] All eigenvalues of $\XX(V)$ are topological.
  \item[(c)] If $\beta \in \TT$ is an eigenvalue for $\XX(V)$, then there exist some not necessarily distinct $m_1, \ldots, m_s$, $q_1, \ldots, q_t \in \ZZ$ such that
  \begin{align*}
  Q(X) \, &:= \, P(X+m_1)+P(X+m_2)+\ldots + P(X+m_s) \\
  &-P(X+q_1)-P(X+q_2)-\ldots - P(X+q_t)-\beta X
  \end{align*}
  satisfies
      \[
      \lim_{m \to \infty} \frac{1}{2m+1} \sum_{j=-m}^m e^{2 \pi i Q(j) } \neq 0 \,.
      \]
  In particular, all non-constant terms of $Q(X)$ must be rational. In this case, if $ P(X)=a_dX^d+\ldots +a_1X+a_0$, then $\beta \in \mbox{\rm Span}_{\ZZ}\{ a_d, \ldots, a_1 \} +\QQ$.
  \end{itemize}
\end{theorem}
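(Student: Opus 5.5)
The plan is to reduce everything to means of exponentials of polynomials, which are controlled by Theorem~\ref{Thm-law} and Proposition~\ref{prop1}.

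\textbf{(a) Unique ergodicity.} The hull $\XX(V)$ is compact, and finitely many coordinates separate points. So it suffices to show that for every finite family $n_1,\ldots,n_r$ the function $g(j)=\prod_{k=1}^r V(j+n_k)$ is amenable. Indeed, continuous functions depending on finitely many coordinates, namely polynomials in the $E_{n_k}$ (the algebra generated by the $E_n$ is dense by Stone--Weierstrass, since $V$ is real so it is self-adjoint), then have uniform ergodic averages along every orbit point $T^l V$. For $w\in\XX(V)$ that is a limit of shifts, the averages are approximated uniformly as well, since amenability is uniform in $l$. Uniform convergence of Birkhoff averages for a dense set of continuous functions gives unique ergodicity.

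To check amenability, write $\cos(2\pi P(x))=\tfrac12(e^{2\pi iP(x)}+e^{-2\pi iP(x)})$ and expand the product. It becomes a finite linear combination of $e^{2\pi i Q(j)}$, where $Q(X)=\sum_s P(X+m_s)-\sum_t P(X+q_t)$. Each such $Q$ is a real polynomial. If some nonconstant coefficient of $Q$ is irrational, Theorem~\ref{Thm-law} gives mean $0$ uniformly in $l$. Otherwise Proposition~\ref{prop1} shows $e^{2\pi iQ(j)}$ is periodic, hence amenable. Linear combinations of amenable functions are amenable.

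\textbf{(b) and (c) Eigenvalues.} First check that the hypothesis of Proposition~\ref{prop:top-eigen} holds for every $\beta$. Multiplying the expansion above by $e^{-2\pi i\beta j}$ yields combinations of $e^{2\pi i(Q(j)-\beta j)}$, and $Q(X)-\beta X$ is again a real polynomial. The same dichotomy (Theorem~\ref{Thm-law} or Proposition~\ref{prop1}) gives amenability. Proposition~\ref{prop:top-eigen} then gives (b) directly.

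For (c), an eigenvalue $\beta$ gives, by (iii) of that proposition, a nonzero mean of $\prod V(j+n_k)e^{-2\pi i\beta j}$. By linearity, at least one term $e^{2\pi i(Q(j)-\beta j)}$ in the expansion has nonzero mean. Terms coming from the constant $a$ simply correspond to fewer factors, with $s=t=0$ allowed. By Theorem~\ref{Thm-law}, a nonzero mean forces all nonconstant coefficients of $Q(X)-\beta X$ to be rational.

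In particular the linear coefficient of $Q(X)-\beta X$ must be rational. That linear coefficient is $\sum_s P'(m_s)-\sum_t P'(q_t)-\beta$. Each $P'(m)=\sum_i i a_i m^{i-1}$ lies in $\mathrm{Span}_\ZZ\{a_d,\ldots,a_1\}$, so $\beta\in\mathrm{Span}_\ZZ\{a_d,\ldots,a_1\}+\QQ$.

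\textbf{Main obstacle.} The only delicate step is passing from amenability of finite-coordinate products to unique ergodicity for all points of the hull, not just the orbit of $V$. I would handle it by approximating $w=\lim T^{l_i}V$ coordinatewise and using the uniformity in $l$. The remaining steps are bookkeeping with the expansion.
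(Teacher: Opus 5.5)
Your proof is correct and follows essentially the paper's route. You expand the products of cosines into combinations of $e^{2\pi i Q(j)}$ (times $e^{-2\pi i\beta j}$), get amenability from Theorem~\ref{Thm-law} or Proposition~\ref{prop1}, and conclude with the unique-ergodicity criterion and Proposition~\ref{prop:top-eigen}. The differences are cosmetic: you reprove the criterion of Theorem~\ref{thm:uechar}(b) by hand (approximating hull points and using uniformity in $l$), you work with $\prod_k V(j+n_k)$ rather than $\prod_k (V(j+n_k)-a)$ (so you allow fewer factors), and you make the final claim explicit through the linear coefficient $\sum_s P'(m_s)-\sum_t P'(q_t)-\beta$.
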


\begin{remark} The same result can also be proved for $a,b \in \CC$ (by using Theorem~\ref{thm:uechar} (a) instead of Theorem~\ref{thm:uechar} (b) and Proposition~\ref{propC7} instead of Proposition~\ref{prop:top-eigen}), but since the proof is slightly more complicated and we only need the real valued case, we restrict here to the latter case.
\end{remark}

\begin{proof}
Let us first note that for all not necessarily distinct $n_1, \ldots, n_r \in \ZZ$, we have
\begin{align*}
\prod_{k=1}^r \left( V(j+n_k) -a \right)&= \prod_{k=1}^r  \left( b  \cos(2 \pi P(j+n_k)) \right) = \frac{b^r}{2^r}  \prod_{k=1}^r \left( e^{2 \pi i P(j+n_k)}+ e^{-2 \pi i P(j+n_k)}\right) \\
\, &= \,  \frac{b^r}{2^r} \sum_{e_1, \ldots, e_r \in \{ \pm 1 \}} e^{2 \pi i \sum_{k=1}^r e_k P(j+n_k)} \,.
\end{align*}
Now, for each $\ee = (e_1, \ldots, e_r) \in \{ \pm 1 \}^r =: M_r$, denote
\[
P_\ee(X) \, := \, \sum_{k=1}^r e_k P(X+n_k) \,.
\]
Then,
\begin{equation}\label{eq1}
\prod_{k=1}^r \left( V(j+n_k) -a \right)=  \frac{b^r}{2^r} \sum_{\ee \in M_r} e^{2 \pi i P_\ee(j) } \,.
\end{equation}
\noindent \textbf{(a)} Let $n_1, \ldots, n_r \in \ZZ$ be not necessarily district. Then, by \eqref{eq1}, Theorem~\ref{Thm-law} and Proposition~\ref{prop1}  the function
\[
\ZZ \ni j \to \prod_{k=1}^r \left( V(j+n_k) -a \right)
\]
is amenable. (a) follows from Theorem~\ref{thm:uechar} (b).

\smallskip \noindent \textbf{(b)}  By \eqref{eq1} we have
\[
\left(\prod_{k=1}^r \left( V(j+n_k) -a \right)\right)  e^{2 \pi i \beta  j} =  \frac{b^r}{2^r} \sum_{\ee \in M_r} e^{2 \pi i (P_\ee(j) -
\beta j)}.
\]
Since this function is amenable by Theorem~\ref{Thm-law} and Proposition~\ref{prop1}, (b) follows from Proposition~\ref{prop:top-eigen}.

\smallskip \noindent \textbf{(c)} By Proposition~\ref{prop:top-eigen}, since $\beta$ is an eigenvalue, there exist not necessarily distinct $n_1, \ldots, n_r \in \ZZ$ such that
\[
\prod_{k=1}^r \left( V(j+n_k) -a \right)  e^{2 \pi i \beta j} =   \frac{b^r}{2^r} \sum_{\ee \in M_r} e^{2 \pi i (P_\ee(j) -
\beta j)}
\]
has non-zero mean. This implies that there exists some $\ee \in M_r$ such that $e^{2 \pi i \left(P_\ee(j)-\beta j\right)}$ has non-zero mean, which proves the first claim. Theorem~\ref{Thm-law}  implies then that all non-constant terms of $P_\ee(X)-\beta X$ are rational. The last claim follows by observing that the coefficient of $X$ in $P_\ee(X)-\beta X$ is rational.
\end{proof}

Calculating the spectral measures for arbitrary polynomials turns out to be extremely technical, for this reason we restrict to the particular case where $P(X)=\alpha X^2+a_1x+a_0$ is a $W_2$-polynomial. Let us note here first that when $P(X)$ is a not a $W$-polynomial, by Proposition~\ref{prop1} $V(n)= \cos(2 \pi P(n))$ is periodic, and in this case, the hull $\XX(V)$ is a finite Abelian group. On the other hand, if $P(X)$ is a $W_1$ polynomial, then by Remark~\ref{rem-W1}, $V(n)= \cos(2 \pi P(n))$ is a trigonometric polynomial and, in this case, the hull $\XX(V)$ is a compact Abelian group. In both situations, the dynamical spectrum is pure point.

Recall first from Appendix~\ref{App:B} that for a bounded real valued $V : \ZZ \to \CC$, $\AAA$ denotes the algebra generated by $\{1_{\XX(V)} \} \cup \{ E_n : n \in \ZZ\}$; $\AAA$ is dense in $C(\XX(V))$.

\begin{theorem}\label{T2}
Let $a,b, \alpha, a_1, a_0 \in \RR$ be so that $b \neq 0$ and $\alpha \notin \QQ$. Let
\[
V(n)  \, =\,  a+b  \cos(2 \pi (\alpha n^2+ a_1 n+a_0)).
\]
 Then,
\begin{itemize}
  \item[(a)] $V$ is uniquely ergodic.
  \item[(b)] The set of eigenvalues is $2 \alpha \ZZ$ and all eigenvalues are topological..
  \item[(c)] For each $f \in \AAA$, there exists a pure point measure $\mu$ with finite support and a trigonometric polynomial $T$ such that
\[
\sigma_f = \mu+ T \theta_{\TT} \,.
\]
In particular the dynamical spectrum is mixed pure point and absolutely continuous.
\end{itemize}
\end{theorem}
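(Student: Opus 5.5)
Part (a) is a special case of Theorem~\ref{theorem1}(a), and the topological part of (b) is Theorem~\ref{theorem1}(b). So the work lies in identifying the eigenvalue set and computing the spectral measures $\sigma_f$ for $f \in \AAA$. The plan is to expand everything via \eqref{eq1}. Write $P(X) = \alpha X^2 + a_1 X + a_0$. For $\ee \in \{\pm1\}^r$ and shifts $n_1,\ldots,n_r$, we have
\[
P_\ee(X) = \Big(\sum_k e_k\Big)\alpha X^2 + \Big(2\alpha \sum_k e_k n_k + a_1 \sum_k e_k\Big) X + c_\ee .
\]
Hence every element of $\AAA$ is a finite linear combination of constants and of sequences $j \mapsto e^{2\pi i P_\ee(j)}$, evaluated along the orbit. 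These sequences split into two types. Those with $\sum e_k \neq 0$ are genuinely quadratic, with an irrational leading coefficient. Those with $\sum e_k = 0$ are pure characters $e^{2\pi i (2\alpha s j + c)}$ with $s = \sum e_k n_k \in \ZZ$.

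For (b), I would use Theorem~\ref{theorem1}(c). If $\beta$ is an eigenvalue, then some $e^{2\pi i (P_\ee(j) - \beta j)}$ has nonzero mean, and Theorem~\ref{Thm-law} forces $P_\ee(X) - \beta X$ to have rational non-constant part. Since $\alpha$ is irrational, this gives $\sum e_k = 0$ and $\beta - 2\alpha s \in \QQ$. To pin $\beta$ down exactly in $2\alpha\ZZ$, I would argue that the coefficient must in fact vanish modulo $1$: if $\beta - 2\alpha s = p/q \neq 0$ modulo $1$, the mean of $e^{2\pi i (p/q) j}$ is $0$. So $\beta = 2\alpha s$ in $\TT$. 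Conversely, every $2\alpha s$ is an eigenvalue. Take $r = 2$, $e = (1,-1)$ and $n_1 - n_2 = s$. Then $\big(V(j+n_1)-a\big)\big(V(j+n_2)-a\big)e^{-2\pi i \cdot 2\alpha s j}$ contains the term $\frac{b^2}{4} e^{2\pi i c}$. Its remaining terms are either quadratic with irrational leading coefficient or characters with frequency $\pm 2\alpha s - 2\alpha s$, the latter being nonzero modulo $1$ when $s \neq 0$ since $\alpha$ is irrational. So the mean is nonzero, and Proposition~\ref{prop:top-eigen} applies.

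For (c), take $f \in \AAA$ and write $f = f_{pp} + f_{q}$, where $f_{pp}$ collects the constant and character-type terms ($\sum e_k = 0$) and $f_q$ collects the quadratic terms. Under the Dworkin-type identification (unique ergodicity together with Proposition~\ref{prop-Dwo}), the correlation $\langle f, T^k f\rangle$ is the uniform mean of $f(T^j w)\overline{f(T^{j-k}w)}$. I would compute the three kinds of products. First, products of two character terms give means supported on finitely many frequencies $2\alpha s$; these produce the finite-support pure point measure $\mu$. Second, cross terms between a character and a quadratic exponential are still quadratic with irrational leading coefficient, so they have zero mean by Theorem~\ref{Thm-law}. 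Third, for a product of two quadratic exponentials $e^{2\pi i(P_\ee(j) - \overline{P_{\ee'}}(j-k))}$, the leading coefficients $\sigma_\ee\alpha$ and $\sigma_{\ee'}\alpha$ must cancel, or else the mean is $0$. When they cancel, the linear coefficient is $2\alpha(\dots) + 2\sigma\alpha k$ with $\sigma = \sum e_k \neq 0$. For each fixed pair $(\ee,\ee')$, this yields a nonzero mean only for the single value of $k$ that makes the linear coefficient vanish. Thus the quadratic part contributes a finitely supported positive-definite function of $k$. Its Fourier transform is a trigonometric polynomial $T$ times $\theta_{\TT}$, which gives $\sigma_f = \mu + T\theta_\TT$.

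The main obstacle is the bookkeeping in this third case. One has to verify that for each pair only finitely many $k$ survive, uniformly over the finitely many terms. One also has to handle the linear coefficient landing in $\QQ \setminus \ZZ$; the argument above rules this out because $2\alpha(\cdot)$ and $a_1\sigma$ combine so that the rational parts cancel exactly, but this must be checked carefully. Finally, to get genuinely mixed spectrum, I would exhibit $f = E_0 - a$: by Corollary~\ref{cor-diff}, its spectral measure is $\frac{b^2}{2}\theta_\TT \neq 0$, while the eigenvalues from (b) supply the pure point part.
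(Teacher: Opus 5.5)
Your proposal is correct and follows essentially the same route as the paper. You get (a) and the topological part of (b) from Theorem~\ref{theorem1}, and the inclusion of the eigenvalues in $2\alpha\ZZ$ via Theorem~\ref{theorem1}(c) and Theorem~\ref{Thm-law}. For (c), you sort sign patterns by whether the $X^2$-coefficient cancels and whether the linear coefficient $2\alpha(s_\ee - s_{\ee'} + \sigma k)$ can vanish, which is exactly the paper's partition into $S_1,\dots,S_4$; as you suspected, the $a_1\sigma$ terms cancel exactly, so this coefficient lies in $2\alpha\ZZ$.

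The only difference is cosmetic. For the reverse inclusion in (b) you exhibit a nonzero Fourier--Bohr coefficient for each $2\alpha s$ directly via $n_1-n_2=s$. The paper does this only for $2\alpha$ (with $n_1=0$, $n_2=1$) and then uses that the eigenvalues of an ergodic system form a group.
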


\begin{remark}
With additional effort, the same result can again be proved for $a,b \in \CC$.
\end{remark}

\begin{proof}
Let $P(X):=  \alpha X^2+ a_1X +a_0$.

\smallskip \noindent \textbf{(a)} and the second statement in \textbf{(b)} follow from Theorem~\ref{theorem1}.

For the first statement in \noindent \textbf{(b)}, suppose first that $\beta$ be an eigenvalue. By Theorem~\ref{theorem1} there exist $m_1, m_2, \ldots, m_s$, $q_1, q_2, \ldots, q_t$ such that
\[
\lim_{m \to \infty} \frac{1}{2m+1} \sum_{j=-m}^m e^{2 \pi  i Q(j)} \neq 0,
\]
where
\begin{align*}
Q(X) &:=  P(X+m_1)+P(X+m_2)+\ldots + P(X+m_s)^2-P(X+q_1)^2-P(X+q_2)^2 \\
&-\ldots - P(X+q_t)^2-\beta X \,.
\end{align*}
In particular, $Q$ is a $W$-polynomial.

A short computation gives $Q(X)=AX^2+BX+C$, where
\begin{align*}
A&= \alpha (s-t)                \\
B&=  2 \alpha (m_1+m_2+\ldots +m_s - q_1-q_2-\ldots -q_t)+ a_1(s-t)
- \beta
%           \\
%C&= \alpha (m_1^2+m_2^2+\ldots +m_s^2 - q_1^2-q_2^2-\ldots -q_t^2) \\
%& \quad +   a_1 (m_1+m_2+\ldots +m_s - q_1-q_2-\ldots -q_t)   +a_0(s-t)
.
\end{align*}

Since $Q$ is a $W$-polynomial, we get that $A \in \QQ$, and hence $s-t=0$. This gives that $Q(X)=BX+C$. Since
\[
0 \neq \lim_{m \to \infty} \frac{1}{2m+1} \sum_{j=-m}^m e^{2 \pi  i Q(j)}  = e^{2 \pi  i C} \lim_{m \to \infty} \frac{1}{2m+1} \sum_{j=-m}^m e^{2 \pi  i Bj} ,
\]
the character $e^{2 \pi  i Bj}$ has non-zero mean and hence it is trivial. Thus $B=0$, and hence $\beta \in 2 \alpha \ZZ$.

Conversely, let us show that $2\alpha$ is an eigenvalue. Then, since $\mm$ is ergodic, the spectrum is a group and hence each element of $2 \alpha \ZZ$ is an eigenvalue.
%The strategy of the proof is simple: By Proposition~\ref{prop:top-eigen} we must find  not necessarily distinct $n_1, \ldots, n_r, m_1, \ldots, m_s \in \ZZ$ such that
%%
%\[
%\lim_{m \to \infty} \frac{1}{2m+1} \sum_{j=-m}^m \left( \prod_{k=1}^r V(j+n_k) -a \right) \cdot e^{-4 \pi i  \alpha j} \neq 0 \,.
%\]
%%
%The proof of Theorem~\ref{theorem1}(c) then tells us that in this case, there must exist a choice of $e_1, \ldots, e_r \in \{ \pm 1 \}$ such that
%%
%\[
%\alpha e_1P(X+n_1)^2 e_2(X+n_2)^2+\ldots +  e_kP(X+n_k)^2-2 \alpha X
%\]
%%
%is not a $W$-polynomial.

Since $P(X)= \alpha X^2+a_1X+a_0$, we have that $P(X+1)-P(X) -2\alpha X = \alpha+a_1$ is not a $W$-polynomial, and hence it is natural to try $n_1=0, n_2=1$. Let us calculate $\left( V(j) -a \right) \left( V(j+1) -a \right)  e^{-4 \pi i  \alpha  j}$ and see if it has non-zero mean. We have
\begin{align*}
&\left( V(j) -a \right) \left( V(j+1) -a \right)  e^{-4 \pi i  \alpha  j}= b^2 \cos(2 \pi P(j+1))\cos(2 \pi P(j))e^{-4 \pi i  \alpha  j}  \\
&= \frac{b^2}{4} e^{2 \pi i \left(P(j+1)+P(j)-2\alpha j\right)}  +\frac{b^2}{4} e^{2 \pi i \left( P(j+1)-P(j)-2\alpha j\right) }  \\
& \quad +\frac{b^2}{4} e^{2 \pi i \left(-P(j+1)+P(j)-2\alpha j\right)}  +\frac{b^2}{4} e^{2 \pi i \left(-P(j+1)-P(j)-2\alpha j\right)}.
\end{align*}
Now, each of the polynomials $P(X+1)+P(X) -2\alpha X, -P(X+1)-P(X)-2\alpha X$ has irrational quadratic coefficient and hence it is a $W_2$-polynomial. Also, $-P(X+1)+P(X)-2\alpha X=-4 \alpha X -\alpha -a_1$ is a $W_1$-polynomial. Theorem~\ref{Thm-law} then implies
\begin{align*}
M(e^{2 \pi i \left(P(j+1)+P(j)-2\alpha j\right)}) &=M( e^{2 \pi i \left(-P(j+1)+P(j)-2\alpha j\right)}) \\
&=M(e^{2 \pi i \left(-P(j+1)-P(j)-2\alpha j\right)})=0.
\end{align*}
Therefore, as $P(j+1)-P(j) -2\alpha X= \alpha+a_1$, we have
\begin{align*}
\lim_{m \to \infty} \frac{1}{2m+1} \sum_{j=-m}^m \left( V(j) -a \right) \left( V(j+1) -a \right)  e^{-2 \pi i  \alpha \cdot j}&= \frac{b^2}{4} M(e^{2 \pi i \left( P(j+1)-P(j)-2\alpha j\right) } ) \\
&= \frac{b^2}{4}e^{2 \pi i (\alpha+a_1)} \neq 0.
\end{align*}
This shows that $\beta=2\alpha$ is an eigenvalue.

\smallskip \noindent \textbf{(c)} Let  $n_1, \ldots, n_r, m_1, \ldots, m_s \in \ZZ$ be not necessarily distinct and let $k \in \ZZ$. As in Appendix~\ref{App:B}, we denote $F_n := E_n-a$. Then, $\AAA$ is the algebra generated by $\{1_{\XX(V)} \} \cup \{ F_n : n \in \ZZ\}$.

Let us calculate
\[
\Big\langle \prod_{j=1}^r F_{n_j}, T^k \prod_{l=1}^s F_{m_l} \Big\rangle.
\]
By unique ergodicity we have
\begin{align*}
&  \Big\langle \prod_{j=1}^r F_{n_j}, T^k \prod_{l=1}^s F_{m_l} \Big\rangle   = \lim_{m \to \infty} \frac{1}{2m+1} \sum_{j=-m}^m \prod_{p=1}^r   F_{n_p}(T^j V)  \prod_{l=1}^s F_{m_l+k}(T^j V) \\
&=\frac{ b^{r+s}}{2^{r+s}}  \lim_{m \to \infty}  \frac{1}{2m+1} \sum_{j=-m}^m  \prod_{p=1}^r \left( e^{2 \pi i P(j+n_p)} +e^{-2 \pi i P (j+n_p)} \right)  \prod_{l=1}^s \left( e^{2 \pi i P(j+m_l+k)}+e^{-2 \pi i P(j+m_l+k)}\right) \\
&=\frac{ b^{r+s}}{2^{r+s}}  \sum_{\ee \in M_{r+s}} \lim_{m \to \infty}  \frac{1}{2m+1} \sum_{j=-m}^m  e^{2 \pi i Q_\ee(j,k)},
\end{align*}
where $M_{r+s} \, := \, \{ \pm 1 \}^{r+s}$, and for each $\ee= (e_1, \ldots, e_r, f_1, \ldots, f_s) \in M_{r+s}$, we define
\[
Q_{\ee}(X,Y) \, := \, \left( \sum_{p=1}^r e_p P(X+n_p) \right)+  \left( \sum_{l=1}^s f_lP(X+Y+m_l) \right).
\]

A short computation gives
% \begin{align*}
% P(X+n_p)& =\alpha X^2 +(2 \alpha n_p + a_1) X+ \alpha n_p^2+ a_1 n_p+a_0\\
% P(X+Y+m_l)&=\alpha X^2+\alpha Y^2+2 \alpha XY + (2 \alpha m_l +a_1) X+ (2 \alpha m_l +a_1) Y +\alpha m_l^2+a_1m_l+a_0
% \end{align*}
% and
%
\begin{equation}\label{eq-QQ}
Q_{\ee}(X,Y)=  A_{\ee} \alpha X^2 + B_{\ee}\alpha  X + C_{\ee} \alpha  Y^2+2C_{\ee}\alpha   XY +  D_{\ee}  Y+ E_{\ee},
\end{equation}
where
\begin{align*}
A_{\ee} &=  e_1+e_2+e_3+ \ldots + e_r+f_1+f_2+ \ldots + f_s   \in \ZZ \\
B_{\ee} &= 2 \left(e_1n_1 +e_2n_2+ \ldots + e_rn_r +f_1m_1+\ldots +f_sm_s\right) +a_1 \cdot A_{\ee} \\
C_{\ee} &=  f_1+f_2+ \ldots + f_s  \in \ZZ
%   D_{\ee} &= 2\alpha \left(f_1m_1+\ldots +f_sm_s\right)  +a_1\cdot (f_1+\ldots+f_s)  \\
%   E_{\ee} &= \alpha \left(e_1n_1^2+e_2n_2^2+\ldots +e_rn_r^2 +f_1m_1^2+\ldots +f_sm_s^2\right)
%   &+ \left(e_1n_1 +e_2n_2+ \ldots + e_rn_r +f_1m_1+\ldots +f_sm_s\right) a_1 +A_{\ee} a_0
 \end{align*}
and $D_{\ee}, E_{\ee} \in \RR$.
Now, partition $M_{r+s}$ the following way:
\begin{align*}
  S_1 &:=\{ \ee \in M_{r+s} : A_{\ee} \neq 0 \}  \\
  S_2 &:=\{ \ee \in M_{r+s} : A_{\ee} =0, C_{\ee} \neq 0 \}  \\
  S_3 &:= \{ \ee \in M_{r+s} : A_{\ee} = C_{\ee} = 0, B_{\ee} \neq 0 \}  \\
  S_4 &:= \{ \ee \in M_{r+s} : A_{\ee} = C_{\ee} = B_{\ee} = 0 \} .
\end{align*}
Note that as a polynomial in $X$ we have
\[
Q_{\ee}(X,Y)=  A_{\ee} \alpha X^2 + (B_{\ee} +2 C_{\ee} Y) \alpha X  + (C_{\ee} \alpha Y^2+  D_{\ee} Y+ E_{\ee}).
\]
Now, for all $\ee \in S_1$, $Q_{\ee}(X,k)$ is a $W_2$-polynomial for all $k$. Moreover, for all $\ee \notin S_1$ we have $B_\ee \in \ZZ$. Thus, for all
$\ee \in S_3$, $Q_{\ee}(X,k)$ is a $W_1$-polynomial for all $k$. Therefore,
\begin{equation*}
\lim_{m \to \infty}  \frac{1}{2m+1} \sum_{j=-m}^m  e^{2 \pi iQ_\ee(j,k)} \,=\, 0 \qquad  \forall \ee \in S_1 \cup S_3, k \in \ZZ.
\end{equation*}
This gives
\begin{equation}\label{eq:S1}
\sum_{\ee \in S_1 \cup S_3} \lim_{m \to \infty}  \frac{1}{2m+1} \sum_{j=-m}^m  e^{2 \pi i Q_\ee(j,k)} \,=\, 0  \qquad \forall k \in \ZZ.
\end{equation}
%
%Next, for all $\ee \notin S_1$, we have
%
%\[
%P_{\ee}(X,Y)= (B_{\ee} +2C_{\ee} Y) \alpha X  + (C_{\ee} Y^2+  D_{\ee} Y+ E_{\ee}) \alpha   \,.
%\]
%
%Exactly as before, for all $\ee \in S_3$ and for all $k \in \ZZ$,  $Q_{\ee}(X,k)$ is a $W_1$-polynomial in $X$. Therefore,
%
%\begin{equation}\label{eq:S3}
%\sum_{\ee \in S_3} \lim_{m \to \infty}  \frac{1}{2m+1} \sum_{j=-m}^m  e^{2 \pi i Q_\ee(j,k)} \,=\, 0 \qquad  \forall k \in \ZZ \,.
%\end{equation}
%
Next, all $\ee \in S_2 \cup S_4$ we have $A_\ee=0$ and hence $B_\ee \in  \ZZ$. It follows that in this case, for each $k \in \ZZ$ we have that
\[
Q_{\ee}(X,k)=  (B_{\ee} +2 C_{\ee} k)\alpha  X  + (C_{\ee}\alpha k^2+  D_{\ee} k+ E_{\ee}) \alpha
\]
is linear polynomial in $X$ with leading coefficient in $\alpha \ZZ$.

Now,  for each $\ee \in S_2$, there exists at most one value $k_\ee$ of $k \in \ZZ$ such that $B_{\ee} +2\alpha C_{\ee} k_\ee=0$. Then, for all $k \neq k_\ee$, $Q_{\ee}(X,k)$ is a $W_1$-polynomial in $X$. Therefore,
\[
\lim_{m \to \infty}  \frac{1}{2m+1} \sum_{j=-m}^m   e^{2 \pi iQ_\ee(j,k)} \,=\, 0 \qquad  \forall \ee \in S_2, k  \neq  k_\ee.
\]
In particular, since $S_2$ is finite, there exists a finite set $F \subseteq \ZZ$ such that
\begin{equation*}
\lim_{m \to \infty}  \frac{1}{2m+1} \sum_{j=-m}^m   e^{2 \pi iQ_\ee(j,k)} \,=\, 0 \qquad  \forall \ee \in S_2, k \in \ZZ \backslash F.
\end{equation*}
Therefore, since all limits exist by Theorem~\ref{Thm-law} , there exists a function $u : \ZZ \to \CC$ supported inside $F$ such that
\begin{equation}\label{eq:S2}
\sum_{\ee \in S_2} \lim_{m \to \infty}  \frac{1}{2m+1} \sum_{j=-m}^m   e^{2 \pi iP_\ee(j,k)} \,=\, u(k) \qquad  \forall k \in \ZZ.
\end{equation}
Finally, for each $\ee \in S_4$ we have $Q_{\ee}(j,k)=  D_{\ee} k+ E_{\ee}$ is constant in $j$. Therefore, in this case
\begin{align*}
\lim_{m \to \infty}  \frac{1}{2m+1} \sum_{j=-m}^m   e^{2 \pi i Q_\ee(j,k)} \,=\, \lim_{m \to \infty}  \frac{1}{2m+1} \sum_{j=-m}^m    e^{2 \pi i \alpha (D_{\ee} k+ E_{\ee}) } =  e^{2 \pi i \alpha (D_{\ee} k+ E_{\ee}) }
\end{align*}
is a multiple of a character in $k$. Thus there exists a trigonometric polynomial $T$ such that
\begin{equation}\label{eq:S4}
\sum_{\ee \in S_4} \lim_{m \to \infty}  \frac{1}{2m+1} \sum_{j=-m}^m    e^{2 \pi i Q_\ee(j,k)} \,=\, T(k) \qquad  \forall k \in \ZZ.
\end{equation}
By combining \eqref{eq:S1}, \eqref{eq:S2} , and \eqref{eq:S4},  we have for all $n_1, \ldots, n_r, m_1, \ldots, m_s \in \ZZ$ not necessarily distinct,
\begin{align*}
\Big\langle \prod_{j=1}^r F_{n_j}, T^k \prod_{l=1}^s F_{m_l} \Big\rangle  & =  \frac{b^{r+s}}{2^{r+s}} \left( u(k)+T(k)\right).
\end{align*}
Now, let $f \in \AAA$ be arbitrary. Then, $f = c_01_{\XX(V)} + c_1 f_1+ \ldots + c_lf_l$ for some $f_1, \ldots, f_l$ of the form
\[
f_t= \prod_{j=1}^{r_t} F_{n_{j,t}}
\]
and $c_0, \ldots, c_l \in \CC$. Therefore,
\begin{align*}
  \langle f, T^k f \rangle &= |c_0|^2 + \sum_{t=1}^{l} c_0 \overline{c_t}   \langle 1_{\XX(V)}, T^k f_t \rangle  + \sum_{t=1}^{l} \overline{c_0} c_t   \langle f_t, 1_{\XX(V)} \rangle \\
  &+\sum_{s,t=1}^{l} c_s \overline{c_t}   \langle f_s, f_t \rangle = C+ \sum_{s,t=1}^{l} c_s \overline{c_t}   \langle f_s, T^k f_t \rangle
\end{align*}
for some constant $C$. Since by the above, for each $s,t$, the function $\langle f_s, T^k f_t \rangle $ is the sum of a function with finite support and a trigonometric polynomial, we get that
%can be written as a linear combination of $1_{\XX(V)}$ and} $\prod_{j=1}^r F_{n_j}$, using the linearity of $\langle \cdot , \cdot \rangle$ in each variable, we immediately get that
%
\[
\big\langle f, T^k f \big\rangle = v(k)+S(k)
\]
for some $v$ of finite support and trigonometric polynomial $S$.

Finally, noting that the diffraction spectrum is a.c. and non-trivial, the dynamical spectrum contains non-trivial a.c. spectrum. The claim follows.
\end{proof}

\begin{corollary}Let
\[
V(n) \,=\, \lambda \cos \Big( 2 \pi \Big( x_1 + n x_2 + \frac{n(n-1)}{2} \alpha \Big) \Big)
\]
for some $\lambda \neq 0$ and $\alpha \in \RR \backslash \QQ, x_1,x_2 \in \RR$. Then,
\begin{itemize}
  \item[(a)] $V$ is uniquely ergodic.
  \item[(b)] The set of eigenvalues of $\XX(V)$ is $ \alpha \ZZ$ and all eigenvalues are topological.
  \item[(c)] The dynamical spectrum is mixed pure point and absolutely continuous.
\end{itemize}
\end{corollary}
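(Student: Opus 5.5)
The plan is to reduce the corollary to Theorem~\ref{T2} by rewriting the phase as a quadratic polynomial in $n$. Since
\[
x_1 + n x_2 + \frac{n(n-1)}{2}\alpha \,=\, \frac{\alpha}{2} n^2 + \Big(x_2 - \frac{\alpha}{2}\Big) n + x_1 ,
\]
we have $V(n) = a + b\cos(2\pi(\alpha' n^2 + a_1 n + a_0))$. Here
\[
a=0,\quad b=\lambda \neq 0,\quad \alpha' = \alpha/2,\quad a_1 = x_2 - \alpha/2,\quad a_0 = x_1 .
\]
All of these are real. Since $\alpha \notin \QQ$, also $\alpha' = \alpha/2 \notin \QQ$. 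So the hypotheses of Theorem~\ref{T2} hold with $\alpha'$ in place of $\alpha$.

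Part (a) then follows directly from Theorem~\ref{T2}(a), or equivalently from Theorem~\ref{theorem1}(a). For part (b), Theorem~\ref{T2}(b) says that the set of eigenvalues is $2\alpha'\ZZ = \alpha\ZZ$ and that all eigenvalues are topological. This is exactly the claim.

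For part (c), Theorem~\ref{T2}(c) already states that the dynamical spectrum is mixed pure point and absolutely continuous. To make both components explicit, I would argue as follows.

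\textbf{Pure point part.} It is nontrivial because the eigenvalue $\alpha \neq 0$ exists by (b).

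\textbf{Absolutely continuous part.} By the Dworkin argument (Proposition~\ref{prop-Dwo}) together with Corollary~\ref{cor-diff}, we get $\sigma_{E_0} = \sigma = \frac{\lambda^2}{2}\theta_{\TT}$. This is a nonzero absolutely continuous measure.

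\textbf{Nothing else.} By Theorem~\ref{T2}(c), every $f$ in the dense algebra $\AAA$ has a spectral measure of the form $\sigma_f = \mu + T\theta_{\TT}$, with $\mu$ finitely supported pure point. Density of $\AAA$ in $C(\XX(V))$, and hence in $L^2$, means these $\sigma_f$ determine the maximal spectral type. So no singular continuous component appears.

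The only points that need care are the identification $2\alpha' = \alpha$ in (b), and checking that the irrationality hypothesis transfers from $\alpha$ to $\alpha/2$. Everything else is immediate from the earlier results.
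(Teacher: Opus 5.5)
Your proposal is correct and follows the paper's route: the corollary comes from writing the phase as $\frac{\alpha}{2}n^2+(x_2-\frac{\alpha}{2})n+x_1$ and specializing Theorem~\ref{T2} with $a=0$, $b=\lambda$ and the still irrational $\alpha/2$ in place of $\alpha$, so the eigenvalue set becomes $2\cdot\frac{\alpha}{2}\ZZ=\alpha\ZZ$. Your density remark in (c) only spells out a step the paper leaves implicit. It works because $\sigma_{f_n}\to\sigma_f$ in total variation whenever $f_n\to f$ in $L^2$, so the absence of a singular continuous part passes from $\AAA$ to all of $L^2$.
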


\section{Concluding Remarks}

\subsection{Sparse Structures}

Consider a point set $\Lambda \subseteq \ZZ$ with the property that the difference of consecutive locations diverges. Then, observing that  \cite[Lemma~4.4]{HRS} and its proof still hold if we replace $\RR$ by $\ZZ$, the autocorrelation\footnote{If $\Lambda$ is too sparse, then autocorrelation means the so called counting autocorrelation, see \cite{HRS}. } $\gamma$ exists with respect to $A_n=[-n,n]$ and $\gamma= \delta_0$. In particular, as long as the difference of consecutive locations diverges, the diffraction spectrum is easy to calculate and is Lebesgue. This class of examples contains the cases $\Lambda = \{ a^n : n \in \NN \}$ for any $a>1$ (\cite[Example~4.5]{HRS}),  $\Lambda = \{ n! : n \in \NN \}$ for any $a>1$ (\cite[Example~4.6]{HRS}) and  $\Lambda = \{ f_n : n \in \NN \}$ (\cite[Example~4.7]{HRS}), where $f_n$ are the Fibonacci numbers.

%Here we can put the earlier discussion about sparse structures. We can point out that for a simple diffraction analysis, it suffices that the differences of consecutive locations diverge,

On the Schr\"odinger side, potentials of this type are more difficult to study. At least the absence of absolutely continuous spectrum is known in great generality thanks to Remling's Oracle Theorem \cite{R11}. Moreover, Zlatos \cite{Z04} was able to establish fractional dimensional estimates for spectral measures under suitable assumptions on the gaps.

An interesting example of a sparse model where the difference of consecutive locations does not diverge is the primes. Here the calculation of the diffraction is less trivial, but it follows from well known sieve estimates in number theory that the autocorrelation of the primes is also $\gamma= \delta_0$ \cite[Thm.~5.7]{HRS}. Again, on the Schr\"odinger side things are more complicated,
%In this context we can also briefly point to your work on the prime models since this is related to the skew-shift case in that
as there are essentially no rigorous results; see however \cite{dOP01} for some intriguing numerics for this model.

\subsection{Random Structures}

In general, randomness is associated with absolutely continuous diffraction spectrum. Many random structures show (mainly) absolutely continuous diffraction measure, and introducing randomness in a ordered structure often introduces an extra a.c. component to the diffraction diagram.

Below we discuss many models with a.c. diffraction spectrum and how introducing randomness affects the diffraction. For more details we recommend to the reader \cite{BBM} and \cite[Chapter 11]{TAO}.
Since many of the systems we look at live in $\RR$ or $\RR^d$, to be consistent with the literature, we will denote the autocorrelation measure by $\gamma$ (or $\gamma_{\Lambda}^{}$ or $\gamma_{\mu}^{}$ when we want to specify the corresponding point set or measure) and by $\reallywidehat{\gamma}$ (or $\reallywidehat{\gamma_{\Lambda}^{}}$ or $\reallywidehat{\gamma_{\mu}^{}}$) its diffraction measure, as a measure on $\RR$ or $\RR^d$ (see \cite[Chapter 9]{TAO} for definitions and details).

But before talking about these models, let us briefly discuss the ambiguity of the concept "a.c. diffraction spectrum".
Most of the time, by this it is understood that the measure $\widehat{\gamma}$ is absolutely continuous. On the other hand, Delone sets, and more generally positive measures with non-trivial diffraction, always have a "trivial" Bragg peak of positive intensity at the origin (which can be seen as the diffraction equivalent of the fact that $0$ is always an eigenvalue of the corresponding dynamical system). In this case, by "absolutely continuous diffraction", it is usually understood that, besides the Bragg peak in the origin, the rest of the diffraction measure is an a.c. measure. We will discuss both cases below.

To make things easy to follow, we introduced the concept of Bartlett diffraction spectrum in Definition~\ref{def-diff} (compare Remark~\ref{rem-diff}), and we will simply say that the diffraction is a.c.\ when $\widehat{\gamma}$ is an a.c.\ measure, and that the Bartlett diffraction is a.c.\ when $\widehat{\gamma}$ is the sum of a Bragg peak in the origin and an a.c.\ measure (or equivalently that $\Gamma$ is an a.c.\ measure).

%we use the following notation, see \cite[Remark~15]{BBM} for the motivation for this name.
%\begin{definition} Let $\mu$ be a translation bounded measure and let $\gamma_\mu$ be its autocorrelation with respect to some averaging sequence $A_n$. The measure
%\[
%\Gamma_{\mu}^{}:= \reallywidehat{\gamma_{\mu}^{}} - \left(\reallywidehat{\gamma_{\mu}^{}}(\{0\}) \right) \delta_0
%\]
%is called the \emph{Bartlett diffraction measure of $\mu$}.
%\end{definition}

The simplest method of creating a system with a.c.\ spectrum, or adding a.c.\ spectrum to a system, is via the so called \emph{Bernoullization} of the system.

Consider some potential $V : \ZZ \to \{ \pm 1 \}$. Let $(W_n)_{n \in \ZZ}$ be a family of i.i.d. random variables, taking the values $1, -1$ with probabilities $p$ and $1-p$. Define $u: \ZZ \to \CC$ via $u(n)=W_n V(n)$, which is a randomization of $V$. Then, we have

\begin{theorem}\cite[Lemma 11.2]{TAO}\label{thm-bern}
Let $V: \ZZ \to \{ \pm 1 \}$ be uniquely ergodic. Let $(W_n)_{n \in \ZZ}$ be a family of i.i.d. random variables, taking the values $1, -1$ with probabilities $p$ and $1-p$. Let $\gamma_{V}$ be the (unique) autocorrelation of $V\delta_{\ZZ}$ as a measure on $\RR$. Consider the \emph{Bernoullization} $u(n) := W_n V(n)$ of $V$. Then, almost surely, $u\delta_{\ZZ}$ has autocorrelation $\gamma_{u}$ and diffraction $\reallywidehat{\gamma_{u}}$ given by
$$
\gamma_{u} = (2p-1)^2 \gamma_{V} + 4p(1-p) \delta_0, \quad \reallywidehat{\gamma_{u}} = (2p-1)^2 \reallywidehat{\gamma_{V}} + 4p(1-p) \lambda.
$$
\end{theorem}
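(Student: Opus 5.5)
The plan is to compute the autocorrelation of $u=W\cdot V$ pointwise in $k$, almost surely, by splitting into the cases $k=0$ and $k\neq 0$. Then we pass to Fourier transforms, using the Dworkin-type identification from the earlier sections. We work with $F_m=\{-m,\ldots,m\}$, although any fixed F\o lner sequence works the same way. Since $V$ is uniquely ergodic, Proposition~\ref{prop-Dwo} gives that $\gamma_V(k)=\lim_m \frac{1}{2m+1}\sum_{j\in F_m}V(j)V(j-k)$ exists for every $k$. Hence the measure $\gamma_V=\sum_k \gamma_V(k)\delta_k$ is well defined.

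\textbf{Case $k=0$.} Since $|W_j|=|V(j)|=1$, we get $u(j)\overline{u(j)}=1$. So $\gamma_u(0)=1=(2p-1)^2\gamma_V(0)+4p(1-p)$, using $\gamma_V(0)=1$ and $(2p-1)^2+4p(1-p)=1$.

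\textbf{Case $k\neq 0$.} Write
\[
\frac{1}{2m+1}\sum_{j\in F_m} u(j)u(j-k)=\frac{1}{2m+1}\sum_{j\in F_m} V(j)V(j-k)\,\mathbb{E}[W_jW_{j-k}]+\frac{1}{2m+1}\sum_{j\in F_m}V(j)V(j-k)\,X_j^{(k)},
\]
where $X_j^{(k)}:=W_jW_{j-k}-(2p-1)^2$. Here $\mathbb{E}[W_jW_{j-k}]=(2p-1)^2$ by independence, so the first term tends to $(2p-1)^2\gamma_V(k)$. It remains to show that the second term tends to $0$ almost surely.

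This is the main step. The variables $X_j^{(k)}$ are bounded and centered, and they are $|k|$-dependent, meaning $X_j$ and $X_{j'}$ are independent once $|j-j'|>|k|$. I would split the index set into the $|k|$ residue classes mod $|k|$ (or mod $2|k|$ to be safe). Within each class the summands $V(j)V(j-k)X_j^{(k)}$ are independent, bounded by $1+(2p-1)^2$, and have mean zero, since the deterministic weights $V(j)V(j-k)$ do not affect the centering. A strong law of large numbers for bounded independent centered variables then gives almost sure convergence to $0$ of each class average. One can use, for instance, a fourth-moment bound $\mathbb{E}|S_m|^4=O(m^2)$ together with Borel--Cantelli, which handles the nonidentical weights directly. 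Summing over the finitely many classes gives the claim for fixed $k$. Intersecting the countably many full-measure events over $k\in\ZZ$ yields a single almost sure event on which $\gamma_u(k)=(2p-1)^2\gamma_V(k)+4p(1-p)\delta_0(k)$ for all $k$.

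Finally we take Fourier transforms. As measures on $\RR$ supported on $\ZZ$, we have $\gamma_u=(2p-1)^2\gamma_V+4p(1-p)\delta_0$. The Fourier transform is linear on Fourier-transformable measures, and $\widehat{\delta_0}=\lambda$ (Lebesgue measure). This gives $\widehat{\gamma_u}=(2p-1)^2\widehat{\gamma_V}+4p(1-p)\lambda$. Equivalently, by Remark~\ref{rem-diff}(b), this is the $\ZZ$-periodic lift of the identity $\sigma_u=(2p-1)^2\sigma_V+4p(1-p)\theta_\TT$ on $\TT$, which follows from uniqueness of the measure in \eqref{eq-diff}.

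The only genuinely probabilistic obstacle is the almost sure law of large numbers for the $k$-dependent weighted sums. The residue-class decomposition combined with the fourth-moment Borel--Cantelli argument is the step I would carry out in detail.
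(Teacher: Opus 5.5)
The paper gives no proof of this statement; it is quoted from \cite[Lemma~11.2]{TAO}, so there is no internal argument to compare against. Your argument is the standard direct one. You fix $k$, isolate the centred products $X_j^{(k)}=W_jW_{j-k}-(2p-1)^2$ with deterministic $\pm1$ weights, and prove a strong law for them. You then intersect over countably many $k$ and finish with linearity of the Fourier transform and $\widehat{\delta_0}=\lambda$. In substance this is correct, but two points need repair.

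\textbf{Residue classes.} The classes must be taken modulo $2|k|$ (or $|k|+1$), not modulo $|k|$. Modulo $|k|$, the terms $X_j^{(k)}$ and $X_{j+k}^{(k)}$ lie in the same class and both contain $W_j$. For $p\neq\frac12$ they are in fact correlated: $\mathrm{Cov}(W_jW_{j-k},W_{j+k}W_j)=(2p-1)^2\cdot 4p(1-p)$. Modulo $2|k|$, the pairs $\{j,j-k\}$ within a class are pairwise disjoint. The summands are then functions of disjoint sets of the $W_n$, hence mutually independent, which is what your fourth-moment bound $\mathbb{E}|S_m|^4=O(m^2)$ needs.

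\textbf{F\o lner sequences.} Your remark that any fixed F\o lner sequence works the same way is false. The Borel--Cantelli step uses $\sum_m \card(F_m)^{-2}<\infty$, which holds for $F_m=\{-m,\ldots,m\}$ but not in general, and the conclusion itself can fail. Take $p=\frac12$, so that $u$ is an i.i.d.\ sequence of fair signs. Let the $F_m$ be pairwise far-apart intervals of lengths $L_m\to\infty$ with $\sum_m 2^{-L_m}=\infty$. By the second Borel--Cantelli lemma, almost surely $u$ is constant on infinitely many of the sets $F_m\cup(F_m-1)$. So the averages of $u(j)u(j-1)$ over $F_m$ equal $1$ infinitely often instead of tending to $\gamma_u(1)=0$.

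The theorem must therefore be read along a fixed averaging sequence such as $[-m,m]$, with the exceptional null set depending on that sequence. In particular, for $0<p<1$, $u$ almost surely does not have a unique autocorrelation in the sense of Proposition~\ref{cor:un-ac}. With these two corrections, your argument proves the statement.
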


In particular, when $p=1-p=\frac{1}{2}$, the diffraction is a.c.\ no matter which $V$ we start with. The reason why the choice of $V$ does not matter is because in this case $u(n)=\pm 1$ with probability $\frac{1}{2}$ for all $n$.

Choosing $V(n)=1$ for all $n$, we get as a special case the pure Bernoulli case:

\begin{theorem}\cite[Proposition~11.1]{TAO}
 Let $(W_n)_{n \in \ZZ}$ be a family of i.i.d. random variables, taking the values $1, -1$ with probabilities $\frac{1}{2}$ and $\frac{1}{2}$. Let $u(n):=W_n$. Then,  almost surely, $u \delta_{\ZZ}$ has autocorrelation $\gamma_{u}=  \delta_0$ and diffraction $\reallywidehat{\gamma_{u}}= \lambda$.
\end{theorem}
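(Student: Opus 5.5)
The statement is the pure Bernoulli case, i.e.\ Theorem~\ref{thm-bern} with $V \equiv 1$ and $p=\tfrac12$. I will also give a direct argument, which is short and independent of the general result.

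First I compute the autocorrelation of $u$ almost surely. Work with $F_m=\{-m,\ldots,m\}$, as in the convention for measures on $\RR$. For each fixed $k\in\ZZ$, consider
\[
\frac{1}{2m+1}\sum_{j=-m}^m u(j)u(j-k).
\]
For $k=0$ every term equals $W_j^2=1$, so the average is identically $1$.

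For $k\neq 0$, set $X_j:=W_jW_{j-k}$. Each $X_j$ takes the values $\pm1$ with probability $\tfrac12$, so it has mean zero. The family $(X_j)_j$ is pairwise uncorrelated: for $j\neq j'$ the product $W_jW_{j-k}W_{j'}W_{j'-k}$ contains at least one $W$-factor with odd multiplicity, so its expectation vanishes. Moreover $(X_j)_j$ is $|k|$-dependent. Splitting the sum into the $|k|$ residue classes mod $|k|$ (or, for $|k|\ge1$, into the $2|k|$ classes mod $2|k|$) turns each subsum into an average of i.i.d.\ mean-zero bounded variables. The strong law of large numbers then shows that each subaverage tends to $0$ almost surely. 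Alternatively, a fourth-moment bound $\mathbb E\bigl|\sum X_j\bigr|^4=O(m^2)$ together with Borel--Cantelli gives the same conclusion.

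Since $\ZZ$ is countable, intersecting the corresponding full-measure events over all $k$ gives almost surely $\gamma_u(k)=\delta_{k,0}$ for all $k$. In other words, $\gamma_u=\delta_0$ as a measure supported on $\ZZ$.

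Next, the diffraction is the Fourier transform of $\delta_0$, which is Lebesgue measure $\lambda$ on $\RR$. Equivalently, in the language of Definition~\ref{def-diff}, $\sigma=\theta_\TT$, and its $\ZZ$-periodic lift is $\lambda$ by Remark~\ref{rem-diff}(b).

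The only real care needed is the almost-sure convergence for $k\neq0$. The variables $W_jW_{j-k}$ are not independent, so the strong law cannot be applied directly. The splitting into residue classes, or the fourth-moment estimate, is the step I expect to be the main obstacle, and I would use the fourth-moment computation as the primary route.
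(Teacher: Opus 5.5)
Your proof is correct, and it follows a different, more self-contained route than the paper. The paper gives no separate proof of this statement. It obtains it by specializing Theorem~\ref{thm-bern} (cited from \cite{TAO}) to $V\equiv 1$ and $p=\tfrac12$. There the coefficient $(2p-1)^2$ in front of $\gamma_V$ vanishes and $4p(1-p)=1$; this is the reduction in your first sentence. That route is a one-liner and shows that for $p=\tfrac12$ the underlying $V$ plays no role, but it rests entirely on the Bernoullization lemma. Your route instead computes the autocorrelation coefficients directly by a law of large numbers. It needs only Lemma~\ref{lem-ac-alt}, to pass between $\gamma_m$ and the sums $\frac{1}{2m+1}\sum_{j=-m}^m u(j)u(j-k)$ (the normalization $2m$ versus $2m+1$ is irrelevant in the limit), and the identity $\widehat{\delta_0}=\lambda$.

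One correction actually removes the step you single out as the main obstacle: for fixed $k\neq 0$ the variables $X_j=W_jW_{j-k}$ \emph{are} independent. The index set splits into the $|k|$ residue classes $r+k\ZZ$. Along one class, once a starting value $W_{j_0}$ is fixed, the map $(W_{j_0+k},\ldots,W_{j_0+nk})\mapsto(X_{j_0+k},\ldots,X_{j_0+nk})$ is a bijection of $\{\pm1\}^n$, so it preserves the uniform distribution. Different classes involve disjoint sets of $W$'s. Hence $(X_j)_{j\in\ZZ}$ is an i.i.d.\ family of fair signs, and the strong law applies directly, to $j\ge 0$ and $j<0$ separately.

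Relatedly, your splitting into classes mod $|k|$ is justified only by this fact, since $X_j$ and $X_{j+k}$ share the factor $W_j$. The splitting mod $2|k|$, where the supports $\{j,j-k\}$ are pairwise disjoint, works for any i.i.d.\ law, and so does the fourth-moment/Borel--Cantelli estimate. Your proof therefore stands as written.
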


On the Schr\"odinger side, random potentials have been studied in depth, both in the stationary and in the non-stationary case; see, for example, \cite{BDFGVWZ19, CKM87, GK25}.

Another interesting example with a.c.\ spectrum is given by random dimers on the line (see \cite[Example~11.4]{TAO}, \cite{BvE}). Partition $\ZZ$ into an arrangement of pairs of consecutive integers ("dimers"). Now, randomly assign to each dimer either the pair $(+1,-1)$ or $(-1, +1)$. Then, almost surely with respect to the natural ergodic measure which defines this process, the diffraction measure is $\sigma=(1-2 \cos(2 \pi x)) \lambda$, see \cite[Example~11.4]{TAO} and \cite[Proposition 2]{BvE}. Schr\"odinger spectral and quantum dynamical properties of the random dimer model have been investigated as well \cite{BG00, DT03, JS07, JSS03}. It is an interesting example of the phenomenon that dynamical localization may fail even though spectral localization holds.

As far as additional examples are concerned, random point processes on $\RR$ (see \cite[Section~3]{BBM} and \cite[Section~11.3]{TAO}), as well as the eigenvalues of certain random matrices (\cite[Section~11.3]{TAO}) lead to examples of random Delone sets whose Bartlett diffraction spectrum is an a.c.\ measure \cite[Theorem~11.2,Theorem~11.3]{TAO}, but they always have a Bragg peak at the origin.

\subsection{Schr\"odinger vs.\ Diffraction}

As already pointed out in the introduction, diffraction measures tend to become more regular with increased complexity/disorder, whereas Schr\"odinger spectral measures tend to become more singular with increased complexity/disorder. It is a very intrugiuing question whether these two apparent monotonicity properties are in some way linked, perhaps via some sort of duality.

Naively, the dream correspondence would be
$$
\mathrm{pp} \longleftrightarrow  \mathrm{ac}, \quad  \mathrm{sc} \longleftrightarrow  \mathrm{sc}, \quad  \mathrm{ac} \longleftrightarrow  \mathrm{pp}.
$$
However, this can already be seen to not hold in such simple terms, based on known results. For example,
\begin{itemize}

\item Almost periodicity ensures pp diffraction, while almost periodic Schr\"odinger operators can have pp, sc, or ac spectrum, even within a simple $1$-parameter family \cite{DF24}.

\item Sequences in the subshift generated by the period doubling substitution have pp diffraction \cite{TAO} and sc Schr\"odinger spectral measures \cite{DF24}, while some sequences in the subshift generated by the Thue-Morse substitution have sc diffraction \cite{TAO} and also sc Schr\"odinger spectral measures \cite{DF24} (note that the latter phenomenon is not even appropriately understood).

\end{itemize}

It would be of great interest to explore the apparent dual connection between the two spectral types in more detail.

\appendix

\section{Diffraction}\label{App:diff}

In general, as introduced by Hof \cite{Hof1}, diffraction theory is set up for so called translation-bounded measures in $\RR^d$ (or even more generally in a second countable locally compact Abelian groups $G$), using the Fourier theory of measures/distributions, convergence of measures with respect to the vague topology, and arbitrary averaging sequences (van Hove sequences). In this setting, we identify a bounded function $V$ on $\ZZ$ with the translation-bounded measure $\mu := \sum_{n \in \ZZ}^{} V(n) \delta_n$. Since we are working with functions/measures supported inside $\ZZ$, there is an equivalent much simpler approach to diffraction theory. Indeed, in this case, measures supported inside $\ZZ$ can be simply viewed as functions, translation-boundedness of the measure is equivalent to boundedness of the corresponding function, vague convergence of measures becomes pointwise convergence, and the Fourier theory is just the Bochner theorem for positive definite functions. With this in mind, we briefly introduce/review below diffraction theory for bounded functions on $\ZZ$ and refer the reader to \cite[Chapter~9]{TAO} for a brief review of diffraction in $\RR^d$.

Recall that $\ell^\infty(\ZZ)$ denotes the space of all bounded functions $f: \ZZ \to \CC$. We equip this space with the topology $\tau_{\mathsf{p}}$ of pointwise convergence. Also, for $R>0$ we denote by $B_{R} := \{ f \in \ell^\infty(\ZZ) : \| f \|_\infty \leq R \}$ the (closed) ball of radius $R$.
%On this ball, the topology $\tau_{\mathsf{p}}$ is given by the norm
%\[
%\| f \| \, := \, \sum_{n \in \ZZ} \frac{1}{2^{|n|}} \left| f(n) \right| \,.
%\]
A standard diagonalization argument shows that $(B_R, \tau_{\mathsf{p}})$ is compact space, and it is clear that $\tau_{\mathsf{p}}$ is metrizable on $B_R$.
%, and hence $(B_{R}, \tau_{\mathsf{p}})$ is a compact metrizable space.
%\begin{lemma}\label{lem-BRcomp}
%$(B_{R}, \tau_{\mathsf{p}})$ is a compact metrizable space.
%\end{lemma}

Recall that for a function $f: \ZZ \to \CC$, we denote by $\tilde{f} : \ZZ \to \CC$ the function $\tilde{f}(k) \,:= \, \overline{f(-k)}$ and that for $f,g \in \ell^1(\ZZ)$, the convolution $f*g$ is defined as $f*g (k) \,: =\, \sum_{j \in \ZZ} f(j) \cdot g(k-j)$.

Consider an arbitrary $V \in \ell^\infty(\ZZ)$ and some fixed F\o lner sequence $(F_m)_{m}$ in $\ZZ$. We will often use $F_m := \{- m, -m+1, \ldots, m-1, m \}$, but all the results we prove work for all F\o lner sequences. For each $m$, define
\[
\gamma_m \, := \, \frac{1}{\card(F_m)} V_m *\widetilde{V_m},
\]
where $V_m$ is the restriction of $V$ to $F_m$, $V_{m}\, := \,  V|_{F_m}$. As a bounded function of finite support, $V_m$ belongs to $\ell^1(\ZZ)$. By construction, $\gamma_m$ is a positive definite function. Let us note that for all $k \in \ZZ$, we have
\begin{align*}
 \gamma_m(k)  &=  \frac{1}{\card(F_m)}  V_m *\widetilde{V_m} (k)  = \frac{1}{\card(F_m)}  \sum_{j \in \ZZ} V_m(j) \cdot \widetilde{V_m} (k-j)   \\
 &=\frac{1}{\card(F_m)}  \sum_{j \in \ZZ} V_m(j) \cdot \overline{V_m (j-k)}.
\end{align*}
In particular,
\begin{align*}
\left| \gamma_m(k) \right| &\leq \frac{1}{\card(F_m)} \sum_{j \in \ZZ}  \left| V_m(j) \right| \cdot \| V \|_\infty \leq  \| V\|_\infty^2.
\end{align*}

This proves that $\gamma_m \in B_{\|V\|_\infty^2}$ for every $m$. Therefore, by compactness $\gamma_m$ has a subsequence $\gamma_{k_m}$ that converges pointwise to some (autocorrelation) $\gamma \in \ell^\infty(\ZZ)$.

As a pointwise limit of positive definite functions, any such limit $\gamma$ is positive definite, and therefore we can define the positive diffraction measure $\sigma$ via \eqref{eq-diff}.

The above discussion and Bochner's theorem (see for example \cite[Theorem~1.4.3]{Rud}) give

\begin{lemma}\label{lem-ac-exists} Let $V \in \ell^\infty(\ZZ)$ and let $(F_m)_{m}$ be a F\o lner sequence. Then, there exists  a subsequence of $(F_{k_m})_{m}$ of $(F_m)_{m}$ such that $\gamma_{k_m}$ converges pointwise to a positive definite function $\gamma$ on $\ZZ$. Moreover, there exists a finite positive measure $\sigma$ on $\TT$ such that for all $k \in \ZZ$, we have
\[
\gamma(k)= \int_{\TT} e^{2 \pi i \, k \cdot y} \, \dd \sigma(y+\ZZ).
\]
\end{lemma}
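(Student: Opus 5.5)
The statement is Lemma~\ref{lem-ac-exists}, and the discussion immediately preceding it already contains most of the argument. I would organize the proof in three steps: uniform boundedness of $\gamma_m$, extraction of a convergent subsequence by compactness, and then positive definiteness of the limit together with Bochner's theorem.

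\emph{Step 1 (boundedness).} Using the identity $\gamma_m(k)=\frac{1}{\card(F_m)}\sum_{j\in\ZZ}V_m(j)\overline{V_m(j-k)}$, I get
\[
|\gamma_m(k)|\leq \frac{\card(F_m)}{\card(F_m)}\|V\|_\infty^2=\|V\|_\infty^2 .
\]
Hence every $\gamma_m$ lies in the ball $B_{\|V\|_\infty^2}$.

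\emph{Step 2 (compactness).} The ball $(B_R,\tau_{\mathsf p})$ is compact and metrizable. This follows from Tychonoff's theorem, or equivalently from a diagonal argument over an enumeration of $\ZZ$. So $(\gamma_m)_m$ has a subsequence $(\gamma_{k_m})_m$ converging pointwise to some $\gamma\in B_{\|V\|_\infty^2}$. Metrizability is what lets me work with a subsequence rather than a subnet.

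\emph{Step 3 (positive definiteness and Bochner).} Each $\gamma_m=\frac{1}{\card(F_m)}V_m*\widetilde{V_m}$ is positive definite. To see this, fix finitely many $c_1,\dots,c_N\in\CC$ and $x_1,\dots,x_N\in\ZZ$. Then
\[
\sum_{p,q}c_p\overline{c_q}\,\gamma_m(x_p-x_q)=\frac{1}{\card(F_m)}\sum_{j}\Bigl|\sum_p c_p V_m(j+x_p)\Bigr|^2\geq 0,
\]
where the equality comes from expanding the convolution and reindexing $j\mapsto j+x_q$. Each of these quadratic forms involves only finitely many values of $\gamma_m$. Therefore the nonnegativity passes to the pointwise limit, and $\gamma$ is positive definite.

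Bochner's theorem on the discrete group $\ZZ$, whose dual is $\TT$ (see \cite[Theorem~1.4.3]{Rud}), then gives a unique finite positive measure $\sigma$ on $\TT$ with $\gamma(k)=\int_\TT e^{2\pi i k y}\,\dd\sigma(y)$. Its total mass is $\sigma(\TT)=\gamma(0)\leq\|V\|_\infty^2$.

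There is no serious obstacle. The only point needing care is the reindexing computation in Step 3 that verifies positive definiteness of $\gamma_m$. Note also that the F\o lner property of $(F_m)$ is not needed here; it only becomes relevant later, for instance in Lemma~\ref{lem-ac-alt}.
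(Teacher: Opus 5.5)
Your proposal is correct and follows the same route as the paper. Both arguments use the uniform bound $|\gamma_m(k)|\leq\|V\|_\infty^2$, extract a pointwise convergent subsequence from the compact metrizable ball $(B_{\|V\|_\infty^2},\tau_{\mathsf p})$, note that positive definiteness survives pointwise limits, and finish with Bochner's theorem. The only difference is that you verify positive definiteness of each $\gamma_m$ explicitly through the quadratic form $\frac{1}{\card(F_m)}\sum_j\bigl|\sum_p c_pV_m(j+x_p)\bigr|^2$, which the paper simply asserts holds by construction.
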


Let us now prove the alternate formula for the autocorrelation which we used in Proposition~\ref{prop-acdef}. While this formula is usually easier to work with, our above definition has the advantage that we get positive definiteness for free.

\begin{lemma}\label{lem-ac-alt} Let $V \in \ell^\infty(\ZZ)$ and let $(F_m)_{m}$ be a F\o lner sequence. For each $m$ set as above
\[
\gamma_m \, := \, \frac{1}{\card(F_m)} V_m *\widetilde{V_m}.
\]
Then, for all $k \in \ZZ$, we have
\[
\lim_{m \to \infty} \left( \gamma_m(k)- \frac{1}{\card(F_m)} \sum_{j \in F_m} V(j)\cdot \overline{V(j-k)} \right) \,= \, 0.
\]
In particular, the autocorrelation $\gamma$ exists with respect to $(F_m)_{m}$ if and only if $\gamma_m$ is pointwise convergent, and in this case we have
\[
\gamma(k)= \lim_{m \to \infty} \gamma_m(k) \qquad \forall k \in \ZZ.
\]
\end{lemma}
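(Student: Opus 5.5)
The plan is to compare the two expressions directly. I will show that their difference at fixed $k$ is a boundary term, controlled by the F\o lner property.

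First I expand $\gamma_m(k)$ using the computation preceding Lemma~\ref{lem-ac-exists}:
\[
\gamma_m(k)=\frac{1}{\card(F_m)}\sum_{j\in\ZZ}V_m(j)\overline{V_m(j-k)}=\frac{1}{\card(F_m)}\sum_{j\in F_m\cap(F_m+k)}V(j)\overline{V(j-k)}.
\]
The second sum in the statement runs over all $j\in F_m$. The difference of the two averages is therefore
\[
-\frac{1}{\card(F_m)}\sum_{j\in F_m\setminus(F_m+k)}V(j)\overline{V(j-k)},
\]
and its modulus is at most $\|V\|_\infty^2\,\card\bigl(F_m\setminus(F_m+k)\bigr)/\card(F_m)$.

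Next I use the F\o lner property: for fixed $k$,
\[
\frac{\card\bigl(F_m\,\triangle\,(F_m+k)\bigr)}{\card(F_m)}\to0 .
\]
Since $F_m\setminus(F_m+k)$ is contained in the symmetric difference, the bound tends to $0$. This proves the first claim. This step is the only place where any care is needed, namely in recalling the F\o lner condition in the translate form. It is immediate for the standard sequences $F_m=\{-m,\dots,m\}$, where $\card\bigl(F_m\setminus(F_m+k)\bigr)\le|k|$.

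For the ``in particular'' part, the two sequences differ by a null sequence for each $k$. Hence one converges pointwise exactly when the other does, and then the limits agree. This is exactly the definition of the autocorrelation existing along $(F_m)_m$, together with the formula $\gamma(k)=\lim_m\gamma_m(k)$.
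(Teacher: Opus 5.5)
Your proof is correct and takes essentially the same route as the paper. Both arguments identify the difference as a sum over $F_m\setminus(k+F_m)$ and bound it by a constant times $\card(F_m\triangle(k+F_m))/\card(F_m)$, which tends to $0$ by the F\o lner condition. The only cosmetic difference is that you read off the support $F_m\cap(F_m+k)$ of $j\mapsto V_m(j)\overline{V_m(j-k)}$ directly, which gives the slightly sharper constant $\|V\|_\infty^2$ instead of the paper's $2\|V\|_\infty^2$.
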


\begin{proof}
Let $k \in \ZZ$ be fixed but arbitrary.
%Then
%
%\begin{align*}
% \gamma_m(k) &
 %=\frac{1}{\card(F_m)} V_m *\widetilde{V_m}=\frac{1}{\card(F_m)} \sum_{j \in \ZZ}  V_m(j) \cdot \widetilde{V_m}(k-j)  \\
   %&=\frac{1}{\card(F_m)} \sum_{j \in \ZZ}  V_m(j)\cdot \overline{V_m(j-k)}
%   =\frac{1}{\card(F_m)} \sum_{j \in F_m}  V(j) \cdot \overline{V_m(j-k)}  \,.
%\end{align*}
%
Then,
\begin{align*}
  \left| \gamma_m(k)- \frac{1}{\card(F_m)} \sum_{j \in F_m} V(j) \cdot \overline{V(j-k)} \right| &= \frac{1}{\card(F_m)} \left| \sum_{j \in F_m} V(j) \cdot \left(\overline{V_m(j-k)} - \overline{V(j-k)}\right) \right| \\
  &\leq \frac{1}{\card(F_m)} \sum_{j \in F_m}  \left|V(j) \right| \cdot \left|V_m(j-k) -V(j-k) \right|.
\end{align*}
Now, by the definition of $V_m$ we have $\left|V_m(j-k) -V(j-k) \right| \neq 0 \Longrightarrow j-k \notin F_m \Longrightarrow j \notin k+F_m$. It follows that
\begin{align*}
  &\left| \gamma_m(k)- \frac{1}{\card(F_m)} \sum_{j \in F_m} V(j) \cdot \overline{V(j-k)} \right| \leq \frac{\|V\|_\infty}{\card(F_m)} \sum_{j \in F_m \backslash (k+F_m)}  \left|V_m(j-k) -V_m(j-k) \right|\\
  & \leq \frac{\card(F_m \backslash (k+F_m)) }{\card(F_m)} 2 \| V\|_\infty^2 \leq \frac{\card(F_m \triangle (k+F_m)) }{\card(F_m)} 2 \| V\|_\infty^2   \to 0
\end{align*}
by the F\o lner condition. The claims follow.
\end{proof}

Let us complete the section by recalling the following well-known result. For a proof, see for example \cite[Theorem~4.16 ]{LSS}.

\begin{corollary}\label{cor:trig-poly} Let
\[
V(n) = \sum_{j=1}^r c_k \cdot e^{2 \pi i y_k \cdot n}
\]
be a trigonometric polynomial. Then $V$ has unique autocorrelation $\gamma$ and diffraction $\sigma$ given by
$$
\gamma = \sum_{j=1}^r |c_k|^2 \cdot e^{2 \pi i y_k \cdot n}, \quad \sigma = \sum_{j=1}^r |c_k|^2 \delta_{y_k}.
$$
\end{corollary}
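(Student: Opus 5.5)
Since $V$ is a finite linear combination of characters $n\mapsto e^{2\pi i y_k n}$, I will compute the averages $\frac{1}{\card(F_m)}\sum_{j\in F_m} V(j)\overline{V(j-k)}$ directly. I will assume the frequencies $y_1,\dots,y_r$ are pairwise distinct in $\TT$; if two coincide, I first merge them and add their coefficients. Expanding the product gives
\[
V(j)\overline{V(j-k)}=\sum_{p,q=1}^r c_p\overline{c_q}\, e^{2\pi i y_q k}\, e^{2\pi i (y_p-y_q) j}.
\]
For each fixed $k$ this is a finite linear combination of characters in $j$, so amenability of each summand gives amenability of $j\mapsto V(j)\overline{V(j-k)}$.

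The key fact I need is that every character $\chi_\theta(j)=e^{2\pi i\theta j}$ is amenable, with $M(\chi_\theta)=1$ if $\theta=0$ in $\TT$ and $M(\chi_\theta)=0$ otherwise. For $\theta\neq 0$ I will use the geometric sum
\[
\Big|\sum_{j=l-m}^{l+m}e^{2\pi i\theta j}\Big|\le \frac{2}{|1-e^{2\pi i\theta}|},
\]
which is bounded independently of $l$. Dividing by $2m+1$ then gives convergence to $0$ uniformly in $l$. By Proposition~\ref{cor:un-ac}, this shows $V$ has a unique autocorrelation, with
\[
\gamma(k)=\sum_{p,q}c_p\overline{c_q}e^{2\pi i y_q k}M(\chi_{y_p-y_q})=\sum_{p=1}^r|c_p|^2e^{2\pi i y_p k},
\]
because only the diagonal terms $p=q$ survive when the frequencies are distinct.

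To identify $\sigma$, I compare with \eqref{eq-diff}. The measure $\sum_p|c_p|^2\delta_{y_p}$ has Fourier coefficients $\sum_p |c_p|^2 e^{2\pi i k y_p}=\gamma(k)$ for all $k\in\ZZ$. A finite measure on $\TT$ is determined by its Fourier coefficients (uniqueness in Bochner's theorem), so $\sigma=\sum_p|c_p|^2\delta_{y_p}$.

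The only delicate point is the distinctness of the frequencies. Without it, cross terms with $y_p=y_q$, $p\ne q$, would survive and the stated formula would fail as written. This is exactly why the merging step comes first. The uniformity of the character means is immediate from the geometric-sum bound above.
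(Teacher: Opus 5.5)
Your argument is correct, but it takes a different route from the paper. The paper gives no computation for this corollary and simply refers to \cite[Theorem~4.16]{LSS}. That is a general result on almost periodic (mean/Besicovitch type) functions, in which the diffraction is expressed through the Fourier--Bohr coefficients as $\sum_\beta |a_\beta|^2\delta_\beta$; trigonometric polynomials are the simplest instance.

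You instead work directly:
\begin{itemize}
\item you expand $V(j)\overline{V(j-k)}$ into finitely many characters in $j$;
\item you use the geometric-sum bound to get means that are uniform in $l$, hence amenability and uniqueness of $\gamma$ via Proposition~\ref{cor:un-ac};
\item you note that only the diagonal terms survive;
\item you identify $\sigma$ by uniqueness of Fourier--Stieltjes coefficients of finite measures on $\TT$.
\end{itemize}

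Your route is self-contained and elementary. It could even bypass Proposition~\ref{cor:un-ac}: for any F\o lner sequence, $\bigl|(1-e^{2\pi i\theta})\sum_{j\in F_m}e^{2\pi i\theta j}\bigr|\le \card(F_m\triangle(F_m+1))$. It also makes explicit a hypothesis the statement leaves implicit, namely that the $y_k$ must be pairwise distinct modulo $\ZZ$. In general, after merging, the formula reads $\sigma=\sum_{y}\bigl|\sum_{y_p= y}c_p\bigr|^2\delta_y$.

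The cited theorem buys generality instead. It covers arbitrary (mean) almost periodic structures with pure point diffraction, where no finite expansion exists and a Parseval-type argument is required.
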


\section{Properties of $\XX(V)$}\label{App:B}

%\subsection{Unique ergodicity}

Here we review some of the properties of $\XX(V)$ for $V \in \ell^\infty(\ZZ)$. Many of the results in this section are particular cases of \cite{BL} for $G=\ZZ$.

Since amenability plays a central role in this section, let us start by stating the following immediate consequence of \cite[Remark~4.5.7]{MoSt} or \cite[Proposition 1.2]{LSS}:
\begin{lemma}\label{lem-amenab} Let $f \in \ell^\infty(\ZZ)$. Then, the following are equivalent:
\begin{itemize}
  \item[(i)] $f$ is amenable.
  \item[(ii)] For all F\o lner sequences $(F_m)_{m}$, the limit
\[
 \lim_{m \to \infty} \frac{1}{\card(F_m)} \sum_{j \in l+F_m} f(j)
\]
exists uniformly in $l$.
  \item[(iii)] For all F\o lner sequences $(F_m)_{m}$, the limit
\[
 \lim_{m \to \infty} \frac{1}{\card(F_m)} \sum_{j \in F_m} f(j)
\]
exists and does not depend on $(F_m)_{m}$.
  \item[(iv)] For all F\o lner sequences $(F_m)_{m}$, the following limit exists,
\[
 \lim_{m \to \infty} \frac{1}{\card(F_m)} \sum_{j \in F_m} f(j).
\]
\end{itemize}
\end{lemma}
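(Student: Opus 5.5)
The plan is to prove the cycle (i)$\Rightarrow$(iv)$\Rightarrow$(iii)$\Rightarrow$(ii)$\Rightarrow$(i). Throughout we use that $f$ is bounded and that F\o lner sequences are stable under translation and under interleaving.

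\textbf{(i)$\Rightarrow$(iv).} Fix a F\o lner sequence $(F_m)_m$ and put $L=\lim_{n\to\infty}\frac{1}{2n+1}\sum_{j=l-n}^{l+n}f(j)$. By (i) this limit exists uniformly in $l$. Given $\varepsilon>0$, choose $n$ such that every window $\{l-n,\dots,l+n\}$ has average within $\varepsilon$ of $L$.
\begin{itemize}
\item[(1)] Write the average over $F_m$ as the average over $l\in F_m$ of the window averages at $l$, plus an error.
\item[(2)] The error is bounded by $\|f\|_\infty\,\card(F_m\triangle(F_m+[-n,n]))/\card(F_m)$.
\item[(3)] This ratio tends to $0$ by the F\o lner property, since $n$ is fixed.
\end{itemize}
Hence the $F_m$-averages converge to $L$.

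\textbf{(iv)$\Rightarrow$(iii).} Take two F\o lner sequences $(F_m)$ and $(F'_m)$. Interleave them as $F_1,F'_1,F_2,F'_2,\dots$; this is again a F\o lner sequence. By (iv) its averages converge, so the two limits must agree.

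\textbf{(iii)$\Rightarrow$(ii).} Argue by contradiction. Suppose the convergence along some $(F_m)$ is not uniform in $l$. Then there are $\varepsilon>0$, indices $m_k\to\infty$ and shifts $l_k$ such that the average over $l_k+F_{m_k}$ differs from $L$ by at least $\varepsilon$. The sequence $(l_k+F_{m_k})_k$ is still F\o lner, because $\card(A\triangle(A+x))$ is translation invariant. By (iii) its averages must converge to the same $L$, which is a contradiction.

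\textbf{(ii)$\Rightarrow$(i).} Apply (ii) to the F\o lner sequence $F_m=\{-m,\dots,m\}$; this is exactly the definition of amenability.

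The only step with real content is (i)$\Rightarrow$(iv), namely the averaging-of-windows estimate in (1)--(3). The rest is formal bookkeeping with F\o lner sequences, and this is essentially \cite[Remark~4.5.7]{MoSt}.
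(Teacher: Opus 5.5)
Your argument is correct, but the paper argues differently: it proves nothing directly. It only notes that for finite $K \subseteq \ZZ$ the $K$-boundary satisfies $\partial^{K}(A) \subseteq \bigcup_{k \in K} A \triangle (k+A)$, so F\o lner and van Hove sequences coincide in $\ZZ$. It then imports the equivalence from \cite[Proposition~1.2]{LSS} (or from \cite[Remark~4.5.7]{MoSt}).

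You instead prove the cycle by hand, and the content is the same in disguise. Your step (3) is precisely that boundary inclusion with $K=[-n,n]$, used inline: $\card\bigl((F_m+[-n,n]) \setminus F_m\bigr)/\card(F_m) \to 0$ for fixed $n$. Your other three implications use only that translates, subsequences and interleavings of F\o lner sequences are again F\o lner. The paper's route is shorter and places the lemma inside the general van Hove theory on locally compact Abelian groups. Yours is self-contained and shows that only the window-averaging estimate in (i)$\Rightarrow$(iv) carries real content.

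Two small repairs are needed.
\begin{itemize}
\item[(1)] The bound in your (2) needs a factor $2$, since $\card(F_m \triangle (F_m+t)) = 2\card\bigl((F_m+t)\setminus F_m\bigr)$. Without it the inequality can fail for sets with many holes, for instance $F=\{1,2,4,5,7,8,\dots\}$ with $n=1$. This is harmless for (3).
\item[(2)] In (ii)$\Rightarrow$(i) you should note that the limit does not depend on $l$. This is automatic, because shifting $\{-m,\dots,m\}$ by $l$ changes the average by at most $2|l|\,\|f\|_\infty/(2m+1)$.
\end{itemize}
Finally, the interleaved sequence and $(l_k+F_{m_k})_k$ are not nested. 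You are therefore relying on F\o lner sequences not being required to be increasing, which is consistent with how the paper uses them.
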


\begin{proof}
Let us first observe that in $\ZZ$ each compact set $K$ is finite and for all $A \subseteq \ZZ$, the $K$-boundary $\partial^{K}(A)$ (see \cite[Definition~1.1]{LSS}) satisfies $\partial^{K}(A) \subseteq \cup_{k \in K} A \triangle (k+A)$. This implies that in $\ZZ$, van Hove sequences and F\o lner sequences are the same thing. The claim follows from \cite[Proposition 1.2]{LSS}. Alternately, the claim follows from \cite[Remark~4.5.7]{MoSt} by mimicking the proof of \cite[Proposition 1.2]{LSS}.
\end{proof}

For $f:\XX(V) \to \CC$ and $w \in \XX(V)$, we denote by $f_{w}$ the function $f_{w} : \ZZ \to \CC$ given by $f_{w}(j) = f(T^j w)$. The following result is a well known variation of Oxtoby's Theorem (see for example \cite[Theorem~6.5]{LSS}).

\begin{proposition}
Let $V \in \ell^\infty(\ZZ)$. Then, $V$ is uniquely ergodic if and only if the set
$A := \{ f \in C(\XX(V)) : f_{V} \mbox{ is amenable } \}$ is dense in $C(\XX(V))$. Moreover, in this case, $f_w$ is amenable for all $f \in C(\XX(V))$ and all $w \in \XX(V)$.
\end{proposition}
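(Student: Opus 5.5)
We prove both directions through the Cesàro averages along the orbit of $V$, using the standard Krylov--Bogolyubov/Oxtoby machinery. Throughout, $\XX(V)$ is the orbit closure of $V$, so the orbit $\{T^jV\}$ is dense.

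\emph{Unique ergodicity implies density of $A$, and the amenability clause.} Suppose $\XX(V)$ is uniquely ergodic with measure $\mm$. The classical Oxtoby theorem, applied to the single transformation $T$, gives uniform convergence of one-sided averages $\frac{1}{N}\sum_{j=0}^{N-1} f(T^j w)\to\int f\,\dd\mm$ in $w$. We instead argue directly for an arbitrary F\o lner sequence $(F_m)_m$.

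Suppose $\sup_{w}\bigl|\frac{1}{\card(F_m)}\sum_{j\in F_m} f(T^jw)-\int f\,\dd\mm\bigr|$ did not tend to $0$. Then we could pick $w_m$ and a subsequence along which this difference stays at least $\varepsilon$. The empirical measures $\frac{1}{\card(F_m)}\sum_{j\in F_m}\delta_{T^jw_m}$ have a weak$^*$ limit point. By the F\o lner property this limit is $T$-invariant, so it equals $\mm$, which contradicts the $\varepsilon$-gap.

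Since $\sum_{j\in l+F_m} f(T^jw)=\sum_{j\in F_m} f(T^j(T^lw))$, the uniformity in $w$ yields uniformity in $l$. Hence $f_w$ is amenable for every $f\in C(\XX(V))$ and every $w\in\XX(V)$, with mean $\int f\,\dd\mm$. In particular $A=C(\XX(V))$, which is trivially dense.

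\emph{Density of $A$ implies unique ergodicity.} Let $\mu$ be any invariant probability measure and fix $f\in A$. Put $M(f):=M(f_V)$. We must show $\int f\,\dd\mu=M(f)$. The key step is to transfer amenability of $f_V$ to uniform convergence on all of $\XX(V)$. By amenability, $\frac{1}{2m+1}\sum_{j=-m}^m f(T^{j+l}V)\to M(f)$ uniformly in $l$. Now take any $w\in\XX(V)$, say $w=\lim_k T^{l_k}V$ pointwise. For fixed $m$, the function $w\mapsto\frac{1}{2m+1}\sum_{|j|\le m}f(T^jw)$ is continuous, so the uniform-in-$l$ estimate passes to the limit. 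Therefore the averages $A_mf(w):=\frac{1}{2m+1}\sum_{|j|\le m}f(T^jw)$ converge to $M(f)$ uniformly on $\XX(V)$.

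Invariance of $\mu$ gives $\int A_mf\,\dd\mu=\int f\,\dd\mu$, and uniform convergence then gives $\int f\,\dd\mu=M(f)$. So any two invariant measures agree on $A$. Since $A$ is dense in $C(\XX(V))$ and both measures are bounded functionals, they agree on all of $C(\XX(V))$, hence are equal. Existence of at least one invariant measure is Krylov--Bogolyubov. So $\XX(V)$ is uniquely ergodic, and the amenability clause follows from the first part.

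\emph{Main obstacle.} The delicate step is passing from amenability of the single orbit function $f_V$ to uniform convergence on the closure. This requires, first, that uniformity in the shift $l$ is exactly what allows the estimate to survive the pointwise limit $T^{l_k}V\to w$, using continuity of $f$ for each fixed $m$. Second, the F\o lner-type averages must be replaced by symmetric intervals. This is justified by Lemma~\ref{lem-amenab}, which says amenability is independent of the chosen F\o lner sequence.
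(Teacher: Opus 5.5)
Your proof is correct. Both directions are sound: the weak$^*$-limit argument for F\o lner empirical measures gives uniform convergence under unique ergodicity. For the other direction, you push the uniform-in-$l$ convergence of the averages of $f_V$ to every $w \in \XX(V)$ via continuity of the finite averages. The paper gives no proof of its own here and only cites the result as a known variant of Oxtoby's theorem (\cite[Theorem~6.5]{LSS}); your argument is the standard proof of that cited result, written out in full.
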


By Stone--Weierstrass, the algebra generated by $\{1_{\XX(V)}\} \cup \{ E_n , \overline{E_n} : n \in \ZZ \}$ is dense in $C(\XX(V))$. We will denote this algebra by $\AAA$. Let us note that for each $a \in \CC$, $\AAA$ is also the algebra generated by $\{1_{\XX(V)}\} \cup \{ F_{n,a} , \overline{F_{n,a}} : n \in \ZZ \}$, where $F_{n,a}(w) \, := \, w(n)-a$.

Furthermore, if $V$ is real valued, we have $\overline{E_n} \,= \, E_n$ on $\XX(V)$. Therefore, $\AAA$ is the algebra generated by $\{1_{\XX(V)}\} \cup \{ E_n  : n \in \ZZ \}$. Similarly to the above, in this case $\AAA$ is also the algebra generated by $\{1_{\XX(V)}\} \cup \{ F_{n,a}  : n \in \ZZ \}$. Using the obvious relations $(f \cdot g)_{V} = (f_{V}) \cdot (g_{V})$, $(\overline{f})_{V} = \overline{f_{V}}$, $(E_n)_{V}(j) = E_n(T^j V)= (T^j V)(n)= V(n+j)$, we immediately get the following result:

\begin{theorem}\label{thm:uechar} Let $V \in \ell^\infty(\ZZ)$ and $a \in \CC$. Then,
\begin{itemize}
  \item[(a)] $V$ is uniquely ergodic if and only if for all not necessarily distinct $n_1, \ldots, n_r$, $m_1, \ldots, m_s \in \ZZ$, the function
  \[
  \ZZ \ni j \to \left( \prod_{k=1}^r V(j+n_k) -a \right) \cdot \overline{\left( \prod_{l=1}^s V(j+m_l)-a \right)}
  \]
  is amenable.
  \item[(b)] If $V$ is real valued, then $V$ is uniquely ergodic if and only if for all not necessarily distinct $n_1, \ldots, n_r \in \ZZ$, the function
  \[
  \ZZ \ni j \to \prod_{k=1}^r \left( V(j+n_k)-a \right)
  \]
  is amenable.
\end{itemize}
\end{theorem}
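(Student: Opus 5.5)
The statement to prove is Theorem~\ref{thm:uechar}. I will deduce it from the Oxtoby-type proposition just above. That proposition says $V$ is uniquely ergodic if and only if the set $A = \{ f \in C(\XX(V)) : f_V \text{ amenable}\}$ is dense in $C(\XX(V))$.

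First I check that $A$ is a closed linear subspace of $C(\XX(V))$. Linearity is clear, since means are linear and uniform convergence is preserved under finite sums. For closedness, suppose $f^{(i)} \to f$ uniformly with each $f^{(i)}_V$ amenable. Then $\|f_V - f^{(i)}_V\|_\infty \le \|f-f^{(i)}\|_\infty$. A standard $\varepsilon/3$ argument shows the averages $\frac{1}{2m+1}\sum_{j=l-m}^{l+m} f_V(j)$ are uniformly Cauchy in $l$, so $f_V$ is amenable. Hence density of $A$ is equivalent to $A \supseteq \AAA$, because $\AAA$ is dense by Stone--Weierstrass. Since $A$ is linear, this in turn is equivalent to $A$ containing a spanning set of $\AAA$.

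\textbf{Part (a).} I write $\AAA$ as generated by $1$ and $F_{n,a}$, $\overline{F_{n,a}}$. A spanning set of $\AAA$ is then given by the constant $1$ together with the products $\prod_k F_{n_k,a}\cdot\prod_l \overline{F_{m_l,a}}$. Here I read the product in the displayed function as $\prod_k (V(j+n_k)-a)$, allowing empty products (equal to $1$), so that the forms with $s=0$ or $r=0$ are included.

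Using $(f g)_V = f_V g_V$, $(\overline f)_V=\overline{f_V}$ and $(F_{n,a})_V(j)=V(j+n)-a$, each such spanning element $f$ satisfies
\[
f_V(j)=\prod_{k=1}^r (V(j+n_k)-a)\cdot\overline{\prod_{l=1}^s (V(j+m_l)-a)}.
\]
So the amenability hypothesis says exactly that every spanning element lies in $A$. Therefore $\AAA \subseteq A$, $A$ is dense, and $V$ is uniquely ergodic.

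Conversely, if $V$ is uniquely ergodic, the proposition gives that $f_V$ is amenable for every $f\in C(\XX(V))$, in particular for these products. This proves (a).

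\textbf{Part (b).} When $V$ is real valued, every $w \in \XX(V)$ is real valued, since pointwise limits of real sequences are real. Hence $\overline{E_n}=E_n$ on $\XX(V)$. Moreover $a$ is real here, as in all applications in the paper where $a\in\RR$; if $a$ were complex, one would use $F_{n,a}$ and $1$ to generate $E_n$ and proceed the same way.

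In this case $\AAA$ is the algebra generated by $1$ and the $F_{n,a}$ alone. So the span of the products $\prod_k F_{n_k,a}$, together with $1$, equals $\AAA$, and the same argument as in (a) applies.

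The only mildly delicate points are two. The first is the closedness of $A$ under uniform limits, which requires uniformity in $l$; this is handled by the $\varepsilon/3$ argument above. The second is the observation that the algebra generated by $1$ and the $F_{n,a}$ coincides with the algebra generated by $1$ and the $E_n$, which holds because $E_n = F_{n,a}+a\cdot 1$.
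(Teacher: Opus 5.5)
Your proposal is correct and follows the paper's route. Like the paper, you combine the Oxtoby-type proposition with the Stone--Weierstrass density of $\AAA$, the spanning of $\AAA$ by $1$ and the monomials in $F_{n,a}$ (and $\overline{F_{n,a}}$), and the relations $(fg)_V=f_Vg_V$, $(\overline{f})_V=\overline{f_V}$, $(F_{n,a})_V(j)=V(j+n)-a$; you also correctly read the displayed function as $\prod_k (V(j+n_k)-a)$ with empty products allowed.

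Two of your detours are unnecessary. The closedness of $A$ is never used, since linearity plus $\AAA\subseteq A$ already gives density, and the converse comes from the ``moreover'' clause. In (b) you need not take $a$ real, because $E_n=F_{n,a}+a\cdot 1$ handles every $a\in\CC$.
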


%\subsection{Continuity of Eigenfunctions}

Next we give a characterization for the topological eigenvalues in the case $\XX(V)$ is uniquely ergodic. Stronger versions of these results can be found in literature. We have the following lemma, which is essentially \cite[Proposition~3.5]{NS}.

\begin{lemma}\label{lem:eigenfunct}
Let $V \in \ell^\infty(\ZZ)$, $f \in C(\XX(V))$, and $\beta \in \TT$. Assume that $\ZZ \ni j \mapsto f(T^j V) \cdot e^{-2 \pi i  \beta \cdot j}$ is amenable. Then, $\ZZ \ni j \mapsto f(T^j w) \cdot e^{-2 \pi i  \beta \cdot j}$ is amenable for all $w \in \XX(V)$, and the mean
\[
F(w):= \lim_{m \to \infty} \frac{1}{2m+1} \sum_{j=-m}^m f(T^j w) \cdot e^{-2 \pi i  \beta \cdot j}
\]
is continuous and satisfies $F(Tw)=e^{2 \pi i  \beta } F(w)$ for all $w \in \XX(V)$. In particular, if $F(V)\neq 0$, the function $F$ is a continuous eigenfunction with eigenvalue $\beta$.
\end{lemma}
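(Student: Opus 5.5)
The plan is to reduce everything to a uniform approximation argument. The key tool is unique ergodicity in the form coming from the proposition just before Theorem~\ref{thm:uechar}: since $V$ is assumed to have the amenability property, I would first check whether the lemma implicitly needs $\XX(V)$ uniquely ergodic. As stated it does not. So instead I would work directly with the hull.

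Fix $w \in \XX(V)$ and write $g(j) := f(T^jV)e^{-2\pi i\beta j}$, which is amenable with mean $F(V)$. Since $w$ is a pointwise limit of translates $T^{t_p}V$ and $f$ is continuous on the compact metric space $\XX(V)$, the functions $j \mapsto f(T^{j+t_p}V)$ converge to $f(T^jw)$ pointwise in $j$. On finite windows this convergence is uniform, though not uniform in $j$. The crucial input is that amenability gives uniformity in the center $l$. Concretely, for $\varepsilon>0$ choose $m_0$ with
\[
\Big| \frac{1}{2m+1}\sum_{j=l-m}^{l+m} g(j) - M(g)\Big| < \varepsilon \quad \text{for all } m \ge m_0 \text{ and all } l.
\]
Apply this with $l = t_p + l'$ and multiply through by the unimodular factor $e^{2\pi i\beta t_p}$. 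For fixed $m, l'$, let $p \to \infty$; only finitely many indices occur, so we may pass to the limit. This gives
\[
\Big| \frac{1}{2m+1}\sum_{j=l'-m}^{l'+m} f(T^jw)e^{-2\pi i\beta j} - c_w\Big| \le \varepsilon,
\]
where $c_w$ is any limit point of $e^{2\pi i\beta t_p}M(g)$. Thus the averages for $w$ are Cauchy uniformly in $l'$, so $j \mapsto f(T^jw)e^{-2\pi i\beta j}$ is amenable. Its mean $F(w)$ lies within $\varepsilon$ of all the window averages for $m\ge m_0$, uniformly in $w$. This uniform rate, with the same $m_0$ for every $w$, is the main point.

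Continuity of $F$ then follows from a standard $\varepsilon/3$ argument. Each finite average $A_m(w) := \frac{1}{2m+1}\sum_{|j|\le m} f(T^jw)e^{-2\pi i\beta j}$ is continuous in $w$, because $T$ is a homeomorphism and $f$ is continuous. We have just shown $\sup_w |A_m(w)-F(w)| \le \varepsilon$ for $m \ge m_0$, so $F$ is a uniform limit of continuous functions. The equivariance is a shift of summation index:
\[
A_m(Tw) = e^{2\pi i\beta}\,\frac{1}{2m+1}\sum_{j=1-m}^{1+m} f(T^{j}w)e^{-2\pi i\beta j}.
\]
The right-hand side converges to $e^{2\pi i\beta}F(w)$ by the uniformity in the window center, so $F(Tw)=e^{2\pi i\beta}F(w)$.

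Finally, if $F(V)\ne0$, then $F$ is a continuous function satisfying the eigen-relation. It is not identically zero, and $|F|$ is $T$-invariant, hence constant on the orbit closure $\XX(V)$, so $F$ is a continuous eigenfunction. The step I expect to need the most care is the first one: transferring amenability from $V$ to an arbitrary $w$ with a rate independent of $w$. This requires interchanging the limit $p\to\infty$ with the uniform-in-$l$ estimate, and that interchange works only because each window is finite.
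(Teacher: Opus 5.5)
Your proof is correct. The paper gives no argument for this lemma; it only says it is essentially \cite[Proposition~3.5]{NS}, so there is no in-paper proof to compare with. What you wrote is the standard argument: the continuous finite averages $A_m(w)=\frac{1}{2m+1}\sum_{|j|\le m} f(T^jw)e^{-2\pi i\beta j}$ converge uniformly on all of $\XX(V)$, at the rate inherited from the uniform-in-$l$ convergence for $V$.

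One small clarification in your transfer step: fix the subsequence along which $e^{2\pi i\beta t_p}$ converges before choosing $\varepsilon$. Then $c_w$ does not depend on $\varepsilon$, and your estimate shows directly that every window average for $w$ converges to $c_w$, so $F(w)=c_w$.

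The approximating sequences can also be avoided entirely. Since $A_m(T^lV)=e^{2\pi i\beta l}\cdot\frac{1}{2m+1}\sum_{j=l-m}^{l+m} f(T^jV)e^{-2\pi i\beta j}$, amenability of $j\mapsto f(T^jV)e^{-2\pi i\beta j}$ says exactly that $(A_m)$ is uniformly Cauchy on the dense orbit $O(V)$. Because each $A_m-A_{m'}$ is continuous, $(A_m)$ is then uniformly Cauchy on $\XX(V)$, so the uniform limit $F$ is continuous. The trivial bound $|A_m(Tw)-e^{2\pi i\beta}A_m(w)|\le 2\|f\|_\infty/(2m+1)$ gives the equivariance. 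Finally, the identity $\frac{1}{2m+1}\sum_{j=l-m}^{l+m} f(T^jw)e^{-2\pi i\beta j}=e^{-2\pi i\beta l}A_m(T^lw)$ gives amenability at every $w$, with mean $F(w)$.

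This is the same idea as yours, phrased through density of the orbit rather than through sequences $T^{t_p}V\to w$.
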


\begin{remark}
The mean of $ f(T^j V) \cdot e^{-2 \pi i  \beta \cdot j}$ is usually called the \emph{Fourier--Bohr coefficient of $f_{V}$ at $\beta$.}
\end{remark}

We conclude by proving the following result, which can be applied to the models we study in this paper.
\begin{proposition}\label{propC7}
Assume that $V \in \ell^\infty(\ZZ)$ is uniquely ergodic and let $a \in \CC$, $\beta \in \TT$. Assume that for all not necessarily distinct $n_1, \ldots, n_r, m_1, \ldots, m_s \in \ZZ$ the function
  \[
  \ZZ \ni j \to \left( \prod_{k=1}^r V(j+n_k) -a \right) \cdot \overline{\left( \prod_{l=1}^s V(j+m_l)-a \right)} \cdot e^{-2 \pi i  \beta \cdot j}
  \]
  is amenable.  Then, the following are equivalent:
\begin{itemize}
  \item[(i)] $\beta$ is an eigenvalue.
  \item[(ii)]$\beta$ is a topological eigenvalue.
  \item[(iii)] There exist not necessarily distinct $n_1, \ldots, n_r, m_1, \ldots, m_s \in \ZZ$ such that
  \[
  \lim_{m \to \infty} \frac{1}{2m+1} \sum_{j=-m}^m \left( \prod_{k=1}^r V(j+n_k) -a \right) \cdot \overline{\left( \prod_{l=1}^s V(j+m_l)-a \right)}\cdot e^{-2 \pi i  \beta \cdot j} \neq 0 .
  \]
\end{itemize}
\end{proposition}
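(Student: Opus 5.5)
We prove the cycle (i) $\Rightarrow$ (iii) $\Rightarrow$ (ii) $\Rightarrow$ (i). The implication (ii) $\Rightarrow$ (i) is trivial, since a continuous eigenfunction is in particular an $L^2(\XX(V),\mm)$ eigenfunction; nonzero continuous functions are nonzero in $L^2$ because unique ergodicity makes $\mm$ have full support on the orbit closure, as the orbit of $V$ is dense.

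For (iii) $\Rightarrow$ (ii), we take $f := \left( \prod_{k} E_{n_k} - a \right)\cdot \overline{\left( \prod_{l} E_{m_l}-a\right)} \in C(\XX(V))$. Then $f(T^jV)$ is exactly the function in (iii), since $E_n(T^jV)=V(j+n)$. By hypothesis $j\mapsto f(T^jV)e^{-2\pi i\beta j}$ is amenable, so Lemma~\ref{lem:eigenfunct} applies. It produces a continuous $F$ with $F(Tw)=e^{2\pi i\beta}F(w)$, and (iii) says precisely that $F(V)\neq 0$. Hence $F$ is a continuous eigenfunction and $\beta$ is a topological eigenvalue.

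The main work is (i) $\Rightarrow$ (iii), which we prove by contraposition. Suppose every mean in (iii) vanishes; we show $\beta$ is not an eigenvalue. By linearity, together with the fact that functions of the given product form (and $1$, obtained by taking $r=s=0$) span $\AAA$, we get
\[
M\big(f(T^jV)e^{-2\pi i \beta j}\big)=0 \qquad \text{for all } f\in\AAA.
\]
Here we use that each $E_n$ and $\overline{E_n}$ is a combination of $(E_n-a)$ and $1$. We also use the hypothesis with the constant $a$ playing its role: products like $(\prod E_{n_k}-a)$ generate, together with constants, the whole algebra. If needed, we note that a product of $E$'s equals $(\prod E - a) + a\cdot 1$, so the constant-term mean (the case with $e^{-2\pi i \beta j}$ alone) must be handled separately. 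For $\beta\neq 0$ it is $0$ anyway, and for $\beta=0$ the constant eigenfunction makes (iii) hold with the trivial choice.

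Next, if $g\in L^2(\mm)$ satisfies $Tg=e^{2\pi i\beta}g$, then for $f\in C(\XX(V))$ the von Neumann ergodic theorem in $L^2$ gives that the averages $\frac1{2m+1}\sum_j e^{-2\pi i\beta j}\, f\circ T^j$ converge in $L^2$ to $P_\beta f$, the projection onto the $\beta$-eigenspace. Since $\langle P_\beta f, g\rangle = \langle f,g\rangle$, it suffices to show that these averages tend to $0$. Unique ergodicity gives uniform convergence: by Lemma~\ref{lem:eigenfunct}, the averages converge at every point $w$ to $F_f(w)$. Since $F_f$ is continuous and satisfies $F_f(T^jV)=e^{2\pi i\beta j}F_f(V)=0$, it vanishes on the dense orbit and hence identically. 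Bounded pointwise convergence then yields $L^2$ convergence, so $P_\beta f=0$ for all $f\in\AAA$. By density of $\AAA$ in $C(\XX(V))$, and hence in $L^2$, together with continuity of $P_\beta$, we conclude $P_\beta=0$. So no eigenfunction exists, which contradicts (i).

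The delicate step is the pointwise-to-$L^2$ identification of the Fourier--Bohr limit with $P_\beta f$, together with the handling of the constant $a$ when reducing from $\AAA$ to the product forms. Both are routine once Lemma~\ref{lem:eigenfunct} is in hand.
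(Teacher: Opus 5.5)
Your argument is correct in substance. (iii)$\Rightarrow$(ii) is exactly the paper's argument, but for (i)$\Rightarrow$(iii) you take a different route.

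\textbf{The paper's route.} It argues directly. For $\beta\neq0$ it chooses a product-form $f\in\AAA$ with $\langle f,f_\beta\rangle\neq0$, using density of $\AAA$ in $L^2(\mm)$ and $\langle 1,f_\beta\rangle=0$. It then applies Birkhoff's pointwise ergodic theorem to $f\cdot\overline{f_\beta}$, which shows that the Fourier--Bohr coefficient $F(w)$ of $f$ at $\beta$ is nonzero for $\mm$-a.e.\ $w$. Finally it transfers this to $w=V$ via Lemma~\ref{lem:eigenfunct}, using that the continuous function $|F|$ is constant on the dense orbit.

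\textbf{Your route.} You argue by contraposition. If all coefficients vanish at $V$, then $F_f\equiv0$ by continuity and density of $O(V)$. Dominated convergence upgrades the everywhere-pointwise convergence of the twisted averages to $L^2$ convergence. The von Neumann mean ergodic theorem identifies that limit with $P_\beta f$, so $P_\beta$ kills the dense set $\AAA$ and hence vanishes.

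\textbf{Comparison.} The ingredients are the same, with the mean ergodic theorem replacing the pointwise one. Your version gives the identity $P_\beta f=F_f$ $\mm$-a.e.\ for $f\in\AAA$, which explains conceptually why every $L^2$ eigenvalue is topological. It also never manipulates $f_\beta$ pointwise. The Birkhoff limit of $f\overline{f_\beta}$ is really $\overline{f_\beta(w)}\,F(w)$; the paper's display drops this factor, harmlessly. The paper's version is a bit shorter and names the witnesses directly: any product-form $f$ with $\langle f,f_\beta\rangle\neq0$.

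\textbf{A false justification in (ii)$\Rightarrow$(i).} The step is true, but your reason for it is not. You say unique ergodicity gives $\mm$ full support because the orbit of $V$ is dense. Uniquely ergodic does not imply minimal. Take $V(0)=1$ and $V(n)=0$ for $n\neq0$. The hull is $O(V)\cup\{0\}$, and the unique invariant measure is the point mass at the zero sequence. Then $w\mapsto w(0)$ is a nonzero continuous function that is $0$ in $L^2(\mm)$. The correct reason is the trick you use yourself later. For a continuous eigenfunction $F$, $|F|$ is continuous and $T$-invariant, hence constant on $\overline{O(V)}=\XX(V)$. So $F\not\equiv0$ vanishes nowhere and is nonzero in $L^2(\mm)$.

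\textbf{The paragraph on the constant $a$.} It mixes up generating $\AAA$ as an algebra with spanning it linearly. The means are only linear, so what you need is that $\AAA$ is the linear span of the products $\prod_k(E_{n_k}-a)\cdot\overline{\prod_l(E_{m_l}-a)}$ with $r,s\geq0$. This follows by expanding $E_n=(E_n-a)+a$, and it is the reading the paper's proof uses. With the literal form $(\prod_k E_{n_k})-a$ and $a=1$, every term with $s=0$ vanishes identically, and this spanning argument breaks down.
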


\begin{proof}
{\bf (iii) $\Longrightarrow$ (ii)} follows from Lemma~\ref{lem:eigenfunct}.

\smallskip \noindent {\bf (ii) $\Longrightarrow$ (i)} is obvious.

\smallskip \noindent {\bf (i) $\Longrightarrow$ (iii)}: Let $f_\beta$ be an eigenfunction. If $\beta = 0$, there is nothing to prove, so we can assume that $\beta \neq 0$. Since $\AAA$ is dense in $C(\XX(V))$ and hence in $L^2(\XX(V), \mm)$, where $\mm$ is the unique ergodic measure on $\XX(V)$, there exists some $f \in \AAA$ such that $\langle f, f_\beta \rangle \neq 0$.
Since $\beta \neq 0$ we have $\langle 1_{\XX(V)}, f_\beta \rangle \,=\, 0$. Next, since $\AAA$ is spanned by
\[
\{1_{\XX(V)}\} \cup \{ F_{n_1,a} \cdot F_{n_2,a} \cdot \ldots F_{n_r,a} \cdot \overline{F_{m_1,a} \cdot F_{m_2,a} \cdot \ldots F_{m_s,a}} : n_1, \ldots n_r, m_1, \ldots, m_s \in \ZZ \},
\]
there exist not necessarily distinct $n_1, \ldots, n_r, m_1, \ldots, m_s \in \ZZ$ such that
\[
\langle F_{n_1,a} \cdot F_{n_2,a} \cdot \ldots F_{n_r,a} \cdot \overline{F_{m_1,a} \cdot F_{m_2,a} \cdot \ldots F_{m_s,a}} , f_\beta \rangle \neq 0.
\]
Therefore, by the ergodic theorem, for $\mm$-almost all $w \in \XX(V)$, we have
\begin{align*}
0 &\neq \langle F_{n_1,a} \cdot F_{n_2,a} \cdot \ldots F_{n_r,a} \cdot \overline{F_{m_1,a} \cdot F_{m_2,a} \cdot \ldots F_{m_s,a}} , f_\beta \rangle \\
&= \lim_{m \to \infty}  \frac{1}{2m+1} \sum_{j=-m}^m \left( \prod_{k=1}^r w(j+n_k) -a\right) \cdot \overline{\left( \prod_{l=1}^s w(j+m_l)-a \right)} \cdot e^{-2 \pi i  \beta \cdot j} .
\end{align*}
Finally, by Lemma~\ref{lem:eigenfunct} the limit
\[
F(w):=\lim_{m \to \infty} \frac{1}{2m+1} \sum_{j=-m}^m \left( \prod_{k=1}^r w(j+n_k)-a \right) \cdot \overline{\left( \prod_{l=1}^s w(j+m_l)-a \right)}\cdot e^{-2 \pi i  \beta \cdot j}
\]
exists for all $w \in \XX(V)$, is continuous, and satisfies $F(Tw)=e^{2 \pi i  \beta } F(w)$ for all $w \in \XX(V)$.
This implies that $|F|$ is a continuous function which is constant on $O(V)$ and hence on $\XX(V)$. Since $F(w) \neq 0$ for $\mm$-almost all $w \in \XX(V)$ we get
$F(w) \neq 0$ for all $w \in \XX(V)$. In particular, we have
\[
0 \neq F(V) = \lim_{m \to \infty} \frac{1}{2m+1} \sum_{j=-m}^m \left( \prod_{k=1}^r V(j+n_k) -a\right) \cdot \overline{\left( \prod_{l=1}^s V(j+m_l) -a\right)}\cdot e^{-2 \pi i  \beta \cdot j}.
\]
This completes the proof.
\end{proof}

\end{document}